\newcolumntype{d}[1]{D{.}{.}{#1}}
\newcommand{\limp}[0]{\ensuremath{\rightarrow}}
\newcommand{\mimp}[0]{\ensuremath{{\;-\!*\;}}}
\newcommand{\munit}{\ensuremath{\top^*}}
\newcommand{\simp}[0]{\ensuremath{\triangleright}}
\newcommand{\mand}[0]{\ensuremath{*}}
\newcommand{\lsbbi}[0]{\ensuremath{\mathrm{LS_{BBI}}}}
\newcommand{\fvlsbbi}[0]{\ensuremath{\mathrm{FVLS_{BBI}}}}
\newcommand{\psl}[0]{\mbox{PASL}}
\newcommand{\lspslh}[0]{\ensuremath{\mathrm{LS_{PASL}}}}
\newcommand{\lsg}[0]{\ensuremath{\mathrm{LS_G}}}
\newcommand{\ls}[0]{\mbox{BBI}^-}
\newcommand{\mymodel}[4]{\ensuremath{(#1, #2, #3, #4)}}
\newcommand{\mylmodel}[5]{\ensuremath{(#1, #2, #3, #4, #5)}}
\newcommand{\myseq}[3]{\ensuremath{#1; #2 \vdash #3}}
\newcommand{\pfun}{\rightharpoonup}
\def\Fcal{\mathcal{F}}
\def\Gcal{\mathcal{G}}
\def\Mcal{\mathcal{M}}
\def\Ncal{\mathcal{N}}
\def\Lcal{\mathcal{L}}
\def\Pcal{\mathcal{P}}
\def\Scal{\mathcal{S}}
\def\Rcal{\mathcal{R}}
\def\Frcal{\mathcal{F}r}
\def\val{\nu}
\def\EMbb{\mathbb{E}\mathbb{M}}
\def\GSbb{\mathbb{G}\mathbb{S}}
\def\LVar{\mathit{LVar}}
\newcommand{\exfml}[1]{Ex#1}
\begin{document}
\title{Modular Labelled Sequent Calculi for Abstract Separation Logics}  

\author{Zh\'e H\'ou}
\affiliation{
  \institution{Griffith University}
  \department{Institute for Integrated and Intelligent Systems}
  \city{Brisbane}
  \state{QLD}
  \postcode{4111}
  \country{Australia}
}

\author{Ranald Clouston}
\affiliation{
  \institution{Aarhus University}
  \department{Department of Computer Science}
  \city{Aarhus}
  \postcode{8200}
  \country{Denmark}
}

\author{Rajeev Gor\'e}
\affiliation{
  \institution{The Australian National University}
  \department{Research School of Computer Science}
  \city{Canberra}
  \state{ACT}
  \postcode{2601}
  \country{Australia}
}

\author{Alwen Tiu}
\affiliation{
  \institution{The Australian National University}
  \department{Research School of Computer Science}
  \city{Canebrra}
  \postcode{2601}
  \country{Australia}
}


\begin{abstract}
  Abstract separation logics are a family of extensions of Hoare logic
  for reasoning about programs that manipulate resources such as
  memory locations.  These logics are
  ``abstract'' because they are independent of any particular concrete
  resource model. Their assertion languages, called propositional
  abstract separation logics (PASLs), extend the logic of (Boolean)
  Bunched Implications (BBI) in various ways. In particular, these
  logics contain the connectives $\mand$ and $\mimp$, denoting the
  composition and extension of resources respectively.

  This added expressive power comes at a price since the
  resulting logics are all undecidable. Given their wide
  applicability, even a semi-decision procedure for these logics is
  desirable.  Although several PASLs and their relationships
  with BBI are discussed in the literature, the proof theory of, and
  automated reasoning for, these logics were open problems solved by
  the conference version of this paper, which developed a modular
  proof theory
  for various PASLs using cut-free labelled sequent calculi.
  This paper non-trivially improves upon this previous work
  by giving a general framework of calculi on which any new axiom in
  the logic satisfying a certain form corresponds to an inference rule
  in our framework, and the completeness proof is generalised to
  consider such axioms.

  Our base calculus handles Calcagno et al.'s original logic of
  separation algebras by adding sound rules for partial-determinism
  and cancellativity, while preserving cut-elimination. We then show
  that many important properties in separation logic, such as
  indivisible unit, disjointness, splittability, and cross-split, can
  be expressed in our general axiom form. Thus our framework
  offers inference rules and completeness for these properties for
  free. Finally, we show how our calculi reduce to
  calculi with global label substitutions,
  enabling more efficient implementation.
\end{abstract}

%
%
\begin{CCSXML}
  <ccs2012>
  <concept>
  <concept_id>10003752.10003790.10011742</concept_id>
  <concept_desc>Theory of computation~Separation logic</concept_desc>
  <concept_significance>500</concept_significance>
  </concept>
  </ccs2012>
\end{CCSXML}
  
  \ccsdesc[500]{Theory of computation~Separation logic}

%
%

\keywords{Labelled sequent calculus, automated reasoning, abstract separation logics, counter-model construction, bunched implications}




\maketitle


\section{Introduction}
\label{sec:intro}

Reynolds's Separation logic (SL)~\cite{reynolds2002} is an extension of 
Hoare logic for reasoning about programs that explicitly mutate memory. Its
assertion logic, also called separation logic, extends the usual 
(additive) connectives 
for conjunction $\land$, disjunction $\lor$,
implication $\to$, and the (additive) verum constant $\top$,
with the multiplicative connectives
\emph{separating conjunction} $\mand$, its unit $\munit$ (denoted by $emp$ in some literature), and
\emph{separating implication} $\mimp$, also called \emph{magic wand}, from the
logic of Bunched Implications (BI)~\cite{OHearnPym1999}.
Moreover, the assertion language introduces 
the \emph{points-to} predicate $E \mapsto E'$
on expressions,
along with the usual quantifiers and predicates of first-order logic with equality
and arithmetic.
The additive connectives
may be either intuitionistic, as for BI, or classical, as for
the logic of \emph{Boolean} Bunched Implications (BBI). Classical
additives are more expressive as they support reasoning about
non-monotonic commands such as memory de-allocation, and assertions
such as ``the heap is empty''~\cite{IshtiaqOHearn01}. In this paper we
consider classical additives only.

The concrete memory model for SL is given in terms of heaps,
where a \emph{heap} is a finite partial function from addresses to
values. A heap satisfies $P\mand Q$ iff it can be partitioned into
heaps satisfying $P$ and $Q$ respectively; it satisfies $\munit$ iff
it is empty; it satisfies $P\mimp Q$ iff any extension with a heap
that satisfies $P$ must then satisfy $Q$; and it satisfies $E\mapsto
E'$ iff it is a singleton map sending the address specified by the
expression $E$ to the value specified by the expression $E'$.
While the $\mapsto$ predicate refers to the \emph{content} of heaps, the BI
connectives refer only to their \emph{structure}.
Some basic spatial properties of heaps include the following:
\begin{description}
\item[\underline{Empty heap}] There is a unique empty heap $\epsilon$;
\item[\underline{Identity}] Combining heap $h$ with the empty heap $\epsilon$
  gives the original heap $h$;
\item[\underline{Commutativity}] Combining heap $h_1$ with heap $h_2$ is the
  same as combining $h_2$ with $h_1$;
\item[\underline{Associativity}] Combining heap $h_1$ with heap $h_2$ and then
  combining the result with heap $h_3$ is the same as combining heap
  $h_1$ with the combination of heaps $h_2$ and $h_3$.
\end{description}
These conditions define a \emph{non-deterministic monoid}: giving
algebraic models for BBI~\cite{GalmicheLarcheyWendling2006}.

The idea of separation logic has proved fruitful for a range of memory (or, more
generally, resource) models, some quite different from the original heap model.
In this paper we will present examples drawn
from~\cite{Boyland:Checking,parkinson2005,Bornat:Permission,calcagno2007,dockins2009,Villard:Proving,Jensen:High},
but this list is far from exhaustive.
Each such model has its own notion of separation and sharing of resources, and
hence may formally give rise to a new logic with respect to the BI connectives, let alone
any special-purpose predicates which might be added to the logic. As new
variations of separation logic are introduced, their relation to prior logics is seldom
developed formally, and so new
metatheory and tool support must be substantially reconstructed for each case.
This has led to a subgenre of papers highlighting the need for organisation and
generalisation across these
logics~\cite{Biering:BI,calcagno2007,Parkinson:Next,Jensen:Techniques}.

In this paper we take as a starting point \emph{Abstract Separation
Logic}~\cite{calcagno2007}, which is intended to generalise the logics of many
concrete models. In particular
we set quantifiers aside to work with Propositional Abstract Separation Logic ($\psl$).
This logic is defined via the
abstract semantics of partial cancellative monoids, or
\emph{separation algebras}, which are non-deterministic
monoids restricted by: 
\begin{description}
\item[\underline{Partial-determinism}] The combination of heap $h_1$ with heap
  $h_2$ is either undefined, or a unique heap;
\item[\underline{Cancellativity}] If combining 
  $h_1$ and $h_2$ gives $h_3$ and combining heap $h_1$ and  
  $h_4$ also gives $h_3$, then $h_2 = h_4$.
\end{description}
Semantics in this style are reminiscent of the ternary algebraic semantics often used in
connection with substructural logics~\cite{Anderson:Entailment}, an observation we
exploit in this paper.
Separation algebras allow
interpretation of $\mand$, $\munit$ and $\mimp$, although the latter
is not considered by~\cite{calcagno2007}.
The points-to ($\mapsto$) predicate is not
a first class citizen of $\psl$; it may be introduced as a predicate only if an
appropriate concrete separation algebra is fixed.
$\psl$ is appropriate to
reasoning about the \emph{structure} of memory, but not its \emph{content}.

Precondition strengthening and postcondition weakening in Hoare-style logics require
reasoning in the assertion logic, 
but proof search and structural proof theory for
$\psl$ have received little attention until recently. It
is known that the added expressive power of the multiplicative
connectives comes at a price, yielding a logic that is in general
undecidable~\cite{brotherston2010,wendling2010}. Given the wide
applicability of abstract separation logic, even a semi-decision procedure for $\psl$
would assist program verification.

However the definition of separation algebras presented by~\cite{calcagno2007} is not
necessarily canonical.
Most notably~\cite{dockins2009} suggested the
following useful additional properties for spatial reasoning%
  \footnote{Dockins et al.~\cite{dockins2009} also suggested generalising
  separation algebras to have a \emph{set} of units; it is an easy
  corollary of~\cite[Lemma 3.11]{brotherston2013} that single-unit and
  multiple-unit separation algebras satisfy the same set of
  formulae, and we do not pursue this generalisation in this paper.}%
:
\begin{description}
\item[\underline{Indivisible unit}] If combining heap $h_1$ with
  heap $h_2$ gives the empty heap, then $h_1$ and $h_2$ must themselves
  be the empty heap;
\item[\underline{Disjointness}] If the result of combining heap $h_1$ with itself
  is defined, then $h_1$ must be the empty heap;
\item[\underline{Splittability}] Every non-empty heap $h_0$ can be split into two
  non-empty heaps $h_1$ and $h_2$;
\item[\underline{Cross-split}] If a heap can be split in two different ways, then
  there should be heaps that constitute the intersections of these
  splittings.
\end{description}
Conversely, following~\cite{Gotsman:Precision} some authors have further
generalised separation algebras by dropping cancellativity; we will present concrete
examples from~\cite{JensenBirkedal2012,Vafeiadis:Relaxed}. These extensions and
restrictions of $\psl$ point to a need to present \emph{modular} proof theory and proof
search techniques which allow the axiomatic properties of the abstract models to
be adjusted according to the needs of a particular application.

\

This paper is an extended journal version of the conference paper~\cite{hou2013b}.
In that paper, we solved the open problems of presenting sound and complete structural
proof theory, and gave a semi-decision procedure, along with an efficient
implementation, for Propositional Abstract Separation Logic.
We further
showed that our methods could encompass the axiomatic extensions
of~\cite{dockins2009}, and conversely that cancellativity and partial-determinism could
be dropped, and so our proof theory was \emph{modular} in the sense that it could
be used for many neighbouring logics of $\psl$, including BBI. In this journal paper we
make a major extension to the modularity of our approach by introducing a technique to
synthesise proof rules from any spatial axiom in a certain format, general enough to
encompass all axioms of~\cite{dockins2009}. The remainder of this introduction
sketches the techniques used in this paper.

Because of the similarity between non-deterministic monoids, which provide semantics
for BBI~\cite{GalmicheLarcheyWendling2006}, and the separation algebras which
provide semantics for $\psl$, it is natural to investigate whether techniques used
successfully for BBI can be extended to $\psl$. This paper answers this question in
the affirmative by extending the work on BBI of~\cite{hou2013,Hou:Labelled15}. In
these papers, a sound and complete proof theory was provided for BBI in the style of
\emph{labelled sequent calculus}~\cite{negri2001}, a proof style for modal and
substructural logics with Kripke-style frame semantics in which statements about
the elements of the frame are explicitly included in the context of sequents. This allows
relational properties of the semantics to be explicitly represented as proof rules, which
allows labelled sequent calculi to encompass a wide variety of logics in a modular style
-- the addition or subtraction of semantic properties corresponds exactly to the addition
or subtraction of the corresponding proof rules.

This paper builds on~\cite{Hou:Labelled15} by
presenting a labelled sequent calculus for a sublogic $\ls$ of BBI, which is of
no intrinsic interest that we are aware of, but which does include all BI connectives. We
then show that it can be extended to a labelled sequent calculus for BBI, for $\psl$, and
for various neighbouring logics, by extending it with instances of a \emph{general
structural rule} synthesised from axioms on the semantics. This is possible so
long as the axiom is in a certain format, which is sufficiently general to encompass, for
example, the spatial properties identified by~\cite{dockins2009}. We call an axiom in
this format a \emph{frame axiom}.
We then show that our sequent
calculi can be used for effective backward proof search, thereby
providing semi-decision procedures for a variety of logics.  Our
implementation, Separata%
\footnote{Available at \url{http://users.cecs.anu.edu.au/~zhehou}.},
is the first automated theorem prover for $\psl$ and many of its
neighbours.  Separata differs from our previous implementation
$\fvlsbbi$ for BBI in two aspects: first, Separata can handle multiple
abstract separation logics, including BBI, whereas $\fvlsbbi$ is
designed for BBI only; second, Separata is a semi-decision procedure, whereas
$\fvlsbbi$ adopts a heuristic proof search which is incomplete.

In this work, we are interested in proof search procedures that are complete. In this setting,
sequent calculi are amenable to backward proof-search only if the cut
rule is redundant. This result follows much as for the calculus
$LS_{BBI}$ of~\cite{Hou:Labelled15}. However completeness does not follow so
easily; in~\cite{Hou:Labelled15} the
completeness of $LS_{BBI}$ was shown by mimicking derivations in the Hilbert
axiomatisation of BBI. This avenue is no longer viable for $\psl$ because
partial-determinism and cancellativity are not axiomatisable in
BBI~\cite{brotherston2013}. That is, there can be no Hilbert calculus 
in the language of BBI which is sound and
complete with respect to separation algebras.
We instead prove the cut-free completeness of our labelled sequent
calculi via a \emph{counter-model construction} procedure which shows
that if a formula is not cut-free derivable in our sequent calculus
then it is falsifiable in some PASL-model.

The calculi of this paper differ in style from $LS_{BBI}$ because, in~\cite{Hou:Labelled15},
explicit \emph{substitutions} are used in the proof rules, whereas in this paper these
are replaced by explicit \emph{equality} assertions. These are easier for us to manage
with respect to proving the modular completeness of our family of calculi, but the
presentation with substitutions is more amenable to implementation. We hence show
how equivalent new calculi can be defined, with substitutions replacing equalities,
and show how this allows a semi-decision procedure to be implemented.
Experimental results show that our prover is usually faster
than other provers for BBI when tested against the same benchmarks of
BBI formulae.

This paper improves upon all aspects of the presentation of results from its conference
predecessor~\cite{hou2013b}, partly because of lesser limitations on space, but we
here briefly summarise the more important differences between this paper and the
earlier work:
\begin{itemize}
\item A new modular framework of calculi based on frame axioms and
  synthesised structural rules. The completeness of calculi in this
  framework can be obtained in one proof. In the previous work, each
  new calculus required a new proof;
\item A new completeness proof by counter-model construction for a
  framework of calculi. This proof includes treatments for
  splittability and cross-split, which are not included in the
  previous work;
\item A translation from the current calculi to previous calculi with
  global label substitutions;
\item More comprehensive experiments with testing of randomly
  generated formulae;
  \item Many more examples of concrete separation algebras and their applications;
\item Example derivations of various formulae; and a
\item Discussion of applications of this work. 
\end{itemize}

The remainder of this paper is structured as
follows. Section~\ref{sec:bbi_pasl} introduces Propositional Abstract
Separation Logic via its separation algebra semantics, gives a number
of concrete examples of these semantics, and defines the labelled
sequent calculus for $\psl$. Fundamental results such as soundness and
cut-elimination are also
proved. Section~\ref{sec:counter_model_constr} proves the completeness
of our calculi framework by counter-model
construction. Section~\ref{sec:extension_psl} shows how our framework
can encompass various neighbouring logics of $\psl$, based on models
with different spatial properties, and discusses how these properties
manifest in examples.  Section~\ref{sec:eq_subs} shows how to translate our calculi into a format that is more
amenable to implementation, and presents some example derivations.
Section~\ref{sec:experiment} presents the implementation and experiments.  Section~\ref{sec:applications} discusses applications and extensions of the calculi in this work. Finally, Section~\ref{sec:rel_work}
discusses related work.


\section{A labelled sequent calculus for $\psl$}
\label{sec:bbi_pasl}

In this section we define the \emph{separation algebra} semantics of
Calcagno et al.~\cite{calcagno2007} for Propositional Abstract Separation Logic
($\psl$), present concrete examples of these semantics, and give the labelled
sequent calculus $\lspslh$ for this logic.
Soundness and cut-elimination are then demonstrated for $\lspslh$.


\subsection{Propositional abstract separation logic}
\label{subsec:pasl}
The formulae of $\psl$ are defined inductively as follows, where $p$ ranges over
some set $Var$ of propositional variables:
\begin{align*}
A ::= & \ p\mid \top\mid \bot\mid\lnot A\mid A\lor A\mid A\land A\mid A\limp A\mid \top^*\mid A\mand A\mid A\mimp A
\end{align*}

$\psl$-formulae will be interpreted via the following semantics:

\begin{definition}\label{dfn:sepalg}
A \emph{separation algebra}, or partial cancellative commutative monoid, is a
triple $(H,\circ,\epsilon)$ where $H$ is a non-empty set, $\circ$ is a partial
binary function $H\times H\pfun H$ written infix, and $\epsilon\in H$,
satisfying the following conditions, where `$=$' is interpreted as
``either, both sides are undefined, or, both sides are defined and equal'':
\begin{description}
\item[identity:] $\forall h\in H.\, h\circ\epsilon = h$
\item[commutativity:] $\forall h_1,h_2\in H.\, h_1\circ h_2 = h_2\circ h_1$
\item[associativity:] $\forall h_1,h_2,h_3\in H.\, h_1\circ(h_2\circ h_3) = (h_1\circ h_2)\circ h_3$
\item[cancellativity:] $\forall h_1,h_2,h_3,h_4\in H.$ if $h_1\circ h_2 = h_3$ and $h_1\circ h_4 = h_3$ then $h_2 = h_4$
\end{description}
\end{definition}

Note that the \textit{partial-determinism} of the monoid is assumed since 
$\circ$ is a partial function: 
for any $h_1,h_2,h_3,h_4\in H$, if $h_1\circ h_2 = h_3$ and $h_1\circ h_2 = h_4$ then $h_3 = h_4$.

\begin{example}\label{Ex:Reynolds}
The paradigmatic example of a separation algebra is the set of
\emph{heaps}~\cite{reynolds2002}: finite partial functions from an infinite set of \emph{locations} to a set of \emph{values}.
Then $h_1\circ h_2=h_1\cup h_2$ if $h_1,h_2$ have disjoint domains, and is
undefined otherwise. $\epsilon$ is the empty function.
\end{example}

\begin{example}\label{ex:partcomsemi}
A \emph{partial commutative semigroup}~\cite{Bornat:Permission}, also known
as a \emph{permission algebra}%
\footnote{We prefer the former term, as many interesting examples have little to
do with permissions, and the `permissions algebra' terminology is not used consistently in the
literature; compare~\cite{calcagno2007,Vafeiadis:modular}.}~\cite{calcagno2007},
is a set $V$ equipped with an associative commutative partial binary operator
$\star$, written infix. In other words, it is a separation algebra without the
requirement to have a unit, or to be cancellative.

Fixing such a $(V,\star)$, for which we will give some example definitions
shortly, and given an infinite set of locations $Loc$, we define two finite partial
functions
$h_1,h_2$ from $Loc$ to $V$ to be \emph{compatible} iff for all $l$ in the
intersection of their domains, $h_1(l)\star h_2(l)$ is defined.
We then define the
binary operation $\circ$ on partial functions $h_1,h_2$ as undefined if they are
not compatible. Where they are compatible, $(h_1\circ h_2)(l)$ is defined as:
$$
(h_1\circ h_2)(l)=\begin{cases}
  h_1(l)\star h_2(l) & l\in dom(h_1)\cap dom(h_2) \\
  h_1(l) & l\in dom(h_1)\setminus dom(h_2) \\
  h_2(l) & l\in dom(h_2)\setminus dom(h_1) \\
  \mbox{undefined} & l\notin dom(h_1)\cup dom(h_2)
\end{cases}
$$

Setting $\epsilon$ as the empty function, many examples of concrete separation
algebras have this form, with the $\star$ operation, where defined, intuitively
corresponding to some notion of \emph{sharing} of resources. 
The following are some example definitions of such a construction:
\begin{itemize}
  \item
Heaps: let $V$ be the set of values, and $\star$ be undefined everywhere.
  \item
Fractional permissions~\cite{Boyland:Checking}: let $V$ be the set of pairs of
values (denoted by $v,w$) and (real or rational) numbers (denoted by $i,j$) in the interval $(0,1]$, and
$$
(v,i)\star(w,j)=\begin{cases}
  (v,i+j) & v=w\mbox{ and }i+j\leq 1 \\
  \mbox{undefined} & \mbox{otherwise}
\end{cases}
$$
  \item
Named permissions~\cite{parkinson2005}: given a set $\mathbb{P}$ of
\emph{permission names}, let $V$ be the set of pairs of values (denoted by $v,w$) and non-empty subsets (denoted by $P,Q$) of $\mathbb{P}$, and
$$
(v,P)\star(w,Q)=\begin{cases}
  (v,P\cup Q) & v=w\mbox{ and }P\cap Q=\emptyset \\
  \mbox{undefined} & \mbox{otherwise}
\end{cases}
$$
  \item
Counting permissions~\cite{Bornat:Permission}: let $V$ be the set of pairs of
values (denoted by $v,w$) and integers (denoted by $i,j$). Here $0$ is interpreted as \emph{total permission},
negative integers as \emph{read permissions}, and positive integers as counters
of the number of permissions taken. Let
$$
(v,i)\star(w,j)=\begin{cases}
  (v,i+j) & v=w\mbox{ and } i<0\mbox{ and }j<0 \\
  (v,i+j) & v=w\mbox{ and } i+j\geq 0\mbox{ and }(\,i<0\mbox{ or }j<0\,) \\
  \mbox{undefined} & \mbox{otherwise}
\end{cases}
$$
  \item
Binary Tree Share Model~\cite{dockins2009}: Consider the set of finite
non-empty binary trees whose leaves are labelled true ($\top$) or false ($\bot$),
modulo the smallest congruence such that
\[
\xymatrix@=0mm{
  & & & \bullet \ar[ddl] \ar[ddr] & & & & & & \bullet \ar[ddl] \ar[ddr] \\
  \top & \sim & & & & \qquad\qquad \mbox{and} \qquad\qquad & \bot & \sim \\
  & & \top & & \top & & & & \bot & & \bot
}
\]
Let $\lor$ (resp. $\land$) be the pointwise
disjunction (resp. conjunction) of representative trees of the same shape. Then
let $V$ be the pairs of values (denoted by $v,w$) and equivalence
classes of trees (denoted by $t,u$) so defined, and with
$\star$ defined as shown below, 
where $[\bot]$ is the equivalence class containing the tree whose only
node contains $\bot$:
$$
(v,t)\star(w,u)=\begin{cases}
  (v,t\lor u) & v=w\mbox{ and } t\land u=[\bot] \\
  \mbox{undefined} & \mbox{otherwise.}
\end{cases}
$$
\end{itemize}

\end{example}

Note that the construction above with partial commutative semigroups does
not in general guarantee cancellativity of the separation algebra; for this we
need to require further that $(V,\star)$ is cancellative and has no idempotent
elements (satisfying $v\star v=v$). As we will see later, some interesting concrete
models fail this requirement, and so we will generalise the results of the paper
to drop cancellativity in Section~\ref{sec:extension_psl}.

\begin{example}
Other concrete separation algebras resemble the construction
of Example~\ref{ex:partcomsemi} without fitting it precisely:
\begin{itemize}
  \item
Finite set of locations: The concrete memory model of a 32-bit
machine~\cite{Jensen:High} has as its locations the set of integers
$[0\ldots2^{32})$.
  \item
Total functions: Markings of Petri nets~\cite{Murata:Petri} without capacity
constraints are simply multisets. They may be considered as separation algebras~\cite{calcagno2007} by taking $Loc$ to be $Places$ and
$(V,\star)$ to be the set of natural numbers with addition, then considering the
set of \emph{total} functions $Places\to\mathbb{N}$, with $\circ$ defined as
usual (hence, as multiset union), and $\epsilon$ as the constant $0$ function. If
there is a global capacity constraint $\kappa$ then we let $i\star j$ be undefined 
if $i+j>\kappa$, and hence $\circ$ becomes undefined also in the usual way.

Note that this example can only be made to exactly fit the construction of
Example~\ref{ex:partcomsemi} if we restrict ourselves to \emph{markings of
infinite Petri nets with finite numbers of tokens}. In this case we would consider
a place without tokens to have an undefined map, rather than map to $0$, and
set $V$ to be the positive integers.
  \item
Constraints on functions: The endpoint heaps of~\cite{Villard:Proving} are only
those partial functions that are \emph{dual}, \emph{irreflexive} and
\emph{injective} (we refer to the citation for the definition of these
properties). Similarly, if the places of a Petri net comes equipped with a capacity
constraint function $\kappa:Places\to\mathbb{N}$, we consider only those
functions compatible with those constraints.
\end{itemize}
\end{example}

The 
examples above, which we do not claim to be exhaustive,
justify the study of the abstract properties shared by these concrete semantics.
We hence now turn to the logic $\psl$, which has semantics in any separation
algebra (Definition~\ref{dfn:sepalg}). In this paper we prefer to express $\psl$
semantics in the style of \emph{ternary relations}, which are standard in
substructural logic~\cite{Anderson:Entailment} and in harmony with the most
important work preceding this paper~\cite{Hou:Labelled15}.
We give the ternary relations version of Definition~\ref{dfn:sepalg}, easily seen to
be equivalent, as follows.

\begin{definition}
\label{def:kripke_rel_frame}
A \emph{$\psl$ Kripke relational frame} is a triple $(H, R, \epsilon)$, where $H$
is a non-empty set of \emph{worlds}, $R \subseteq H \times H \times H$, and
$\epsilon \in H$, satisfying the following conditions for all $h_1,h_2,h_3,h_4,
h_5$ in $H$:
\begin{description}
\item[identity:] $R(h_1,\epsilon, h_2) \Leftrightarrow$ $h_1 = h_2$
\item[commutativity:] $R(h_1,h_2, h_3) \Leftrightarrow$ $R(h_2,h_1,
  h_3)$
\item[associativity:] $(R(h_1,h_5, h_4) ~~\&~~ R(h_2,h_3,h_5)) \Rightarrow$
 $\exists h_6.(R(h_6,h_3, h_4) ~~\&~~ R(h_1,h_2, h_6))$
\item[cancellativity:] $(R(h_1,h_2,h_3) ~~\&~~ R(h_1,h_4,h_3)) \Rightarrow h_2 = h_4$
\item[partial-determinism:] $(R(h_1,h_2,h_3) ~~\&~~ R(h_1,h_2, h_4)) \Rightarrow h_3 = h_4$.
\end{description}
\end{definition}
%
A \emph{$\psl$ Kripke relational model} is a tuple
$\mymodel{H}{R}{\epsilon}{\val}$ of a $\psl$ Kripke relational frame $(H,R,
\epsilon)$ and a {\em valuation} function $\val : Var \rightarrow \Pcal(H)$
(where $\Pcal(H)$ is the power set of $H$).  The forcing relation $\Vdash$
between a model  $\Mcal = \mymodel{H}{R}{\epsilon}{\val}$
and a formula is defined in Table~\ref{tab:psl_semantics}, where we write
$\Mcal, h \not \Vdash A$ for the negation of $\Mcal, h \Vdash A$. Given a model
$\Mcal = \mymodel{H}{R}{\epsilon}{\val}$, a formula is \emph{true at (world)
$h$} iff $\Mcal,h \Vdash A$. The formula $A$ is {\em valid} iff it is true at
all worlds of all models.
\begin{table*}
{
\centering
\begin{tabular}{l@{\extracolsep{1cm}}l}
\begin{tabular}{lcl}
$\Mcal, h \Vdash p$ & iff & $p \in Var$ and $h \in v(p)$
\\
$\Mcal, h \Vdash A\land B$ & iff & $\Mcal, h \Vdash A$ and $\Mcal, h \Vdash B$
\\
$\Mcal, h \Vdash A\limp B$ & iff & $\Mcal, h\not \Vdash A$ or $\Mcal, h\Vdash B$\\
$\Mcal, h \Vdash A\lor B$ & iff & $\Mcal, h \Vdash A$ or $\Mcal, h \Vdash B$
\\
\end{tabular}
&
\begin{tabular}{lcl}
$\Mcal, h \Vdash \top^*$ & iff & $h = \epsilon$\\
$\Mcal, h \Vdash \top$ & iff & always
\\
$\Mcal, h \Vdash \bot$ & iff & never 
\\
$\Mcal, h \Vdash \lnot A$ & iff & $\Mcal, h \not \Vdash A$
\end{tabular}\\
\multicolumn{2}{l}{
\hspace{-8px}
\begin{tabular}{lcl}
$\;\;\Mcal, h \Vdash A\mand B$ & iff & 
$\exists h_1,h_2.(R(h_1,h_2, h)$ and $\Mcal, h_1\Vdash A$ and $\Mcal, h_2 \Vdash B)$\\
$\;\;\Mcal, h \Vdash A \mimp B$ & iff & 
$\forall h_1,h_2.((R(h,h_1,h_2)$ and $\Mcal, h_1 \Vdash A)$ implies $\Mcal, h_2 \Vdash B)$
\end{tabular}
}
\end{tabular}
}
\caption{Semantics of $\psl$, where $\Mcal = \mymodel{H}{R}{\epsilon}{\val}.$}
\label{tab:psl_semantics}
\end{table*}


\subsection{The labelled sequent calculus $\lspslh$}
\label{subsec:lspsl}

Let $\LVar$ be an infinite set of {\em label variables}, and let the
set $\Lcal$ of \emph{labels} be $\LVar\cup\{\epsilon\}$, where
$\epsilon$ is a label constant not in $\LVar$; here we overload the
notation for the identity world in the semantics. Labels will be
denoted by lower-case letters such as $a,b,x,y,z$.  A {\em labelled
  formula} is a pair $a : A$ of a label $a$ and formula $A$.  As usual
in a labelled sequent calculus, one needs to incorporate Kripke
relations explicitly into the sequents. This is achieved via the
syntactic notion of {\em relational atoms}, which have the form of
either $a = b$ (\emph{equality}), $a \not = b$ (\emph{inequality}), or
a \emph{ternary relational atom} $(a, b \simp c)$ standing for
$R(a,b,c)$, where $a,b,c$ are labels.  A \emph{sequent} takes the form
$$\myseq{\Gcal}{\Gamma}{\Delta}$$
where $\Gcal$ is a set of
relational atoms, and $\Gamma$ and $\Delta$ are \emph{sets} of labelled
formulae. We also use the symbol ``$;$'' inside $\Gcal$, $\Gamma$ and $\Delta$
to indicate set union: for example,
$\Gamma;A$ is $\Gamma\cup\{A\}$.  Given
$\Gcal$, we denote by $E(\Gcal)$ the set of equations occurring in
$\Gcal$.

We now abuse the sequent turnstile slightly to write 
$E \vdash s = t$, where $E$ is a (possibly
infinite) set of equations, to denote an \emph{equality judgment} under
the assumption $E$, defined inductively as follows:
$$
\mbox{
\AxiomC{$(s = t) \in E$}
\UnaryInfC{$E \vdash s = t$}
\DisplayProof
}
\qquad
\mbox{
\AxiomC{}
\UnaryInfC{$E \vdash s = s$}
\DisplayProof
}
\qquad
\mbox{
\AxiomC{$E \vdash s = t$}
\UnaryInfC{$E \vdash t = s$}
\DisplayProof
}
\qquad
\mbox{
\AxiomC{$E \vdash s = t$}
\AxiomC{$E \vdash t = u$}
\BinaryInfC{$E \vdash s = u$}
\DisplayProof
}
$$
It is easy to see that $E \vdash s = t$ iff $E' \vdash s = t$ for a finite subset $E'$ of $E.$
Note that an equality judgement is not a sequent but abuses the
sequent turnstile to keep track of equalities.

As we shall soon see, working within a labelled sequent calculus
framework allows us to synthesise, in a generic way, proof rules that
correspond to a variety of different properties of separation
algebras, and their extensions. However we first must introduce a
core logic, a sublogic of BBI (and hence, of $\psl$) which we call
$\ls$, which consists only of identity, cut, logical rules and the
structural rules $NEq$ and $EM$. The proof system for this sublogic
is presented in Figure~\ref{fig:LS}. The structural rule $EM$ is
essentially a form of cut on equality predicates. The rules $NEq$ and
$EM$ are admissible for $\ls$ and many of its extensions, but will be
needed for some extensions, such as the extension of $\psl$ with
\emph{splittability}. Note that the equality judgment $E(\Gcal)\vdash
w = w'$ is not a premise requiring proof, but rather a condition for
the rule id. Therefore the rules $id$, $\bot L$, $\top R$, $\top^*
R$, $NEq$ are \emph{zero-premise} rules. In the rules $\mand R$ and
$\mimp L$, the respective principal formulae $z: A \mand B$ and $y: A
\mimp B$ also occur in the premises. This is to ensure that contraction is
admissible, which is essential to obtain cut-elimination.

\begin{figure*}
\footnotesize
\centering
\begin{tabular}{cc}
\multicolumn{2}{l}{\textbf{Identity and Cut:}}\\[10px]
\AxiomC{$E(\Gcal) \vdash w = w'$}
\RightLabel{\tiny $id$}
\UnaryInfC{$\myseq{\Gcal}{\Gamma;w:p}{w':p;\Delta}$}
\DisplayProof
&
\AxiomC{$\myseq{\Gcal}{\Gamma}{x:A;\Delta}$}
\AxiomC{$\myseq{\Gcal}{\Gamma;x:A}{\Delta}$}
\RightLabel{\tiny $cut$}
\BinaryInfC{$\myseq{\Gcal}{\Gamma}{\Delta}$}
\DisplayProof
\\[15px]
\multicolumn{2}{l}{\textbf{Logical Rules:}}\\[10px]
\AxiomC{$$}
\RightLabel{\tiny $\bot L$}
\UnaryInfC{$\myseq{\Gcal}{\Gamma; w:\bot}{\Delta}$}
\DisplayProof
$\qquad$
\AxiomC{$\myseq{\Gcal;w=\epsilon}{\Gamma}{\Delta}$}
\RightLabel{\tiny $\top^* L$}
\UnaryInfC{$\myseq{\Gcal}{\Gamma;w:\top^*}{\Delta}$}
\DisplayProof
&
\AxiomC{$$}
\RightLabel{\tiny $\top R$}
\UnaryInfC{$\myseq{\Gcal}{\Gamma}{w:\top;\Delta}$}
\DisplayProof
$\qquad$
\AxiomC{$E(\Gcal)\vdash w=\epsilon$}
\RightLabel{\tiny $\top^* R$}
\UnaryInfC{$\myseq{\Gcal}{\Gamma}{w:\top^*;\Delta}$}
\DisplayProof\\[15px]
\AxiomC{$\myseq{\Gcal}{\Gamma;w:A;w:B}{\Delta}$}
\RightLabel{\tiny $\land L$}
\UnaryInfC{$\myseq{\Gcal}{\Gamma;w:A\land B}{\Delta}$}
\DisplayProof
&
\AxiomC{$\myseq{\Gcal}{\Gamma}{w:A;\Delta}$}
\AxiomC{$\myseq{\Gcal}{\Gamma}{w:B;\Delta}$}
\RightLabel{\tiny $\land R$}
\BinaryInfC{$\myseq{\Gcal}{\Gamma}{w:A\land B;\Delta}$}
\DisplayProof\\[15px]
\AxiomC{$\myseq{\Gcal}{\Gamma}{w:A;\Delta}$}
\AxiomC{$\myseq{\Gcal}{\Gamma;w:B}{\Delta}$}
\RightLabel{\tiny $\limp L$}
\BinaryInfC{$\myseq{\Gcal}{\Gamma;w:A\limp B}{\Delta}$}
\DisplayProof
&
\AxiomC{$\myseq{\Gcal}{\Gamma;w:A}{w:B;\Delta}$}
\RightLabel{\tiny $\limp R$}
\UnaryInfC{$\myseq{\Gcal}{\Gamma}{w:A\limp B; \Delta}$}
\DisplayProof\\[15px]
\AxiomC{$\myseq{\Gcal;(x,y \simp z)}{\Gamma;x:A;y:B}{\Delta}$}
\RightLabel{\tiny $\mand L$}
\UnaryInfC{$\myseq{\Gcal}{\Gamma;z:A\mand B}{\Delta}$}
\DisplayProof
&
\AxiomC{$\myseq{\Gcal;(x,z \simp y)}{\Gamma;x:A}{y:B;\Delta}$}
\RightLabel{\tiny $\mimp R$}
\UnaryInfC{$\myseq{\Gcal}{\Gamma}{z:A\mimp B;\Delta}$}
\DisplayProof\\[15px]
\multicolumn{2}{c}{
\AxiomC{$\myseq{\Gcal;(x,y \simp z')}{\Gamma}{x:A;z:A\mand B;\Delta}$}
\AxiomC{$\myseq{\Gcal;(x,y \simp z')}{\Gamma}{y:B;z:A\mand B;\Delta} \qquad E(\Gcal) \vdash z = z'$}
\RightLabel{\tiny $\mand R$}
\BinaryInfC{$\myseq{\Gcal;(x,y \simp z')}{\Gamma}{z:A\mand B;\Delta}$}
\DisplayProof
}\\[15px]
\multicolumn{2}{c}{
\AxiomC{$\myseq{\Gcal;(x,y' \simp z)}{\Gamma;y:A\mimp B}{x:A;\Delta}$}
\AxiomC{$\myseq{\Gcal;(x,y' \simp z)}{\Gamma;y:A\mimp B; z:B}{\Delta} \qquad E(\Gcal) \vdash y = y'$}
\RightLabel{\tiny $\mimp L$}
\BinaryInfC{$\myseq{\Gcal;(x,y' \simp z)}{\Gamma;y:A\mimp B}{\Delta}$}
\DisplayProof
}
\\[15px]
\multicolumn{2}{l}{\textbf{Structural Rules:}}\\[15px]
\AxiomC{$E(\Gcal) \vdash u = v$}
\RightLabel{\tiny $NEq$}
\UnaryInfC{$\myseq {\Gcal; u \not = v} \Gamma \Delta$}
\DisplayProof 
&
\AxiomC{$\myseq{\Gcal; x = y} \Gamma \Delta$}
\AxiomC{$\myseq{\Gcal; x \not = y} \Gamma \Delta$}
\RightLabel{\tiny $EM$}
\BinaryInfC{$\myseq \Gcal \Gamma \Delta$}
\DisplayProof
\\[15px]
\multicolumn{2}{l}{\textbf{Side conditions:}} \\
\multicolumn{2}{l}{In $\mand L$ and $\mimp R$, the labels $x$ and $y$
do not occur in the conclusion. }\\
\end{tabular}
\caption{Inference rules for the core sublogic $\ls$.}
\label{fig:LS}
\end{figure*}

Given a relational frame
$(H, R, \epsilon)$,
a function $\rho: \Lcal \rightarrow H$ from
labels to worlds is a {\em label mapping} 
iff it satisfies $\rho(\epsilon) =  \epsilon$, mapping the
label constant $\epsilon$ to the identity world $\epsilon \in H$. Intuitively, a labelled formula $a:A$ means that formula $A$ is true in world $\rho(a)$.
Thus we define an \textit{extended $\psl$ Kripke relational model} $(H,R,\epsilon,\val,\rho)$ as a model equipped with a label mapping.

\begin{definition}[Sequent Falsifiability]
A sequent $\myseq{\Gcal}{\Gamma}{\Delta}$ is \emph{falsifiable in an extended model}
$\Mcal = \mylmodel{H}{R}{\epsilon}{\val}{\rho}$ if for every $x : A \in \Gamma$, $(a,b\simp c) \in \Gcal$, and for every
$y : B \in \Delta$,
we have each of $\Mcal,\rho(x) \Vdash A$ and 
$R(\rho(a),\rho(b),\rho(c))$ and $\Mcal,\rho(y) \not \Vdash B.$
It is {\em falsifiable} if it is falsifiable in some extended model. 
\end{definition}

\paragraph{Synthesising structural rules from frame axioms}
We now define extensions of the sublogic $\ls$ via first-order axioms that
correspond to various semantic conditions used to define $\psl$ and its
variations. For this work, we consider only axioms that are closed formulae of
the following general axiom form where $k,l,m,n,p$ are natural numbers:
\begin{equation}
\label{eq:ax}
\forall x_1,\dots,x_m. (s_1 = t_1 \,\& \cdots \&\, s_p = t_p 
   \,\&\, S_1 \,\& \cdots \&\, S_k \,\Rightarrow\, \exists y_1, \dots, y_n. (T_1 \,\& \cdots \&\, T_l))
\end{equation}
Note that where $k$, $l$, or $p$ are $0$, we assume the empty conjunction is
$\top$. 
We further require the following conditions:
\begin{itemize}
\item each $S_i$, for $1\leq i \leq k$, is either a ternary relational
  atom or an inequality;
\item each $T_i$, for $1\leq i \leq l$, is a relational atom;
\item every label variable in $\bigcup_{1\leq i \leq k} S_i$ occurs only
  once;
\item if $S_i$, for $1\leq i\leq k$, is a ternary relational atom, then
  $\epsilon$ does not occur in $S_i$.
\end{itemize}

We call axioms of this form \emph{frame axioms}.  A frame axiom can be
given semantics in terms of Kripke frames, following the standard
classical first-order interpretation (see e.g.,~\cite{Fitting1996}).
Recall that a first-order model is a pair $(D, I)$ of a non-empty {\em
  domain} $D$ and an {\em interpretation} function $I$ that associates
each constant in the first-order language to a member of $D$ and every
$n$-ary relation symbol to an $n$-ary relation over $D$.  When
interpreting first-order formulae with free variables, we additionally
need to specify the valuation of the free variables, i.e., a mapping
of the free variables to elements of $D$.  The notion of truth of a
first-order formula (under a model and a given valuation of its free
variables) is standard and the reader is referred to,
e.g.,~\cite{Fitting1996} for details.

\begin{definition}
\label{def:frame-sat}
A Kripke frame $(H, R, \epsilon)$ {\em satisfies} a frame axiom $F$
iff $F$ is true in the first order model $(H, I)$, where $I$ is the
interpretation function that associates the symbol $\epsilon$ to the
label constant $\epsilon$ in the set $H$, the predicate symbol $\simp$ to the relation
$R$, and the equality symbol $=$ to the identity relation over $H$.  A
Kripke frame satisfies a set $\Fcal$ of frame axioms iff it satisfies
every frame axiom in the set.
\end{definition}
%

A frame axiom such as the one from Formula~(\ref{eq:ax}) 
induces the following \emph{general structural rule}:
\[
\mbox{
\AxiomC{$\myseq {\Gcal; S_1; \dots; S_k; T_1; \dots; T_l} \Gamma \Delta$}
\AxiomC{$E(\Gcal) \vdash s_1 = t_1 \qquad \cdots \qquad E(\Gcal) \vdash s_p = t_p$}
\BinaryInfC{$\myseq {\Gcal; S_1; \dots; S_k} \Gamma \Delta$}
\DisplayProof
}
\]
where the existential condition in Equation~\ref{eq:ax} becomes a
side-condition that the existentially quantified variables
$y_1,\dots,y_n$ must be fresh label variables not occurring in the
conclusion of the rule.

\begin{example}
\label{ex:frame-axiom-1}
  The semantic clauses in Definition~\ref{def:kripke_rel_frame} 
  can be captured by the following frame axioms:
  \begin{description}
  \item[identity 1:] $\forall h_1,h_2,h_3. (h_2 = \epsilon$ $\&$ $R(h_1,h_2,h_3)) \Rightarrow$ $h_1 = h_3$
  \item[identity 2:] $\forall h_1,h_2. h_1 = h_2 \Rightarrow$ $R(h_1,\epsilon,h_2)$
  \item[commutativity:] $\forall h_1,h_2,h_3. R(h_1,h_2,h_3) \Rightarrow$ $R(h_2,h_1,h_3)$
  \item[associativity:] $\forall h_1,h_2,h_3,h_4,h_5,h_5'.$ 
  \\ $(h_5 = h_5'$ $\&$ $R(h_1,h_5,h_4)$ $\&$ $R(h_2,h_3,h_5'))
    \Rightarrow$ $\exists h_6.$ $(R(h_6,h_3, h_4)$ $\&$ $R(h_1,h_2, h_6))$
  \item[cancellativity:] $\forall h_1,h_2,h_3,h_1',h_3'.$
    \\ $(h_1 = h_1'
    ~\&~ h_3 = h_3' ~\&~ R(h_1,h_2,h_3) ~\&~ R(h_1',h_4,h_3'))$ 
   $\Rightarrow h_2 = h_4$
  \item[partial-determinism:] $\forall h_1,h_1',h_2,h_2',h_3,h_4.$
   \\ $(h_1 = h_1'$ $\&$ $h_2 = h_2'$ $\&$ $R(h_1,h_2,h_3)$ $\&$ $R(h_1',h_2',h_4))
       \Rightarrow h_3 = h_4$.
  \end{description}
  These frame axioms are mostly a straightforward translation from the semantic clauses of Definition~\ref{def:kripke_rel_frame}
  into the syntactic form, replacing the relation $R$ with the predicate symbol $\simp.$ It is trivial to show that
  the Kripke frames defined in Definition~\ref{def:kripke_rel_frame} satisfy the frame axioms above. 
  However, notice that the syntactic form of the frame axioms does not allow more than one occurrence of a variable
  in the left hand side of the implications. Thus, for each semantic clause of
  Definition~\ref{def:kripke_rel_frame}, we need to identify each 
  world that occurs multiple (say, $n$) times on the left hand side of
  the implications, make $n$ distinct copies of that world, and add equalities
  relating them. If $\epsilon$ occurs in a ternary relational atom
  on the left hand side, we need to create a fresh (universally
  quantified) variable, e.g., $w$, and add that $w = \epsilon$.

  Take the associativity axiom in
  Definition~\ref{def:kripke_rel_frame} as an example:
  \begin{center}
    $\forall h_1,h_2,h_3,h_4,h_5.(R(h_1,h_5, h_4) \,\&\, R(h_2,h_3, h_5)
    \,\Rightarrow\, \exists h_6.(R(h_6,h_3, h_4) \,\&\, R(h_1,h_2, h_6)))$
  \end{center}
  The world $h_5$ occurs twice on the left hand
  side, so we make two copies of it: $h_5$ and $h_5'$. The
  corresponding axiom in frame axiom form is then:
  \begin{center}
    $\forall h_1,h_2,h_3,h_4,h_5.(h_5 = h_5' \,\&\, (h_1,h_5 \simp h_4) \,\&\,
    (h_2,h_3 \simp h_5') \,\Rightarrow\, \exists h_6.((h_6,h_3 \simp h_4) \,\&\,
    (h_1,h_2 \simp h_6)))$
  \end{center}
  We may then synthesise the following structural rule for associativity:
  \begin{center}
    \AxiomC{$\myseq {\Gcal;(u,y \simp x);(v,w \simp y');(z, w \simp x); (u, v \simp
        z)} \Gamma \Delta$ \qquad $E(\Gcal)\vdash y = y'$}

    \RightLabel{$A$}

    \UnaryInfC{$\myseq {\Gcal;(u,y \simp x);(v,w \simp y')} \Gamma \Delta$}

    \DisplayProof
  \end{center}
  with the ``freshness'' side-condition that $z$ does not appear in the conclusion.

  Note that this rule is applicable on $(u,y \simp x),(v,w
  \simp y)$, as $E(\Gcal)\vdash y = y$ is trivial.
\end{example}

From the frame axioms in Example~\ref{ex:frame-axiom-1} above, we obtain the structural rules of
Figure~\ref{fig:sepalg}. The identity axiom, as it is a
bi-implication, gives rise to two rules $E$ and $U$. The commutativity
axiom translates to rule $Com$, associativity to $A$, cancellativity to
$C$ and partial determinism to $P$. The proof system $\lspslh$ is
defined to be the rules of Figure~\ref{fig:LS} for the sublogic $\ls$, plus the
synthesised structural rules of Figure~\ref{fig:sepalg}.

\begin{figure}
\begin{tabular}{cc}
  \AxiomC{$\myseq {\Gcal;(x, y \simp z);x = z } \Gamma \Delta$}
  \AxiomC{$E(\Gcal)\vdash y = \epsilon$}
  \RightLabel{\tiny $E$}
  \BinaryInfC{$\myseq {\Gcal;(x, y \simp z)} \Gamma \Delta$}
  \DisplayProof
  &
  \AxiomC{$\myseq {\Gcal; (x, \epsilon \simp y)} \Gamma \Delta$}
  \AxiomC{$E(\Gcal) \vdash x = y$}
  \RightLabel{\tiny $U$}
  \BinaryInfC{$\myseq \Gcal \Gamma \Delta$}
  \DisplayProof
\\[15px]
\multicolumn{2}{c}{
  \AxiomC{$\myseq {\Gcal; (x,y \simp z);(y,x \simp z)} \Gamma \Delta$}
  \RightLabel{\tiny $Com$}
  \UnaryInfC{$\myseq {\Gcal; (x,y \simp z)} \Gamma \Delta$}
  \DisplayProof
}
\\[15px]
\multicolumn{2}{c}{
    \AxiomC{$\myseq {\Gcal;(u,y \simp x);(v,w \simp y');(z, w \simp x); (u, v \simp
        z)} \Gamma \Delta$ \qquad $E(\Gcal)\vdash y = y'$}
    \RightLabel{\tiny $A$}
    \UnaryInfC{$\myseq {\Gcal;(u,y \simp x);(v,w \simp y')} \Gamma \Delta$}
\DisplayProof
}\\[15px]
\multicolumn{2}{c}{
\AxiomC{$\myseq {\Gcal;(x,y \simp z); (x', w \simp z'); y = w} \Gamma \Delta$ $\qquad$ $E(\Gcal)\vdash x = x'$ $\qquad$ $E(\Gcal)\vdash z = z'$}
\RightLabel{\tiny $C$}
\UnaryInfC{$\myseq {\Gcal;(x,y \simp z); (x', w \simp z')} \Gamma \Delta$}
\DisplayProof
}\\[15px]
\multicolumn{2}{c}{
\AxiomC{$\myseq {\Gcal;(w,x \simp y); (w',x'\simp z); y = z} \Gamma \Delta$ $\qquad$ $E(\Gcal)\vdash w = w'$ $\qquad$ $E(\Gcal)\vdash x = x'$} 
\RightLabel{\tiny $P$}
\UnaryInfC{$\myseq {\Gcal;(w,x \simp y); (w',x'\simp z)} \Gamma \Delta$}
\DisplayProof
}\\[15px]
\multicolumn{2}{l}{\textbf{Side conditions:}} \\
\multicolumn{2}{l}{In $A$, the label $z$ does not occur in the conclusion. }\\
\end{tabular}
\caption{Structural rules synthesised from frame axioms. Together with
Figure~\ref{fig:LS}, these rules define the sequent calculus $\lspslh$ for the
logic $\psl$.}
\label{fig:sepalg}
\end{figure}

We remark here that it is not always obvious what the effect of each semantic
property will be on the set of valid formulae; for
example it was only recently discovered~\cite{larcheywendling2014} that
cancellativity does not affect validity in the presence of the other properties.
This lends
weight to the suggestion of~\cite{Gotsman:Precision} that cancellativity should
be omitted from the definition of separation algebra; see also our examples of
concrete separation algebras without cancellativity in
Section~\ref{sec:nocancel}. It is nonetheless harmless to include it in our
rules, and may be useful for some extensions of $\psl$, as we discuss in
Section~\ref{sec:rel_work}.

It is easy to check that the following hold:

\begin{theorem}[Soundness of the general structural rule]
  \label{thm:soundness}
Every synthesised instance of the general structural rule is sound with respect to
the Kripke relational frames with the corresponding frame axiom. 
\end{theorem}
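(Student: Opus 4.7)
The plan is to prove soundness by contrapositive, in the style standard for labelled sequent calculi: assume the conclusion is falsifiable in some extended model whose underlying frame satisfies the corresponding frame axiom, and then construct an extended model falsifying the premise. Concretely, suppose $\Mcal = (H, R, \epsilon, \val, \rho)$ is an extended model with $(H, R, \epsilon)$ satisfying the frame axiom $F$ (as in Definition~\ref{def:frame-sat}), and suppose $\Mcal$ falsifies $\myseq{\Gcal; S_1; \dots; S_k}{\Gamma}{\Delta}$. I want to exhibit an extension $\rho'$ of $\rho$ for which the same $\Mcal$ (with $\rho'$ replacing $\rho$) falsifies the premise $\myseq{\Gcal; S_1; \dots; S_k; T_1; \dots; T_l}{\Gamma}{\Delta}$.

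The key preliminary step is a small lemma on the equality judgment: if every equality in $E(\Gcal)$ is true under $\rho$, and $E(\Gcal) \vdash s = t$ is derivable by the reflexivity, symmetry, and transitivity rules, then $\rho(s) = \rho(t)$. This is by a routine induction on the derivation. Applied to the rule's side conditions $E(\Gcal) \vdash s_i = t_i$, together with the hypothesis that all equality atoms in $\Gcal$ hold in $\Mcal$ under $\rho$, this yields $\rho(s_i) = \rho(t_i)$ for each $i \in \{1,\dots,p\}$. Moreover, by falsifiability of the conclusion, each $S_j$ (ternary relational atom or inequality) holds under $\rho$. Instantiating the universally quantified variables $x_1, \dots, x_m$ of $F$ with their $\rho$-values, the antecedent of $F$ is satisfied; hence by $F$ there exist witnesses $h_1, \dots, h_n \in H$ making $T_1, \dots, T_l$ true when $y_i$ is interpreted as $h_i$. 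Define $\rho'$ to agree with $\rho$ on all labels in the conclusion and to send each (fresh) $y_i$ to $h_i$. Because the $y_i$ do not occur in $\Gcal$, $\Gamma$, $\Delta$, or in $S_1, \dots, S_k$, the falsifiability conditions for the conclusion are preserved under $\rho'$, and the new atoms $T_1, \dots, T_l$ also hold by construction. Thus the premise is falsifiable in the extended model, giving soundness.

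The routine but mildly annoying step will be bookkeeping with the different shapes of relational atoms: $T_i$ may be a ternary relational atom, an equality, or an inequality, and the equalities $s_i = t_i$ in the antecedent of $F$ are not literal equality atoms in $\Gcal$ but only derivable from $E(\Gcal)$. The only genuinely substantive point, and therefore the main obstacle, is making sure the equational lemma and the extension of $\rho$ interact correctly with Definition~\ref{def:frame-sat}: the condition that $\epsilon$ does not appear in any $S_i$ that is a ternary relational atom, and that label variables occur only once in $\bigcup_i S_i$, were precisely imposed so that the first-order interpretation of the antecedent of $F$ under $\rho$ coincides with the truth of $S_1, \dots, S_k$ in $\Mcal$, with no hidden coincidence constraints. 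Once that correspondence is spelled out, the application of $F$ and the freshness of $y_1, \dots, y_n$ do all the remaining work.
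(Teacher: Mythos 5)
Your argument is correct, and it coincides with what the paper intends: the paper gives no proof of Theorem~\ref{thm:soundness} at all (it is prefaced only by ``it is easy to check''), and the standard falsifiability-preservation argument you give --- read off $\rho(s_i)=\rho(t_i)$ from the equality judgments via a small induction, check the antecedent of the frame axiom under the valuation $x_i\mapsto\rho(a_i)$, pull existential witnesses from the axiom, and extend $\rho$ on the fresh labels --- is exactly the omitted content. One small correction to your closing paragraph: the syntactic restrictions on frame axioms (linearity of variables in the $S_i$ and the absence of $\epsilon$ in ternary $S_i$) are not needed for soundness and are not an obstacle here; they are imposed for the \emph{completeness} direction, where the counter-model construction of Lemma~\ref{lem:hintikka_sat} must replace labels by $=_\Gcal$-equivalent representatives without the replacements interfering with one another.
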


\begin{corollary}[Soundness of $\lspslh$]
For any formula $A$, and
for an arbitrary label $w$,
if the labelled sequent 
$\vdash w:A$ is derivable in $\lspslh$
then
$A$ is valid.
\end{corollary}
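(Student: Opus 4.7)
The plan is to deduce this corollary from a stronger lemma, stated in terms of sequent falsifiability: every sequent derivable in $\lspslh$ is not falsifiable in any extended $\psl$ Kripke relational model. Assuming this lemma, suppose for contradiction that $\vdash w:A$ is derivable in $\lspslh$ but $A$ is not valid. Then there exist a $\psl$ Kripke relational model $\mymodel{H}{R}{\epsilon}{\val}$ and a world $h \in H$ with $\Mcal, h \not\Vdash A$. Pick any label mapping $\rho$ with $\rho(\epsilon) = \epsilon$ and $\rho(w) = h$, yielding an extended model that falsifies $\vdash w:A$, since the $\Gcal$ and $\Gamma$ components are empty and hence their conditions hold vacuously. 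This contradicts the lemma, so $A$ must be valid.

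For the lemma, I would proceed by induction on the height of the derivation, case-splitting on the last rule applied. The zero-premise rules ($id$, $\bot L$, $\top R$, $\top^* R$, $NEq$) are handled by direct verification; for instance, $id$ carries the side-condition $E(\Gcal) \vdash w = w'$, so in any extended model satisfying the equations of $\Gcal$ we have $\rho(w) = \rho(w')$, making it impossible for $w:p$ on the left and $w':p$ on the right to be simultaneously witnessed. The propositional and multiplicative logical rules (including $\mand L/R$, $\mimp L/R$, $\top^* L$) follow by unfolding the semantic clauses in Table~\ref{tab:psl_semantics} and contraposition: assume the conclusion is falsified by an extended model; for rules introducing a fresh label (such as $\mand L$ and $\mimp R$), extend $\rho$ to map that label to a witness provided by the relevant semantic clause, obtaining a model that falsifies a premise and contradicting the inductive hypothesis. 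The $cut$ and $EM$ rules are dispatched by case analysis on whether the cut formula holds at the relevant world, and on whether the two labels denote the same world, respectively.

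For the structural rules of Figure~\ref{fig:sepalg}, each is an instance of the general structural rule synthesised from one of the frame axioms in Example~\ref{ex:frame-axiom-1}. Every $\psl$ Kripke relational frame satisfies these axioms by Definition~\ref{def:frame-sat}, so Theorem~\ref{thm:soundness} applies directly and closes those cases of the induction uniformly.

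The main obstacle is the bookkeeping around fresh labels and equalities. Whenever a rule introduces a fresh label variable (as in $\mand L$, $\mimp R$, and $A$), we must verify that extending $\rho$ to assign that variable to a chosen witness does not disturb the interpretation of any existing relational atom or equality in $\Gcal$, which is guaranteed by the freshness side-condition that the new label does not occur in the conclusion. Dually, when a premise carries new equations or relational atoms (as in $E$, $U$, $Com$, $A$, $C$, $P$), we must check that the inherited extended model still satisfies all of these, which follows from the frame axioms once we match up the rule's data with the shape of the corresponding axiom. These verifications are routine but need care, and once done the lemma and hence the corollary follow.
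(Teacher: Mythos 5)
Your proposal is correct and is essentially the argument the paper intends: the paper states Theorem~\ref{thm:soundness} and this corollary with only ``it is easy to check,'' deferring the details to an Isabelle formalisation, and the standard proof is exactly your induction on derivation height showing each rule preserves non-falsifiability upwards, with the structural rules of Figure~\ref{fig:sepalg} discharged uniformly by Theorem~\ref{thm:soundness} since every $\psl$ Kripke relational frame satisfies the frame axioms of Example~\ref{ex:frame-axiom-1}. One point in your favour worth making explicit: the paper's literal definition of sequent falsifiability only constrains ternary relational atoms in $\Gcal$, whereas soundness of $id$, $\top^* R$, $NEq$ and the equality side-conditions requires the extended model also to respect the equality and inequality atoms of $\Gcal$ (i.e.\ $\rho(a)=\rho(b)$ for each $a=b\in\Gcal$ and $\rho(a)\neq\rho(b)$ for each $a\neq b\in\Gcal$) --- your phrase ``satisfying the equations of $\Gcal$'' silently and correctly strengthens the definition in this way, and the only other caveat is that your choice $\rho(w)=h$ presupposes $w\neq\epsilon$, consistent with the paper's convention elsewhere.
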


Note that the soundness of $\lspslh$ has been formally verified via the interactive theorem prover
Isabelle~\cite{Separata-AFP}.

We will give the name $\lsg$ to the general proof system that extends
the rules of Figure~\ref{fig:LS} with any set of structural rules
synthesised from frame axioms. A proof system consisting of the rules
in Figure~\ref{fig:LS} plus a finite number of instances of the
general structural rule is called an \emph{instance} of $\lsg$.


\subsection{Cut-elimination for the general proof system $\lsg$}

In this section we see that the $cut$ rule of Figure~\ref{fig:LS} is
admissible in the general nested sequent calculus $\lsg$. Since
cut-admissibility can be obtained indirectly from the cut-free
completeness proof in the next section, we do not give full details
here.  

A {\em label substitution} is a mapping from label variables to
labels. The {\em domain} of a substitution $\theta$ is the set $\{x
\mid \theta(x) \not = x\}$. We restrict to substitutions with only
finite domains.  We use the notation $[a_1/x_1, \ldots, a_n / x_n]$ to
denote a substitution mapping variables $x_i$ to labels $a_i.$
Application of a substitution $\theta$ to a term or a formula is
written in a postfix notation, e.g., $F[a/x]$ denotes a formula
obtained by substituting $a$ for every free occurrence of $x$ in $F.$
This notation generalises straightforwardly to applications of
substitutions to (multi)sets of formulas, relational atoms and
sequents.

We will first present a substitution lemma for instance systems
of $\lsg$. This requires the following lemma.

\begin{lemma}[Substitution in equality judgments]
Given any set $E$ of equality relational atoms, any labels $x$, $y$ and $z$, and any label variable $w$, if $E\vdash x = y$, then $E[z/w] \vdash (x = y)[z/w]$, where every occurrence of $w$ is replaced with $z$. 
\end{lemma}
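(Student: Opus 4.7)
The plan is to prove this by straightforward induction on the derivation of the equality judgment $E \vdash x = y$, following the four inference rules that define the judgment. The key observation is that label substitution is a purely syntactic operation, so it commutes cleanly with each of the four formation rules for $E \vdash \cdot = \cdot$.

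First I would dispatch the two base cases. For the axiom case, if $E \vdash x = y$ holds because $(x = y) \in E$, then by definition $(x = y)[z/w] = (x[z/w] = y[z/w])$ is a member of $E[z/w]$, so the axiom rule applies to give $E[z/w] \vdash (x = y)[z/w]$. For the reflexivity case, $E \vdash x = y$ is derived with $x$ and $y$ syntactically identical; then $x[z/w]$ and $y[z/w]$ are also syntactically identical, so reflexivity applies directly.

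Next I would handle the two inductive cases. For symmetry, if $E \vdash x = y$ is derived from $E \vdash y = x$, the induction hypothesis yields $E[z/w] \vdash y[z/w] = x[z/w]$, and one application of the symmetry rule gives the desired conclusion. For transitivity, if $E \vdash x = y$ is derived from premises $E \vdash x = t$ and $E \vdash t = y$ for some intermediate label $t$, the induction hypothesis applied to each premise gives $E[z/w] \vdash x[z/w] = t[z/w]$ and $E[z/w] \vdash t[z/w] = y[z/w]$, after which transitivity yields $E[z/w] \vdash x[z/w] = y[z/w]$, which is $E[z/w] \vdash (x = y)[z/w]$.

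There is essentially no obstacle here: the rules for equality judgments contain no binders and no side conditions that interact with substitution, so the induction is entirely mechanical. The only thing to be careful about is the notational convention that $\epsilon$ is a label constant not in $\LVar$, so substitution $[z/w]$ leaves $\epsilon$ untouched; this is consistent with the axiom case above since membership in $E[z/w]$ is computed by applying $[z/w]$ uniformly to all labels appearing in $E$.
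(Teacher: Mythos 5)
Your proof is correct: the paper states this lemma without proof (it is given as a prerequisite for the substitution lemma for $\lsg$, whose own proof is also only sketched by reference to the $\lsbbi$ development), and your structural induction over the four rules defining the equality judgment is exactly the mechanical argument the authors are implicitly relying on. The case analysis is complete and the remark that $[z/w]$ fixes $\epsilon$ (since $w$ is required to be a label \emph{variable}) correctly addresses the only notational subtlety.
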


In the substitution lemma below we use $ht(\Pi)$ to denote the height of the
derivation $\Pi$.

\begin{lemma}[Substitution for $\lsg$]
In any instance system of $\lsg$, if $\Pi$ is a derivation for the sequent $\myseq{\Gcal}{\Gamma}{\Delta}$, then there is a derivation $\Pi'$ of the sequent $\myseq{\Gcal[y/x]}{\Gamma[y/x]}{\Delta[y/x]}$ where every occurrence of label variable $x$ is replaced by label $y$, such that $ht(\Pi')\leq ht(\Pi)$.
\end{lemma}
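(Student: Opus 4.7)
The plan is to proceed by induction on the height $ht(\Pi)$, with case analysis on the last rule $r$ of $\Pi$. The preliminary Substitution Lemma for equality judgments takes care of all side conditions of the form $E(\Gcal) \vdash s = t$ that appear in rules such as $id$, $\top^* R$, $\mand R$, $\mimp L$, $NEq$, and every synthesised instance of the general structural rule: whenever $E(\Gcal) \vdash s = t$ holds, so does $E(\Gcal[y/x]) \vdash s[y/x] = t[y/x]$.

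For the base cases (the zero-premise rules $id$, $\bot L$, $\top R$, $\top^* R$, $NEq$), the substituted sequent is again an instance of the same rule: the required labelled formula or relational atom is still present since substitution acts pointwise, and any equality judgment side condition survives by the preliminary lemma. For the inductive step, if $r$ has no eigenvariable (all propositional rules other than $\mand L$ and $\mimp R$, and the structural rules $E$, $U$, $Com$, $C$, $P$, $EM$, and instances of the general structural rule with no existentials), we apply the induction hypothesis to each premise with $[y/x]$ and then reapply $r$; the equality side conditions are preserved, producing a derivation of height at most $ht(\Pi)$.

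The interesting case, which is the main obstacle, is when $r$ has an eigenvariable side condition: $\mand L$, $\mimp R$, and the instances of the general structural rule synthesised from frame axioms whose existential variables $y_1,\dots,y_n$ appear in the premise (for example, rule $A$ with eigenvariable $z$). Let $\vec{v}$ be the tuple of eigenvariables introduced by $r$. If no variable of $\vec v$ lies in $\{x, y\}$, we simply apply the induction hypothesis to the premise and reapply $r$; the freshness condition on $\vec v$ with respect to the substituted conclusion is preserved. If some $v_i \in \{x, y\}$, we must rename to avoid capture: pick fresh variables $\vec{v\,'}$ not appearing in $\Pi$ or in $\{x, y\}$, apply the induction hypothesis to the premise with the renaming substitution $[\vec{v\,'}/\vec v]$ (height $\leq ht(\Pi) - 1$), then apply the induction hypothesis again with $[y/x]$ to the resulting derivation (height still $\leq ht(\Pi) - 1$), and finally reapply $r$ using the fresh $\vec{v\,'}$ as eigenvariables. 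The composition is of height at most $ht(\Pi)$, and all freshness and equality side conditions are satisfied by choice of $\vec{v\,'}$ and the preliminary lemma.

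This two-step use of the induction hypothesis, first for renaming and then for the real substitution, is the standard device for labelled sequent calculi with eigenvariable side conditions, and it is the reason the induction must be on derivation height rather than on derivation structure. Once this case is handled, the lemma holds uniformly for every instance system of $\lsg$, since the only place where instance-specific rules enter the argument is through the general structural rule, whose freshness and equality side conditions we have already addressed.
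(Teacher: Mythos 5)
Your proof is correct and follows essentially the same route the paper intends: the paper omits the details, deferring to the standard argument from the $\lsbbi$ work, which is precisely your height-based induction using the equality-judgment substitution lemma for the side conditions and a preliminary renaming of eigenvariables (via a second, height-preserving use of the induction hypothesis) in the $\mand L$, $\mimp R$, and existential structural-rule cases. Nothing is missing.
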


Since $\lsg$ does not involve explicit label substitutions in the rules anymore, the proof for the substitution lemma is actually simpler than the proof for $\lsbbi$~\cite{Hou:Labelled15}, to which we refer interested readers.

The admissibility of weakening for any instance system of $\lsg$ can
be proved by a simple induction on the length of the derivation.
The invertibility of the inference rules in $\lsg$ can be proved in a
similar way as for $\lsbbi$~\cite{Hou:Labelled15}, which uses similar techniques
as for $G3c$~\cite{negri2001}. The proofs for the following lemmas 
are a straightforward adaptation of similar
proofs from \cite{Hou:Labelled15} so we omit details here.

\begin{lemma}[Weakening admissibility of $\lsg$]
If $\myseq{\Gcal}{\Gamma}{\Delta}$ is derivable in any instance system of $\lsg$, then for any set $\Gcal'$ of relational atoms, and any set $\Gamma'$ and $\Delta'$ of labelled formulae, the sequent $\myseq{\Gcal;\Gcal'}{\Gamma;\Gamma'}{\Delta;\Delta'}$ is derivable with the same height in that instance of $\lsg$.
\end{lemma}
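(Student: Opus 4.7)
The plan is to proceed by induction on the height of the derivation $\Pi$ of $\myseq{\Gcal}{\Gamma}{\Delta}$, showing in each case that the same rule instance (or a renamed version of it) can be applied to derive $\myseq{\Gcal;\Gcal'}{\Gamma;\Gamma'}{\Delta;\Delta'}$ without increasing the height. Two auxiliary facts drive the argument: first, the equality judgment is monotone in its assumption set, so whenever a premise of a rule requires $E(\Gcal) \vdash s = t$, the enlarged assumption set $E(\Gcal;\Gcal') \supseteq E(\Gcal)$ also yields $E(\Gcal;\Gcal') \vdash s = t$; second, the height-preserving substitution lemma just established for $\lsg$ lets us rename any label variable throughout a derivation without growing it.

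For the base cases, the zero-premise rules $id$, $\bot L$, $\top R$, $\top^* R$, and $NEq$ all fire whenever their triggering components are present in the sequent, and any required equality judgment $E(\Gcal) \vdash \ldots$ remains derivable under the larger $E(\Gcal;\Gcal')$ by monotonicity. Hence each such axiom of $\myseq{\Gcal}{\Gamma}{\Delta}$ is also an axiom of $\myseq{\Gcal;\Gcal'}{\Gamma;\Gamma'}{\Delta;\Delta'}$. For the inductive step, given the last rule $r$ applied in $\Pi$, we apply the induction hypothesis to each premise with the same additions $\Gcal'$, $\Gamma'$, $\Delta'$, obtaining derivations of the weakened premises with no greater height, then re-apply $r$ to conclude. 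The side-conditions of $r$ involving equality judgments transfer by monotonicity exactly as in the base cases.

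The main obstacle is the freshness side-condition attached to the rules $\mand L$, $\mimp R$, and the synthesised general structural rule (for instance, rule $A$ in Figure~\ref{fig:sepalg}): the eigenvariables introduced in the premise must not occur in the conclusion, and an arbitrary $\Gcal'$, $\Gamma'$, $\Delta'$ may well mention those variables. The fix is standard: before applying the induction hypothesis, use the substitution lemma to rename the offending eigenvariables in the subderivation of the premise to fresh variables disjoint from $\Gcal'\cup\Gamma'\cup\Delta'$ and from the conclusion. Because substitution is height-preserving, this renaming costs nothing, and the renamed rule instance is still a valid instance of $r$ with its freshness condition intact. Once this bookkeeping is done, the induction goes through uniformly for every rule of $\ls$ and every synthesised instance of the general structural rule, so the argument is independent of which particular instance of $\lsg$ is under consideration.
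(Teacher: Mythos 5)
Your proof is correct and follows essentially the same route the paper indicates: a simple induction on the height of the derivation, using monotonicity of the equality judgment in its assumption set and the height-preserving substitution lemma to rename the eigenvariables of $\mand L$, $\mimp R$, and the synthesised structural rules away from $\Gcal'$, $\Gamma'$, $\Delta'$. The paper omits the details, deferring to the analogous argument for $\lsbbi$, but your treatment of the freshness side-conditions is exactly the standard bookkeeping that argument relies on.
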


\begin{lemma}[Invertibility of rules in $\lsg$]
In any instance system of $\lsg$, if $\Pi$ is a cut-free derivation of the conclusion of a rule, then there is a cut-free derivation for each premise, with height at most $ht(\Pi)$.
\end{lemma}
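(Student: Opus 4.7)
The plan is to follow the standard invertibility proof pattern for labelled sequent calculi in the style of Negri's $G3c$~\cite{negri2001} and $\lsbbi$~\cite{Hou:Labelled15}, adapted to the general framework $\lsg$. The proof proceeds by induction on $ht(\Pi)$, with a case analysis on the last rule $R'$ applied in $\Pi$, for each rule $R$ whose invertibility we are establishing. A preliminary observation partitions the rules into easy and hard cases.

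The easy cases: all structural rules synthesised from frame axioms ($E$, $U$, $Com$, $A$, $C$, $P$, and indeed any instance of the general structural rule), together with $NEq$ and $EM$, are immediately invertible using the previously established weakening admissibility, since each premise differs from the conclusion only by the addition of relational atoms (and possibly a fresh label introduced in a premise, as in $A$). The zero-premise rules $id$, $\bot L$, $\top R$, $\top^* R$ have no premises, so invertibility is vacuous. The rules $\mand R$ and $\mimp L$ retain their principal formulae in both premises, so again invertibility is obtained directly by weakening.

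The non-trivial cases are the propositional and multiplicative rules whose principal formula is consumed in the premise: $\top^* L$, $\land L$, $\land R$, $\limp L$, $\limp R$, $\mand L$, and $\mimp R$. For each such $R$, I would argue as follows. If $R' = R$ acts on the very formula being inverted, the desired premise appears literally as an immediate subderivation of $\Pi$, with strictly smaller height. Otherwise $R'$ acts on a different formula or relational atom; then $R$ commutes with $R'$, so I apply the induction hypothesis to each premise of $R'$ (which has strictly smaller height) to invert $R$ there, and then reapply $R'$ to reassemble a derivation of the desired premise. The height bound $ht(\Pi')\leq ht(\Pi)$ follows because each application of the induction hypothesis is to a shorter derivation, and reapplying $R'$ adds only one to the maximum premise height.

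The principal obstacle is managing the freshness side-conditions of $\mand L$, $\mimp R$, and the structural rule $A$ when permuting rules past one another. If $R'$ introduces a fresh label that happens to clash with a label appearing in the inverted premise of $R$, the commutation argument breaks. The standard remedy, which works uniformly here, is to invoke the height-preserving substitution lemma for $\lsg$ to rename the offending fresh label in the subderivation of $R'$ to a genuinely fresh one before applying the induction hypothesis. Since substitution preserves derivation height, this renaming does not affect the height bound. As the resulting case analysis is otherwise a routine adaptation of the proof for $\lsbbi$~\cite{Hou:Labelled15}, I would state the lemma and refer to that paper for the detailed case-by-case verification.
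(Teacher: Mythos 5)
Your proposal is correct and follows essentially the same route as the paper, which omits the details and states only that invertibility is proved as for $\lsbbi$ and $G3c$, i.e.\ by induction on derivation height with weakening admissibility handling the rules whose premises merely extend the conclusion, rule permutation for the remaining cases, and the height-preserving substitution lemma to manage freshness side-conditions. Your partition into easy and hard cases and the treatment of label clashes match the intended standard argument.
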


Since the sequents in our definition consists of sets, the
admissibility of contraction is trivial and we do not state it as a
lemma here. The cut-elimination proof here is an adaptation of that
of~\cite{Hou:Labelled15}. The proof in our case is simpler, as our cut rule
does not split context, and our inference rules do not involve
explicit label substitutions. We hence state the theorem here without
proof:

\begin{theorem}[Cut-elimination for $\lsg$]
For any instance of $\lsg$, if a formula is derivable in that
instance, then it is derivable without using $cut$ in that instance.
\end{theorem}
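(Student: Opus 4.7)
The plan is a standard double induction: an outer induction on the complexity of the cut formula $A$, and an inner induction on the sum of the heights of the two derivations of the premises of the cut instance under consideration. At each step we perform a case analysis on the last rules applied in the left and right premises of the cut. Because contraction is trivial (our sequents are sets), weakening is height-preserving admissible, the logical and structural rules are invertible, and the substitution lemma holds, we have all the machinery needed to push cuts upward and to reduce cuts on complex formulas to cuts on their subformulas.

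The plan is to handle three groups of cases. First, if either premise is the conclusion of a zero-premise rule ($id$, $\bot L$, $\top R$, $\top^*R$, $NEq$), the conclusion of the cut is obtainable directly, possibly using weakening and the fact that the equality side conditions $E(\Gcal)\vdash \cdot$ survive under weakening of $\Gcal$. Second, if the cut formula is non-principal in at least one premise, we permute the cut upward past the last rule in that premise; the height of that premise strictly decreases, so the inner induction hypothesis applies, and we reassemble the derivation by reapplying the rule below. Here we rely on invertibility for the rules that repeat the principal formula in their premises ($\mand R$ and $\mimp L$), and on the fact that the synthesised general structural rules neither modify labelled formulae nor substitute into the context, so permutation past them is routine. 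Third, when the cut formula is principal in both premises, we reduce to cuts on proper subformulas and invoke the outer induction hypothesis; the characteristic case is $z:A\mand B$ cut between a $\mand R$ and a $\mand L$, which, after using invertibility of $\mand L$ and the substitution lemma to align the fresh labels introduced by $\mand L$ with the witnesses in $\mand R$, reduces to cuts on $A$ and on $B$, plus possibly a cut on $z:A\mand B$ of strictly smaller height (handled by the inner induction, since $\mand R$ retains its principal formula in the premises).

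The main obstacle is precisely the interaction of the outer/inner induction with the rules $\mand R$ and $\mimp L$, whose premises still contain the principal formula. The standard fix, which is already used in the cut-elimination proof for $\lsbbi$ in~\cite{Hou:Labelled15}, is to first use the inner induction hypothesis to eliminate the residual copy of the principal formula in the appropriate premise (this is legal because we cut at the same formula but with strictly smaller total height), and only then apply the outer induction to the cuts on the immediate subformulas $A$ and $B$. A secondary bookkeeping point is that the equality side conditions attached to $id$, $\top^*R$, $\mand R$, $\mimp L$ and to the synthesised structural rules must be preserved along the way; this follows from the substitution lemma for equality judgments, since the transformations used during permutation either leave $\Gcal$ unchanged or enlarge it.

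Because our cut rule does not split contexts and no rule of $\lsg$ performs an explicit substitution on the sequent, the bureaucracy of the previous proof for $\lsbbi$ disappears: every reduction is local, and the same argument goes through uniformly for every instance of $\lsg$, regardless of which synthesised structural rules are present. We therefore expect no difficulty in specialising the argument of~\cite{Hou:Labelled15} to the present setting, and the theorem follows.
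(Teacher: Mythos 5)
Your proposal is correct and takes essentially the approach the paper intends: the paper states this theorem without a detailed proof, remarking only that it is an adaptation of the cut-elimination argument for $\lsbbi$ in~\cite{Hou:Labelled15} (simpler here because the cut rule does not split contexts and the rules carry no explicit substitutions) and that it rests on exactly the lemmas you invoke --- height-preserving substitution, weakening admissibility, invertibility, and trivial contraction for set-based sequents. Your double induction, including the use of the inner induction hypothesis to discharge the retained principal formulae of $\mand R$ and $\mimp L$ and the monotonicity of the equality judgments $E(\Gcal)\vdash s=t$ under enlargement of $\Gcal$, is precisely the standard filling-in of that omitted argument (the paper also observes that cut-admissibility follows indirectly from the cut-free completeness proof, a semantic route you do not need).
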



\section{Counter-model construction for $\lsg$}
\label{sec:counter_model_constr}


We now give a counter-model construction procedure that works for all
finite instances (systems with finite rules) of the general proof system $\lsg$,
and hence establishes their completeness.

As the counter-model construction involves infinite sets and sequents,
we extend the definition of equality judgment:

\begin{definition}
Given a (possibly infinite) set $\Gcal$ of relational atoms,
the judgment 
$E(\Gcal)\vdash x = y$ holds iff 
$E(\Gcal_f)\vdash x = y$ holds for some finite $\Gcal_f\subseteq \Gcal$.
\end{definition}

Given a set $\Gcal$ of relational atoms, we define the relation
$=_\Gcal$ as follows: $a =_\Gcal b$ iff $E(\Gcal) \vdash a = b$.  We
next state a lemma which is an immediate result from our equality
judgment rules and will be useful in our counter-model construction
later:

\begin{lemma}
\label{lm:vdashe_eq}
Given a set $\Gcal$ of relational atoms, the relation $=_\Gcal$ 
is an equivalence relation on the set of labels.
\end{lemma}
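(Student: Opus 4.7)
The plan is to verify each of the three defining properties of an equivalence relation on the set of labels directly from the inference rules that define the equality judgment $E \vdash s = t$ presented earlier in the paper. Since $=_\Gcal$ is defined by $a =_\Gcal b$ iff $E(\Gcal) \vdash a = b$, each property will follow from the corresponding inference rule of the judgment.

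For reflexivity, the axiom scheme $E \vdash s = s$ applies for any $E$, so in particular $E(\Gcal) \vdash a = a$ holds for every label $a$, giving $a =_\Gcal a$. For symmetry, if $a =_\Gcal b$ then $E(\Gcal) \vdash a = b$ by definition, and a single application of the symmetry rule yields $E(\Gcal) \vdash b = a$, hence $b =_\Gcal a$. For transitivity, if $a =_\Gcal b$ and $b =_\Gcal c$ then $E(\Gcal) \vdash a = b$ and $E(\Gcal) \vdash b = c$, and the transitivity rule immediately gives $E(\Gcal) \vdash a = c$, so $a =_\Gcal c$.

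The only potential subtlety, and the one thing I would make explicit, is that $\Gcal$ may now be infinite, so the judgment $E(\Gcal) \vdash x = y$ is defined via the existence of a finite witness $\Gcal_f \subseteq \Gcal$ with $E(\Gcal_f) \vdash x = y$. This is not a real obstacle: reflexivity is witnessed by $\Gcal_f = \emptyset$; symmetry reuses the finite witness already supplied for $a =_\Gcal b$; and transitivity uses the union of the two finite witnesses for $a =_\Gcal b$ and $b =_\Gcal c$, which is again a finite subset of $\Gcal$. Thus the proof is essentially immediate and has no genuine obstacle — the main value of stating it as a lemma is that the equivalence relation $=_\Gcal$ will be used repeatedly in the counter-model construction that follows.
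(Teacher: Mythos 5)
Your proof is correct and matches the paper's intent exactly: the paper states this lemma as an immediate consequence of the reflexivity, symmetry, and transitivity rules of the equality judgment and gives no further argument. Your extra care about the finite-witness definition for infinite $\Gcal$ is sound and is already covered by the paper's earlier remark that $E \vdash s = t$ iff $E' \vdash s = t$ for some finite $E' \subseteq E$.
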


The equivalence relation $=_\Gcal$ partitions $\Lcal$ into
equivalence classes $[a]_\Gcal$ for each label $a \in \Lcal$:
$$[a]_\Gcal = \{ a' \in \Lcal \mid a =_\Gcal a' \}.$$


The counter-model construction is essentially a procedure to saturate
a sequent by applying all backward applicable rules repeatedly. The
aim is to obtain an infinite saturated sequent from which a
counter-model can be extracted.  We first define a list of required
conditions for such an infinite sequent which would allow the
counter-model construction.

\begin{definition}[General Hintikka sequent]
\label{definition:hintikka_seq}
A labelled sequent $\myseq{\Gcal}{\Gamma}{\Delta}$ is a {\em general Hintikka sequent}
if it satisfies the following
conditions for any formulae $A,B$ and any labels $a,b,c,d,e,z\in\Lcal$:
\begin{enumerate} 

\item It is not the case that $a:A\in \Gamma$, $b:A\in\Delta$ and $a =_\Gcal b.$

\item $a : \bot \not \in \Gamma$ and $a: \top \not \in \Delta.$

\item If $a:\top^*\in \Gamma$ then $a =_\Gcal \epsilon.$

\item If $a:\top^* \in\Delta$ then $a \not =_\Gcal \epsilon.$

\item If $a:A\land B\in \Gamma$ then $a:A\in \Gamma$ and $a:B\in \Gamma.$

\item If $a:A\land B\in \Delta$ then $a:A\in \Delta$ or $a:B\in \Delta.$

\item If $a:A\limp B\in \Gamma$ then $a:A\in \Delta$ or $a:B\in\Gamma.$

\item If $a:A\limp B\in \Delta$ then $a:A\in \Gamma$ and $a:B\in\Delta.$

\item If $z:A\mand B\in \Gamma$ then $\exists x,y,z'$ s.t. 
$(x,y\simp z')\in\Gcal$, $z =_\Gcal z'$, $x:A\in\Gamma$ and $y:B\in\Gamma.$

\item If $z:A\mand B\in \Delta$ then $\forall x,y,z'$ 
if $(x,y\simp z')\in\Gcal$ and $z =_\Gcal z'$ then $x:A\in\Delta$ or $y:B\in\Delta.$

\item If $z:A\mimp B\in \Gamma$ then $\forall x,y,z'$ 
if $(x,z'\simp y)\in\Gcal$ and $z =_\Gcal z'$, then $x:A\in\Delta$ or $y:B\in\Gamma.$

\item If $z:A\mimp B\in \Delta$ then $\exists x,y,z'$ s.t. 
$(x,z'\simp y)\in\Gcal$, $z =_\Gcal z'$, $x:A\in\Gamma$ and $y:B\in\Delta.$

\item It is not the case that $a\neq b\in\Gcal$ and $a=_\Gcal b$.

\item Either $a\neq b \in \Gcal$ or $a=_\Gcal b$.


\item Given a frame axiom of the form
  \[
    \forall x_1,\dots,x_m. (s_1 = t_1 \,\& \cdots \&\, s_p = t_p 
   \,\&\, S_1 \,\& \cdots \&\, S_k \,\Rightarrow\, \exists y_1, \dots, y_n. (T_1 \,\& \cdots \&\, T_l))
  \]
  for any labels $a_1,\cdots,a_m\in\Lcal$ and substitution $\theta = [a_1/x_1,\cdots,a_m/x_m]$, 
  if $\bigcup_{1\leq i \leq k} \{S_i\theta\} \subseteq \Gcal$ and $s_i\theta =_\Gcal t_i\theta$ 
  for $1\leq i \leq p$,
  then there exist $b_1,\cdots, b_n\in\Lcal$ and substitution $\sigma = [b_1/y_1, \dots, b_n/y_n]$
  such that 
  $\bigcup_{1\leq i\leq l} \{(T_i\theta)\sigma\} \subseteq \Gcal$.




\end{enumerate}
\end{definition}

In condition 15, the variables $x_1,\cdots,x_m$ and $y_1,\cdots, y_n$
are schematic variables, i.e., symbols that belong to the
metalanguage, and the substitutions $\theta$ and $\sigma$ replace
these schematic variables with labels. Since $\theta$ and $\sigma$
have disjoint domains, we have that $(x_i\theta)\sigma = x_i\theta$
and $(y_i\theta)\sigma = y_i\sigma$. These will be useful in the
proofs below.

We are often interested in some particular Hintikka sequents that
correspond to certain frame axioms. Given a set $\Frcal$ of frame
axioms, a \emph{$\Frcal$-Hintikka sequent} is an instance of the
general Hintikka sequent where condition 15 holds for each frame axiom
in $\Frcal$. We say a Kripke frame satisfies $\Frcal$ when every frame
axiom in $\Frcal$ is satisfied by the Kripke frame.

Next we show a parametric Hintikka lemma: a Hintikka sequent
parameterised over a set of frame axioms gives a Kripke relational
frame where the set of frame axioms are satisfied and the formulae in
the right hand side of the sequent are false.

\begin{lemma}
\label{lem:hintikka_sat}
Given a set $\Frcal$ of frame axioms, every $\Frcal$-Hintikka
sequent is falsifiable in some Kripke frame satisfying $\Frcal.$ 
\end{lemma}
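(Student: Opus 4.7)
The plan is to construct a canonical Kripke relational model directly from the Hintikka sequent, with worlds being the equivalence classes $[a]_\Gcal$ (well-defined by Lemma~\ref{lm:vdashe_eq}), and then prove a truth lemma asserting that $\Gamma$-labelled formulae are forced at the corresponding classes while $\Delta$-labelled formulae are refuted, thereby falsifying the sequent. Separately, one verifies that the resulting Kripke frame satisfies every axiom in $\Frcal$; for this, condition~15 of Definition~\ref{definition:hintikka_seq} does the heavy lifting.

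Concretely, I would take the extended model $\Mcal = (H, R, [\epsilon]_\Gcal, \val, \rho)$ where $H = \{[a]_\Gcal \mid a \in \Lcal\}$, $\rho(a) = [a]_\Gcal$, $\val(p) = \{[a]_\Gcal \mid a : p \in \Gamma\}$ (consistent with the succedent by condition~1), and $R([a]_\Gcal, [b]_\Gcal, [c]_\Gcal)$ iff some representatives $a' \in [a]_\Gcal$, $b' \in [b]_\Gcal$, $c' \in [c]_\Gcal$ satisfy $(a', b' \simp c') \in \Gcal$. The truth lemma---if $a : A \in \Gamma$ then $\Mcal, [a]_\Gcal \Vdash A$, and if $a : A \in \Delta$ then $\Mcal, [a]_\Gcal \not\Vdash A$---then follows by induction on $A$: the constants and additive connectives are handled directly by conditions~2--8, and the multiplicative connectives $\mand$ and $\mimp$ by conditions~9--12, using the fact that $z =_\Gcal z'$ becomes $[z]_\Gcal = [z']_\Gcal$ in the quotient, so that the ternary atoms in $\Gcal$ supply both the positive witnesses for antecedent occurrences and the negative constraints for succedent occurrences. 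Since $R$ agrees with $\Gcal$ by construction, this also gives $R(\rho(a), \rho(b), \rho(c))$ for every $(a, b \simp c) \in \Gcal$, yielding the required falsification.

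The principal obstacle is showing that $(H, R, [\epsilon]_\Gcal)$ satisfies every $F \in \Frcal$, because condition~15 requires the syntactic membership $S_i\theta \in \Gcal$, whereas the quotient frame only provides information up to $=_\Gcal$. Given an assignment $x_i \mapsto [a_i]_\Gcal$ under which the premises of $F$ hold in the quotient, for each ternary atom $S_i = (u, v \simp w)$ I would unfold $R$ at the relevant classes to extract concrete representatives $u', v', w'$ with $(u', v' \simp w') \in \Gcal$, and for each inequality $S_i = u \neq v$ I would invoke condition~14 to lift the quotient inequality to $u\theta \neq v\theta \in \Gcal$. Because every variable in $\bigcup_i S_i$ occurs exactly once (a stipulation on frame axioms), these independent choices can be assembled into a single substitution $\theta'$ with $x\theta' =_\Gcal x\theta$ for every $x$ and $S_i\theta' \in \Gcal$ for every $i$; substitutivity of $=_\Gcal$ then propagates each equality premise from $\theta$ to $\theta'$. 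Applying condition~15 to $\theta'$ yields labels $b_1, \ldots, b_n$ and a substitution $\sigma$ such that each $T_i\theta'\sigma \in \Gcal$, and the classes $[b_j]_\Gcal$ supply the existential witnesses required by the conclusion of $F$ in the quotient (using conditions~13--14 to read equalities and inequalities among the $T_i$ back into the quotient frame). Combined with the truth lemma, this exhibits a Kripke frame satisfying $\Frcal$ in which the Hintikka sequent is falsified.
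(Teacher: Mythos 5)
Your proposal is correct and follows essentially the same route as the paper's proof: the same quotient model over $=_\Gcal$-classes, the same truth lemma by induction using conditions~1--12, and the same key manoeuvre of repairing the substitution atom-by-atom (using the single-occurrence restriction and conditions~13--14) so that condition~15 can be applied syntactically and its witnesses read back into the quotient frame.
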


\begin{proof}
  Let $\myseq{\Gcal}{\Gamma}{\Delta}$ be an arbitrary
  $\Frcal$-Hintikka sequent. We construct an extended model $\Mcal
  = (H,R,\epsilon_\Gcal,v,\rho)$ as follows:
  \begin{itemize}
  \item $H = \{[a]_\Gcal~\mid~a\in\Lcal\};$
  \item $R([a]_\Gcal,[b]_\Gcal,[c]_\Gcal)$ iff $\exists
    a',b',c'$ s.t. $(a',b'\simp c')\in\Gcal, a =_\Gcal a',b =_\Gcal
    b', c=_\Gcal c';$
  \item $\epsilon_\Gcal = [\epsilon]_\Gcal;$
  \item $v(p) = \{[a]_\Gcal~\mid~a:p\in\Gamma\}$ for every $p\in Var$;
  \item $\rho(a) = [a]_\Gcal$ for every $a\in\Lcal$.
  \end{itemize}
  To reduce clutter, we shall drop the subscript $\Gcal$ in
  $[a]_\Gcal$.

  We first show that the $\Frcal$-Hintikka sequent
  $\myseq{\Gcal}{\Gamma}{\Delta}$ gives rise to a Kripke relational
  frame that satisfies the frame axioms in $\Frcal$.
  Take an arbitrary frame axiom $F \in \Frcal$ of form
   \[
    \forall x_1,\dots,x_m. (s_1 = t_1 \,\& \cdots \&\, s_p = t_p 
   \,\&\, S_1 \,\& \cdots \&\, S_k \,\Rightarrow\, \exists y_1, \dots, y_n. (T_1 \,\& \cdots \&\, T_l))
  \]
  We have to show that the frame axiom $F$ above is true in the
  first-order model $(H, I)$, where $I$ is the interpretation function
  such that $\epsilon^I = [\epsilon]_\Gcal$ and $\simp^I = R$.
  That is, for an arbitrary first-order valuation
  $\mu = \{x_1 \mapsto [a_1], \dots, x_m \mapsto [a_m] \}$, we assume
  that $s_i = t_i$ for $1\leq i \leq q$ and $S_i$ holds for
  $1\leq i \leq k$, in the first-order interpretation. We then show
  that $\mu$ can be extended with some valuation for the variables
  $y_1,\cdots,y_n$ such that under the new valuation, 
  $T_i$ holds $1\leq i \leq l$ in the first-order interpretation.

  We construct a series of substitutions as follows:
  \begin{itemize}
  \item
    We start with the substitution $\theta = [a_1/x_1,\cdots,
      a_n/x_n]$. We have that for each $1\leq i \leq p$, $s_i\theta
    =_\Gcal t_i\theta$ from the assumption that $s_i = t_i$ holds
    under $\mu$.

  \item
    For each $1\leq i \leq k$, if $S_i$ has the form $x_u \neq x_v$
    where $u,v\in\{1,\cdots,m\}$, then by assumption, $[a_u] \neq
    [a_v]$ holds. By condition (13) and (14) in
    Definition~\ref{definition:hintikka_seq}, either $x_u\theta
    =_\Gcal x_v\theta$ or $x_u\theta\neq x_v\theta \in \Gcal$. Since
    the former contradicts with the assumption, we have $x_u\theta
    \neq x_v\theta \in\Gcal$.

  \item
    For each $1\leq i \leq k$, if $S_i$ has the form $(x_u,x_v\simp
    x_w)$ where $u,v,w\in\{1,\cdots, m\}$, then by assumption,
    $R([a_u],[a_v],[a_w])$ holds. This means that there are some
    $a_u',a_v',a_w'$ such that $a_u =_\Gcal a_u'$, $a_v =_\Gcal a_v'$,
    $a_w =_\Gcal a_w'$, and $(a_u',a_v'\simp a_w')\in\Gcal$. We
    construct a new substitution $\theta'$ where $\theta'$ is the same
    as the previous substitution $\theta_0$ except $\theta'$ replaces
    $x_u,x_v,x_w$ respectively with $a_u',a_v',a_w'$. The following
    facts hold under the new substitution $\theta'$: (1)
    $S_i\theta'\in\Gcal$. (2) For any $1\leq q\leq p$, $s_q\theta_0
    =_\Gcal t_q\theta_0$ implies $s_q\theta'=_\Gcal t_q\theta'$. If
    neither of $s_q$ and $t_q$ are one of $x_u,x_v,x_w$, then the
    equation is syntactically the same as before. Otherwise, if, for
    example, $s_q$ is $x_u$, then it is replaced by $a_u'$ under
    $\theta'$. But since $a_u' =_\Gcal a_u$, by transitivity of
    $=_\Gcal$ (Lemma~\ref{lm:vdashe_eq}), we obtain the equivalence in
    the first-order interpretation. (3) Since we assume that each
    label variable only occurs once in $\bigcup_{1\leq i \leq k} S_i$,
    and that the constant $\epsilon$ does not occur in ternary
    relations, for any $1\leq q \leq k$ such that $q\neq i$,
    $S_q\theta'$ is syntactically equivalent to $S_q\theta_0$,
    therefore if $S_q\theta_0\in\Gcal$ then $S_q\theta'\in\Gcal$.
  \end{itemize}

  Suppose we start with $\theta$ and iteratively construct a new
  substitution as the last case of the above, and obtain a final
  substitution $\theta''$. It is easy to establish that for each
  $1\leq i \leq p$, $s_i\theta'' =_\Gcal t_i\theta''$, and for each
  $1\leq i \leq k$, $S_i\theta''\in\Gcal$. Therefore by condition (15)
  of Definition~\ref{definition:hintikka_seq}, there exist
  $b_1,\cdots,b_n\in\Lcal$ and a substitution $\sigma =
  [b_1/y_1,\cdots,b_n/y_n]$ such that $\bigcup_{1\leq i\leq l}
  \{(T_i\theta'')\sigma\} \subseteq \Gcal$.  Note also that $x_i\theta
  =_\Gcal x_i\theta''$ for each $1\leq i \leq m$. Now we show that for
  each $1\leq i \leq l$, $T_i$ holds in the first-order model. This is
  done by extending the first-order valuation $\mu$ to $\mu'$, where
  $\mu'$ is the same as $\mu$ except $\mu'$ maps each $y_i$ to $[b_i]$
  for $1\leq i\leq n$.  We consider the following cases depending on
  the shape of $T_i$:

  \begin{itemize}
  \item
    $T_i$ has the form $(w = w')$.
    \begin{itemize}
    \item
      Suppose further that $w$ is $x_u$ and $w'$ is $x_v$ for some
      $u,v\in\{1,\cdots,m\}$. We need to show that $x_u = x_v$ under
      the valuation $\mu'$. Since $\mu'$ only differs from $\mu$ in
      the mappings of $\{y_1,\cdots,y_n\}$, we only need to show that
      $[a_u] = [a_v]$. Since we have $(T_i\theta'')\sigma \in \Gcal$,
      we know that $(x_u\theta'')\sigma =_\Gcal (x_v\theta'')\sigma$,
      which means $x_u\theta'' =_\Gcal x_v\theta''$. By the
      construction of $\theta''$, $x_u\theta'' =_\Gcal x_u\theta =
      a_u$ and $x_v\theta'' =_\Gcal x_v\theta = a_v$ Thus $[a_u] =
      [a_v]$ holds.


    \item
      Suppose $w$ is $x_u$ for some $1\leq u\leq m$ and $w'$ is $y_v$ for
      some $1\leq v \leq n$. We show that $x_u = y_v$ under the
      valuation $\mu'$, which means $[a_u] = [b_v]$. Since
      $(T_i\theta'')\sigma \in \Gcal$, we have $(x_u\theta'')\sigma
      =_\Gcal (y_v\theta'')\sigma$, which equals $x_u\theta'' =_\Gcal
      y_v\sigma$. Since $x_u\theta'' =_\Gcal x_u\theta = a_u$, we have
      $[a_u] = [b_v]$.

    \item
      If $w\in\{y_1,\cdots,y_n\}$ and $w'\in\{x_1,\cdots,x_m\}$, the
      case is symmetric to the above case.

    \item
      Suppose $w$ is $y_u$ and $w'$ is $y_v$ for some
      $u,v\in\{1,\cdots,n\}$. We need to show that $y_u = y_v$ under
      the valuation $\mu'$, which means $[b_u] = [b_v]$. Since we have
      $(T_i\theta'')\sigma \in \Gcal$, we know that $(y_u\theta'')\sigma
      =_\Gcal (y_v\theta'')\sigma$, which means $y_u\sigma =_\Gcal
      y_v\sigma$. Thus $[b_u] = [b_v]$ holds.
      
    \end{itemize}

  \item
    $T_i$ has the form $(w \neq w')$. This case is similar to the
    above case.

  \item
    $T_i$ has the form $(w,w'\simp w'')$.
    \begin{itemize}
    \item
      Suppose $w$ is $x_t$, $w'$ is $x_u$, $w''$ is $x_v$, for some
      $t,u,v \in\{1,\cdots,m\}$. We need to show that $R(x_t,x_u,x_v)$
      holds under the valuation $\mu'$, which is equivalent to showing
      $R([a_t],[a_u],[a_v])$. We already have that
      $(T_i\theta'')\sigma\in\Gcal$, that is, $(x_t\theta'',
      x_u\theta''\simp x_v\theta'')\in\Gcal$. By the construction of
      $\theta''$, we have $x_t\theta'' =_\Gcal x_t\theta = a_t$,
      $x_u\theta'' =_\Gcal x_u\theta = a_u$, and $x_v\theta'' =_\Gcal
      x_v\theta = a_v$. Thus the goal holds.

    \item
      If any of $w,w',w''$ is in $\{y_1,\cdots,y_n\}$, say $w$ is
      $y_u$ for some $1\leq u \leq n$, then $(y_u\theta'')\sigma = b_u$
      and $y_u$ is mapped to $[b_u]$ under $\mu'$. These cases are
      similar to the above analysis.
      
    \end{itemize}
  \end{itemize}

  This concludes the part of the proof which shows that the extended
  model $\Mcal$ is indeed a model based on a Kripke relational frame
  satisfying $\Frcal$.  We prove next that
  $\myseq{\Gcal}{\Gamma}{\Delta}$ is false in $\Mcal.$ We need to show
  the following where $\rho(m) = [m]$:
\begin{enumerate}
\item If $(a, b \simp c) \in \Gcal$ then 
$([a], [b] \simp_\Gcal [c]).$
\item If $m : A \in \Gamma$ then 
$\rho(m) \Vdash A.$
\item If $m : A \in \Delta$ then
$\rho(m) \not \Vdash A.$
\end{enumerate}
Item (1) follows from the definition of $\simp_\Gcal$. 
We prove (2) and (3) simultaneously by induction on the size of $A$. 

Base cases: when $A$ is an atomic proposition $p$.
\begin{itemize}
\item If $m:p\in\Gamma$ then $[m]\in v(p)$ by definition of $v$, so $[m] \Vdash p.$
\item Suppose $m:p\in\Delta$, but $[m] \Vdash p$. 
Then $m':p\in\Gamma$, for some $m'$ s.t. $m' =_\Gcal m$. 
This violates condition 1 in Definition~\ref{definition:hintikka_seq}. Thus $[m]\not\Vdash p$.
\end{itemize}


Inductive cases: when $A$ is a compound formula we do a case
  analysis on the main connective of $A$.
\begin{itemize}
\item If $m:A\land B \in\Gamma$, by condition 5 in
  Definition~\ref{definition:hintikka_seq}, $m:A\in\Gamma$ and $m:B\in\Gamma$. By
  the induction hypothesis, $[m]\Vdash A$ and $[m]\Vdash B$, thus
  $[m]\Vdash A\land B$.
\item If $m:A\land B \in\Delta$, by condition 6 in
  Definition~\ref{definition:hintikka_seq}, $m:A\in\Delta$ or $m:B\in\Delta$. By
  the induction hypothesis, $[m]\not\Vdash A$ or $[m]\not\Vdash B$,
  thus $[m]\not\Vdash A\land B$.
\item If $m:A\limp B\in\Gamma$, by condition 7 in
  Definition~\ref{definition:hintikka_seq}, $m:A\in\Delta$ or $m:B\in\Gamma$. By
  the induction hypothesis, $[m]\not\Vdash A$ or $[m]\Vdash B$, thus
  $[m]\Vdash A\limp B$.
\item If $m:A\limp B\in\Delta$, by condition 8 in
  Definition~\ref{definition:hintikka_seq}, $m:A\in\Gamma$ and $m:B\in\Delta$. By
  the induction hypothesis, $[m]\Vdash A$ and $[m]\not\Vdash B$, thus
  $[m]\not\Vdash A\limp B$.
\item If $m:\top^*\in\Gamma$ then  $[m] = [\epsilon]$ by condition 2 in
  Definition~\ref{definition:hintikka_seq}. Since
  $[\epsilon]\Vdash\top^*$, we obtain $[m]\Vdash\top^*$.
\item If $m:\top^*\in\Delta$, by condition 3 in
  Definition~\ref{definition:hintikka_seq}, $[m] \not = [\epsilon]$ and then
  $[m]\not\Vdash\top^*$.
\item If $m:A\mand B\in\Gamma$, by condition 9 in
  Definition~\ref{definition:hintikka_seq}, $\exists a,b,m'$ s.t. $(a,b\simp
  m')\in\Gcal$ and $[m] =[m']$ and $a:A\in\Gamma$ and
  $b:B\in\Gamma$. By the induction hypothesis, $[a]\Vdash A$ and
  $[b]\Vdash B$. Thus $[a],[b]\simp_{\Gcal} [m]$ holds and $[m]\Vdash
  A\mand B$.
\item If $m:A\mand B\in\Delta$, by condition 10 in
  Definition~\ref{definition:hintikka_seq}, $\forall a,b,m'$ if $(a,b\simp
  m')\in\Gcal$ and $[m]=[m']$, then $a:A\in\Delta$ or
  $b:B\in\Delta$. By the induction hypothesis, if such $a,b$ exist,
  then $[a]\not\Vdash A$ or $[b]\not\Vdash B$. For any
  $[a],[b]\simp_{\Gcal} [m]$, there must be some $(a',b'\simp
  m'')\in\Gcal$ s.t. $[a] = [a'], [b] = [b'], [m] = [m'']$. Then
  $[a]\not\Vdash A$ or $[b]\not\Vdash B$ therefore
  $[m]\not\Vdash A\mand B$.
\item If $m:A\mimp B\in\Gamma$, by condition 11 in
  Definition~\ref{definition:hintikka_seq}, $\forall a,b,m'$ if $(a,m'\simp
  b)\in\Gcal$ and $[m]=[m']$, then $a:A\in\Delta$ or
  $b:B\in\Gamma$. By the induction hypothesis, if such $a,b$ exists,
  then $[a]\not\Vdash A$ or $[b]\Vdash B$. Consider any
  $[a],[m]\simp_{\Gcal} [b]$. There must be some $(a',m''\simp
  b')\in\Gcal$ s.t. $[a] = [a']$, $[m''] = [m]$, and $[b] = [b']$. So $[a] \not\Vdash A$ or $[b]\Vdash B$,
  thus $[m]\Vdash A\mimp B$.
\item If $m:A\mimp B\in\Delta$, by condition 12 in
  Definition~\ref{definition:hintikka_seq}, $\exists a,b,m'$ s.t. $(a,m'\simp
  b)\in\Gcal$ and $[m]=[m']$ and $a:A\in\Gamma$ and $b:B\in\Delta$. By
  the induction hypothesis, $[a]\Vdash A$ and $[b]\not\Vdash B$ and
  $[a],[m]\simp_{\Gcal} [b]$ holds, thus $[m]\not\Vdash A\mimp B$. 
\end{itemize}

This concludes the proof.
\end{proof}

To prove the completeness of an arbitrary finite instance of $\lsg$,
we have to show that any given unprovable sequent can be extended to a
Hintikka sequent with the corresponding conditions. To do so we need a
way to enumerate all possible backwards applicable rules in a fair way
so that every rule will be chosen infinitely often.  Traditionally,
this is achieved via a fair enumeration strategy of every principal
formula of every rule.  Since our calculus may contain structural
rules with no principal formulae, we need to include them in the
enumeration strategy as well. For this purpose, we define a notion of
{\em extended formulae}, given by the grammar:
\begin{center}
$ExF ::= F~|~\EMbb~|~\GSbb_1~|\cdots|~\GSbb_q$
\end{center}
where $F$ is a formula, and $\GSbb_1,\cdots,\GSbb_q$ are ``dummy'' constant
principal formulae corresponding to each structural rule respectively, and
$\EMbb$ is for the rule $EM$.

Let $q$ be the number of
structural rules, synthesised from $q$ frame axioms. In those frame
axioms, let $k_{max}$ be the largest number of relational atoms on the left hand side
(i.e., $S$s in the general axiom), and $n_{max}$ be the largest number of
existentially quantified variables (i.e., $y$'s in the general axiom).
A scheduler enumerates each combination of left or right of
turnstile, a label, an extended formula and at most two relational
atoms infinitely often.

\begin{definition}[Scheduler $\phi$]
\label{definition:fair}
A {\em schedule} is a tuple $(O, m, \exfml{F}, \Rcal)$, where $O$ is
either $0$ (left) or $1$ (right), $m$ is a label, $\exfml{F}$ is an
extended formula and $\Rcal$ is a set of relational atoms such that $|\Rcal|
\leq k_{max}$. Let $\Scal$
denote the set of all schedules.  A {\em scheduler} is a function from
natural numbers $\Ncal$ to $\Scal.$ 
A scheduler $\phi$ is {\em fair} if 
for every schedule $S$, the set $\{i \mid \phi(i) = S \}$ is infinite.
\end{definition}

\begin{lemma}
There exists a fair scheduler.
\end{lemma}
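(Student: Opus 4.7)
The plan is to show that the set $\Scal$ of schedules is countable, and then to use a standard dovetailing argument to build a function $\phi: \Ncal \to \Scal$ that hits every element of $\Scal$ infinitely often.

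First I would verify that $\Scal$ is countable. The orientation $O \in \{0,1\}$ is finite; the set $\Lcal$ of labels is countable, since $\LVar$ is (by assumption) countable and we adjoin only the constant $\epsilon$; the set of $\psl$-formulae is countable, being freely generated from a countable set of propositional variables by finitely many connectives, and the set of extended formulae adds only finitely many dummy constants $\EMbb, \GSbb_1, \dots, \GSbb_q$; finally, the set of ternary relational atoms over $\Lcal$ together with equality and inequality atoms is countable, so the set of its subsets of size at most $k_{max}$ is countable. A finite product of countable sets is countable, hence $\Scal$ is countable.

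Next I would fix an enumeration $S_0, S_1, S_2, \dots$ of $\Scal$ (possibly with repetitions if $\Scal$ happens to be finite, but in any case surjective). I would then use a standard pairing function $\pi : \Ncal \to \Ncal \times \Ncal$, for instance Cantor's pairing, and define
\[
\phi(n) = S_{\pi_1(n)}
\]
where $\pi_1$ is the first projection of $\pi^{-1}$. Since for each fixed $k \in \Ncal$ the set $\{ n \mid \pi_1(n) = k \}$ is infinite, the set $\{ n \mid \phi(n) = S_k\}$ is infinite as well, and by surjectivity of the enumeration every schedule $S \in \Scal$ arises as some $S_k$. Hence $\phi$ is fair in the sense of Definition~\ref{definition:fair}.

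There is no real obstacle here: the argument is purely set-theoretic and independent of the specific structure of $\psl$ or of the particular frame axioms in the instance of $\lsg$ under consideration. The only point that deserves a line of justification is the countability of the underlying label, formula and relational-atom sets, which follows immediately from their inductive finite-branching definitions.
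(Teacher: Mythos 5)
Your proof is correct and follows essentially the same route as the paper: the paper likewise reduces the lemma to the countability of $\Scal$ (as a finite product of countable sets) and then appeals to the fair-strategy construction of Larchey-Wendling, which is exactly the dovetailing-via-pairing argument you spell out explicitly. The only difference is that you make the enumeration and the pairing function explicit where the paper cites prior work, which is a harmless (arguably helpful) elaboration.
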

\begin{proof}
Our proof is similar to the proof of \textit{fair strategy} of 
Larchey-Wendling~\cite{wendling2012}. To adapt their proof, 
we need to show
that the set $\Scal$ is countable. 
This follows from the fact that $\Scal$ is a finite product of countable sets. 
\end{proof}
From now on, we shall fix a fair scheduler, which we call 
$\phi$. 
We assume that the set $\Lcal$ of labels 
is totally ordered, and its elements can be enumerated
as $a_0,a_1,a_2,\ldots$ where $a_0 = \epsilon.$
This indexing 
is used
to select fresh labels in our construction of Hintikka sequents.

We say the formula $F$ is not cut-free provable in a finite instance
of $\lsg$ if for an arbitrary label $w \not = \epsilon$,
the sequent $\vdash w : F$ is not cut-free derivable in
that instance of $\lsg$. Since we
shall be concerned only with cut-free provability, in the following
when we mention derivation, we mean cut-free derivation.

For a structural rule obtained from a frame axiom of the usual
form~\eqref{eq:ax} we say the structural rule is \emph{backwards
  applicable} on a sequent $\myseq{\Gcal}{\Gamma}{\Delta}$ iff there
is a set $\Gcal'\subseteq\Gcal$ of non-equality relational atoms that
matches the schema $S_1,\cdots,S_k$, and $E(\Gcal)\vdash s_1 =
t_1,\cdots,E(\Gcal)\vdash s_p = t_p$ holds.

\begin{definition}
\label{definition:sequent-series}
Let $F$ be a formula which is not provable in an instance of $\lsg$.
We construct a series of finite sequents
$\langle \myseq{\Gcal_i}{\Gamma_i}{\Delta_i} \rangle_{i \in \Ncal}$
from $F$ where $\Gcal_1 = \Gamma_1 = \emptyset$ and $\Delta_1 =
a_1:F$.

Assuming that $\myseq{\Gcal_i}{\Gamma_i}{\Delta_i}$ has been defined,
we define $\myseq{\Gcal_{i+1}}{\Gamma_{i+1}}{\Delta_{i+1}}$ as
follows.
Suppose $\phi(i) = (O_i,m_i,\exfml{F}_i,\Rcal_i).$ Recall that $n_{max}$ is
the maximum number of existentially quantified variables in the frame
axioms, take $\alpha$ as $max(n_{max},2)$.

\begin{itemize}

\item Case $O_i = 0$, $\exfml{F}_i$ is a $\psl$-formula $C_i$ and
  $m_i:C_i\in\Gamma_i$:
\begin{itemize}
\item $C_i = F_1\land F_2$: then $\Gcal_{i+1} = \Gcal_i$,
  $\Gamma_{i+1} = \Gamma_i\cup\{m_i:F_1,m_i:F_2\}$, $\Delta_{i+1} =
  \Delta_i$.
\item $C_i = F_1\limp F_2$: if there is no derivation for
  $\myseq{\Gcal_i}{\Gamma_i}{m_i:F_1;\Delta_i}$ then $\Gamma_{i+1} =
  \Gamma_i$, $\Delta_{i+1} = \Delta_i\cup\{m_i:F_1\}$. Otherwise
  $\Gamma_{i+1} = \Gamma_i\cup\{m_i:F_2\}$, $\Delta_{i+1} =
  \Delta_i$. In both cases, $\Gcal_{i+1} = \Gcal_i$.
\item $C_i = \top^*$: then
  $\Gcal_{i+1} = \Gcal_i\cup\{(m_i= \epsilon)\}$,
  $\Gamma_{i+1} =
  \Gamma_i$, $\Delta_{i+1} = \Delta_i$.
\item $C_i = F_1\mand F_2$: then $\Gcal_{i+1} =
  \Gcal_i\cup\{(a_{\alpha i},a_{\alpha i+1}\simp m_i)\}$ and
  $\Gamma_{i+1} = \Gamma_i\cup\{a_{\alpha i}:F_1,a_{\alpha i+1}:F_2\}$,
  where $a_{\alpha i},a_{\alpha i+1}$ are fresh labels, and
  $\Delta_{i+1} = \Delta_i$.
\item $C_i = F_1\mimp F_2$ and $\Rcal_i = \{(x,m\simp  y)\}\subseteq\Gcal_i$ and $E(\Gcal_i)\vdash (m = m_i)$:
  if $\myseq{\Gcal_i}{\Gamma_i}{x:F_1;\Delta_i}$ has no derivation, then
  $\Gamma_{i+1} = \Gamma_i$, $\Delta_{i+1} =
  \Delta_i\cup\{x:F_1\}$. Otherwise $\Gamma_{i+1} =
  \Gamma_i\cup\{y:F_2\}$, $\Delta_{i+1} = \Delta_i$. In both cases,
  $\Gcal_{i+1} = \Gcal_i$.
\end{itemize}

\item Case $O_i = 1$, $\exfml{F}_i$ is a $\psl$-formula $C_i$, and
  $m_i:C_i\in\Delta$:
\begin{itemize}
\item $C_i = F_1\land F_2$: if there is no derivation for
  $\myseq{\Gcal_i}{\Gamma_i}{m_i:F_1;\Delta_i}$ then $\Delta_{i+1} =
  \Delta_i\cup\{m_i:F_1\}$. Otherwise $\Delta_{i+1} =
  \Delta_i\cup\{m_i:F_2\}$. In both cases, $\Gcal_{i+1} = \Gcal_i$ and
  $\Gamma_{i+1} = \Gamma_i$.
\item $C_i = F_1\limp F_2$: then $\Gamma_{i+1} =
  \Gamma\cup\{m_i:F_1\}$, $\Delta_{i+1} = \Delta_i\cup\{m_i:F_2\}$,
  and $\Gcal_{i+1} = \Gcal_i$.
\item $C_i = F_1\mand F_2$ and $\Rcal_i = \{(x,y\simp
  m)\}\subseteq\Gcal_i$ and $E(\Gcal_i)\vdash (m_i = m)$: if
  $\myseq{\Gcal_i}{\Gamma_i}{x:F_1;\Delta_i}$ has no derivation, then
  $\Delta_{i+1} = \Delta_i\cup\{x:F_1\}$. Otherwise $\Delta_{i+1} =
  \Delta_i\cup\{y:F_2\}$. In both cases, $\Gcal_{i+1} = \Gcal_i$ and
  $\Gamma_{i+1} = \Gamma_i$.
\item $C_i = F_1\mimp F_2$: then $\Gcal_{i+1} =
  \Gcal_i\cup\{(a_{\alpha i},m_i\simp a_{\alpha i+1})\}$, $\Gamma_{i+1} =
  \Gamma_i\cup\{a_{\alpha i}:F_1\}$, 
   where $a_{\alpha i},a_{\alpha i+1}$ are fresh labels, and
  $\Delta_{i+1} = \Delta_i\cup\{a_{\alpha i+1}:F_2\}$.
\end{itemize}

\item Case $\exfml{F}_i = \EMbb$, and $\Rcal_i = \{(w,w'\simp w'')\}$ where
  $w,w'\in\{a_0,\cdots,a_{\alpha i + \alpha -1}\}$. If there is no derivation for $\myseq{w = w';\Gcal_i}{\Gamma_i}{\Delta_i}$, then $\Gcal_{i+1} = \Gcal_i\cup\{w = w'\}$. Otherwise $\Gcal_{i+1} = \Gcal_i\cup\{w\neq w'\}$. In both cases, $\Gamma_{i+1} = \Gamma_i$ and $\Delta_{i+1} = \Delta_i$.

\item Case $\exfml{F}_i \in\{\GSbb_1,\cdots,\GSbb_q\}$. Assume
  without loss of generality that $\exfml{F}_i$ is $\GSbb_j$ for some $1\leq j \leq q$
  and $\GSbb_j$ represents a structural rule that corresponds to the
  frame axiom
    \[
    \forall x_1,\dots,x_m. (s_1 = t_1 \,\& \cdots \&\, s_p = t_p 
   \,\&\, S_1 \,\& \cdots \&\, S_k \,\Rightarrow\, \exists y_1, \dots, y_n. (T_1 \,\& \cdots \&\, T_l))
   \]
  Suppose we can find some substitution $\theta$ such that
  $\bigcup_{1\leq i \leq k} \{S_i\theta\} = \Rcal_i \subseteq
  \Gcal_i$. Also suppose that for each $1\leq u \leq p$,
  $E(\Gcal_i)\vdash s_u\theta = t_u\theta$. We create $n$ fresh labels
  $a_{\alpha i},a_{\alpha i+1},\cdots,a_{\alpha i+n-1}$ and a
  substitution $\sigma = [a_{\alpha i}/y_1,\cdots,a_{\alpha
      i+n-1}/y_n]$. Let $\Gcal_{i+1} =
  \Gcal_i\cup\{(T_1\theta)\sigma,\cdots,(T_{l}\theta)\sigma\}$,
  $\Gamma_{i+1} = \Gamma_i$ and $\Delta_{i+1} = \Delta_i$.  Note that
  finding such a $\theta$ is computable since $S_i$ and $\Rcal_i$ are
  given. This is a simple case of the ACUI unification
  problem~\cite{Robinson2001}.
  
\item In all other cases, $\Gcal_{i+1} = \Gcal_i$, $\Gamma_{i+1} =
  \Gamma_i$ and $\Delta_{i+1} = \Delta_i$.
\end{itemize}
\end{definition}

Intuitively, each tuple $(O_i, m_i, \exfml{F_i}, \Rcal_i)$ corresponds
to a potential (backwards)  rule application. If the components of the rule
application are in the current sequent, we apply the corresponding
rule to these components to obtain the new premises.  The indexing of labels guarantees that the
choice of $a_{\alpha i},\cdots,a_{\alpha i+\alpha-1}$ are always fresh
for the sequent $\myseq{\Gcal_i}{\Gamma_i}{\Delta_i}$.  The
construction in Definition~\ref{definition:sequent-series}
non-trivially extends a similar construction of Hintikka Constrained
Set of Statements due to~\cite{wendling2012}, in addition to which we
have to consider the cases for structural rules.

We say $\myseq{\Gcal'}{\Gamma'}{\Delta'} \subseteq
\myseq{\Gcal}{\Gamma}{\Delta}$ iff $\Gcal'\subseteq\Gcal$,
$\Gamma'\subseteq\Gamma$ and $\Delta'\subseteq\Delta$. A labelled
sequent $\myseq{\Gcal}{\Gamma}{\Delta}$ is \textit{finite} if
$\Gcal,\Gamma,\Delta$ are finite sets. Define $\myseq{\Gcal'}{\Gamma'}{\Delta'}
\subseteq_f \myseq{\Gcal}{\Gamma}{\Delta}$ iff
$\myseq{\Gcal'}{\Gamma'}{\Delta'} \subseteq \myseq{\Gcal}{\Gamma}{\Delta}$ and
$\myseq{\Gcal'}{\Gamma'}{\Delta'}$ is finite. If
$\myseq{\Gcal}{\Gamma}{\Delta}$ is a finite sequent, it is
\textit{non-provable} iff it does not have a derivation in
the instance of $\lsg$.
A (possibly infinite) sequent $\myseq{\Gcal}{\Gamma}{\Delta}$
is \textit{finitely non-provable} iff every
$\myseq{\Gcal'}{\Gamma'}{\Delta'}\subseteq_f \myseq{\Gcal}{\Gamma}{\Delta}$ is
non-provable.


We write $\Lcal_i$ for the set of labels occurring in the sequent
$\myseq{\Gcal_i}{\Gamma_i}{\Delta_i}$. Thus $\Lcal_1 = \{a_1\}$. 
The following lemma states some properties of the construction of 
the sequents $\myseq{\Gcal_i}{\Gamma_i}{\Delta_i}$.
\begin{lemma}
\label{lm:construction}
For any $i\in\mathcal{N}$, the following properties hold:
\begin{enumerate}
\item $\myseq{\Gcal_i}{\Gamma_i}{\Delta_i}$ is non-provable;
\item $\Lcal_i\subseteq \{a_0, a_1,\cdots,a_{\alpha i-1}\}$;
\item $\myseq{\Gcal_i}{\Gamma_i}{\Delta_i}\subseteq
  \myseq{\Gcal_{i+1}}{\Gamma_{i+1}}{\Delta_{i+1}}$.
\end{enumerate}
\end{lemma}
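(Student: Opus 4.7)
The plan is to prove the three items by induction on $i$. For the base case $i=1$, item~(1) is exactly the hypothesis that $\vdash a_1 : F$ is not cut-free derivable in the chosen instance of $\lsg$; item~(2) holds because $\Lcal_1 = \{a_1\} \subseteq \{a_0, a_1\} \subseteq \{a_0, \ldots, a_{\alpha - 1}\}$, using the fact that $\alpha \geq 2$; item~(3) is vacuous at $i=1$ and will be discharged by the inductive step.

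For the inductive step assume (1)--(3) hold at $i$. Item~(3) is immediate from inspection of Definition~\ref{definition:sequent-series}: every clause either leaves the sequent unchanged or strictly adds to $\Gcal_i$, $\Gamma_i$, or $\Delta_i$. Item~(2) is similarly by inspection: the only fresh labels introduced at step $i$ are drawn from $\{a_{\alpha i}, \ldots, a_{\alpha i + n - 1}\}$ with $n \leq \alpha$, and these lie inside $\{a_0, \ldots, a_{\alpha(i+1) - 1}\}$. Combined with the IH this yields $\Lcal_{i+1} \subseteq \{a_0, \ldots, a_{\alpha(i+1) - 1}\}$, establishing~(2).

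The substantive work is item~(1), to be handled by case analysis on $\phi(i) = (O_i, m_i, \exfml{F}_i, \Rcal_i)$. In cases where the construction leaves the sequent unchanged, (1) is immediate. In single-premise rule cases ($\land L$, $\top^* L$, $\mand L$, $\mimp R$, and the general structural rule $\GSbb_j$), the new sequent is exactly the premise of a backward application of the corresponding rule whose conclusion is (a sub-sequent of) $\myseq{\Gcal_i}{\Gamma_i}{\Delta_i}$; if the new sequent were cut-free derivable, applying the rule downward would yield a derivation of the IH sequent, contradicting the IH. Freshness side-conditions for $\mand L$, $\mimp R$, and $\GSbb_j$ are ensured by the choice of $a_{\alpha i}, \ldots, a_{\alpha i + n - 1}$ combined with item~(2) of the IH. In branching rule cases ($\limp L$, $\land R$, $\mand R$, $\mimp L$), the construction explicitly tests one premise for derivability and then places the formula associated with a non-derivable premise into the new sequent; by soundness of the rule, had both premises been derivable the IH sequent would have been derivable too, so the chosen extension is non-derivable. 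The $\EMbb$ case is strictly analogous: if both the $w = w'$ and the $w \neq w'$ extensions were derivable, an application of $EM$ would discharge the IH sequent, contradicting the IH.

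The principal obstacle I expect is the $\GSbb_j$ case, which must be handled uniformly over every frame axiom in the instance. There one must verify that the substitution $\theta$ witnessing $\bigcup_{1 \leq i \leq k}\{S_i \theta\} = \Rcal_i$, together with the derivable equalities $E(\Gcal_i) \vdash s_u \theta = t_u \theta$, corresponds to a legitimate backward application of the synthesised structural rule, with the fresh labels $a_{\alpha i}, \ldots, a_{\alpha i + n - 1}$ instantiating the existentially quantified variables $y_1, \ldots, y_n$ of the schema~\eqref{eq:ax}. The ACUI-unification remark in Definition~\ref{definition:sequent-series} guarantees such a $\theta$ can actually be computed, and the equality side-conditions are precisely the ones demanded by the synthesised rule; modulo this bookkeeping the step reduces to the single-premise-rule template described in the previous paragraph.
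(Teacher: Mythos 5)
Your proposal is correct and follows essentially the same route as the paper's (much terser) proof: induction on $i$ with a case analysis on which rule the scheduler triggers, using the freshness of $a_{\alpha i},\dots,a_{\alpha i+\alpha-1}$ guaranteed by item~(2) for the side-conditions, and observing that for branching rules and $EM$ the construction always selects a non-derivable premise, so derivability of the new sequent would contradict non-provability of $\myseq{\Gcal_i}{\Gamma_i}{\Delta_i}$. Your treatment of the $\GSbb_j$ case and your purely syntactic phrasing of item~(1) (a derivable premise would yield a derivable conclusion) are, if anything, more explicit than the paper's one-line appeal to upward preservation of falsifiability.
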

\begin{proof}
Item 1 is based on the fact that the inference rules preserve
falsifiability upwards, and we always choose the branch with no
derivation.  To show item 2, we do an induction on $i$. Base case, $i = 1$, 
$\Lcal_1 \subseteq \{a_0, a_1\}$ (recall that $a_0 = \epsilon$).  Inductive cases: suppose item 2 holds for any
$i \leq n$, for $n+1$, we consider four cases depending on which rule
is applied on $\myseq{\Gcal_i}{\Gamma_i}{\Delta_i}$.
\begin{itemize}
\item If $\mand L$ is applied, then $\Lcal_{i+1} =
  \Lcal_i\cup\{a_{\alpha i},a_{\alpha
    i+1}\}\subseteq\{a_1,\cdots,a_{\alpha i+\alpha -1}\}$ since
  $\alpha \leq 2$.
\item If $\mimp R$ is applied, same as above.
\item If a structural rule $\GSbb_j$ is applied, and it generates $n'$ fresh labels.
  By construction, $n'\leq \alpha$,
  thus $\Lcal_{i+1} = \Lcal_i\cup\{a_{\alpha i}, a_{\alpha i + 1},\cdots, a_{\alpha i + n' -1}\}\subseteq\{a_1,\cdots,a_{\alpha i+ \alpha -1}\}$.
\item Otherwise, $\Lcal_{i+1} = \Lcal_i \subseteq \{a_1,\cdots,a_{2i+1}\}$.   
\end{itemize}
Item 3 is obvious from the construction of $\myseq{\Gcal_{i+1}}{\Gamma_{i+1}}{\Delta_{i+1}}.$
\end{proof}

Given the construction of the series of sequents we have just seen above, we define
a notion of a {\em limit sequent}, as the union of 
every sequent in the series.

\begin{definition}[Limit sequent]
\label{definition:lim_seq} 
Let $F$ be a formula unprovable in the instance of $\lsg.$
The {\em limit sequent for $F$} is the sequent 
$\myseq{\Gcal^\omega}{\Gamma^\omega}{\Delta^\omega}$ 
where 
$\Gcal^\omega = \bigcup_{i\in\mathcal{N}}\Gcal_i$,
$\Gamma^\omega = \bigcup_{i\in\mathcal{N}}\Gamma_i$,
and
$\Delta^\omega = \bigcup_{i\in\mathcal{N}}\Delta_i$.
\end{definition}

The following lemma shows that the limit sequent defined above is
indeed an instance of the general Hintikka sequent, thus we can use it to extract a
counter-model.

\begin{lemma}
\label{lem:lim_hintikka}
If $F$ is a formula unprovable in the instance of $\lsg$, then the
limit labelled sequent for $F$ is the instance of the general Hintikka
sequent with the corresponding conditions.
\end{lemma}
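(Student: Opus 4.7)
The plan is to verify each of the fifteen conditions of Definition~\ref{definition:hintikka_seq} for the limit sequent $\myseq{\Gcal^\omega}{\Gamma^\omega}{\Delta^\omega}$. Two ingredients will be used throughout. First, \emph{finite non-provability} of the limit: every $\myseq{\Gcal'}{\Gamma'}{\Delta'} \subseteq_f \myseq{\Gcal^\omega}{\Gamma^\omega}{\Delta^\omega}$ sits inside some $\myseq{\Gcal_n}{\Gamma_n}{\Delta_n}$ by Lemma~\ref{lm:construction}(3), hence is non-provable by Lemma~\ref{lm:construction}(1) together with weakening admissibility. Second, \emph{fairness} of $\phi$: every schedule is hit infinitely often, so every potential backwards rule application with parameters drawn from $\Gcal^\omega, \Gamma^\omega, \Delta^\omega$ will be realised at some stage $i$.

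I would first dispatch conditions 1, 2, 4 and 13 using only finite non-provability: each, if violated, yields a finite sub-sequent immediately derivable by a zero-premise rule ($id$, $\bot L$, $\top R$, $\top^* R$, or $NEq$). For conditions 3, 5--12 and 14 I would combine the two ingredients: for each schedule corresponding to the relevant labelled formula (plus a suitable $\Rcal$ for the multiplicative cases and for $EM$), fairness supplies an index at which the matching case of Definition~\ref{definition:sequent-series} fires and extends $\Gcal_i, \Gamma_i, \Delta_i$ with the witnesses demanded by the condition. For branching clauses of the construction ($\limp L/R$, $\land R$, $\mand R$, $\mimp L$, $EM$), the construction explicitly selects a non-provable premise, which both yields the required disjunct and preserves finite non-provability. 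The freshness side-conditions in $\mand L$ and $\mimp R$ are handled by Lemma~\ref{lm:construction}(2), which guarantees that $a_{\alpha i}, a_{\alpha i+1}$ did not occur earlier.

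The hard step is condition 15. Fix a frame axiom of shape~\eqref{eq:ax} and a substitution $\theta = [a_1/x_1, \ldots, a_m/x_m]$ with $S_i\theta \in \Gcal^\omega$ for $1 \leq i \leq k$ and $s_j\theta =_{\Gcal^\omega} t_j\theta$ for $1 \leq j \leq p$. Because $\Gcal^\omega$ is the union of an increasing chain and each equality witness uses only finitely many equations, there is some $N$ with $S_i\theta \in \Gcal_N$ for all $i$ and $E(\Gcal_N) \vdash s_j\theta = t_j\theta$ for all $j$. Form $\Rcal = \{S_1\theta, \ldots, S_k\theta\}$, which has size at most $k_{max}$ and so fits the schedule format, and let $\GSbb_r$ be the dummy principal formula for the synthesised rule of this axiom. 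By fairness there is some $i \geq N$ with $\phi(i) = (O, m, \GSbb_r, \Rcal)$ for some $O, m$. The structural-rule clause of Definition~\ref{definition:sequent-series} then introduces fresh labels $a_{\alpha i}, \ldots, a_{\alpha i + n - 1}$ and places each $(T_j\theta)\sigma$ into $\Gcal_{i+1} \subseteq \Gcal^\omega$, where $\sigma = [a_{\alpha i}/y_1, \ldots, a_{\alpha i+n-1}/y_n]$. These are precisely the witnesses condition 15 demands.

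The remaining work, which I expect to be routine but delicate, is showing that finite non-provability really is preserved at every step of Definition~\ref{definition:sequent-series}. This is a case analysis on the construction clause: for non-branching cases, invertibility of the $\lsg$ rules (Section~\ref{subsec:lspsl}) guarantees that the extended sequent is still non-provable; for branching cases the construction always chooses a non-provable premise explicitly; and for the structural-rule case the added atoms simply enlarge $\Gcal_i$, so non-provability of $\myseq{\Gcal_{i+1}}{\Gamma_i}{\Delta_i}$ follows from backward applicability of the synthesised rule. Once these verifications are in place, all fifteen conditions hold, making the limit the required $\Frcal$-Hintikka sequent.
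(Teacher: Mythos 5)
Your proposal is correct and follows essentially the same route as the paper's proof: establish finite non-provability of the limit via Lemma~\ref{lm:construction} and weakening admissibility, dispatch conditions 1, 2, 4, 13 from that, and use fairness of $\phi$ together with the clauses of Definition~\ref{definition:sequent-series} for the remaining conditions, including the substitution/fresh-label argument for condition 15. The only cosmetic remark is that preservation of non-provability at each step needs only the forward (soundness) direction of each rule --- if every premise were provable the conclusion would be --- rather than invertibility, and this is already packaged in Lemma~\ref{lm:construction}(1).
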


\begin{proof}
Let $\myseq{\Gcal^\omega}{\Gamma^\omega}{\Delta^\omega}$ be the limit sequent. 
First we show that $\myseq{\Gcal^\omega}{\Gamma^\omega}{\Delta^\omega}$ is finitely non-provable. Consider any
$\myseq{\Gcal}{\Gamma}{\Delta}\subseteq_f \myseq{\Gcal^\omega}{\Gamma^\omega}{\Delta^\omega}$,
we show that $\myseq{\Gcal}{\Gamma}{\Delta}$ has no derivation. Since
$\Gcal,\Gamma,\Delta$ are finite sets, there exists $i\in
\mathcal{N}$ s.t. $\Gcal\subseteq \Gcal_i$, $\Gamma\subseteq
\Gamma_i$, and $\Delta\subseteq \Delta_i$. 
Moreover, from Lemma~\ref{lm:construction} Item 1, $\myseq{\Gcal_i}{\Gamma_i}{\Delta_i}$ is not provable in $\lsg$.
Since weakening is admissible in $\lsg$,
$\myseq{\Gcal}{\Gamma}{\Delta}\subseteq_f
\myseq{\Gcal_i}{\Gamma_i}{\Delta_i}$ cannot be provable either.
So condition 1, 2, 4, 13 in Definition~\ref{definition:hintikka_seq} hold for the limit sequent, 
for otherwise we would be able to construct a provable finite labelled sequent from 
the limit sequent. We show the proofs that the other conditions 
in~Definition~\ref{definition:hintikka_seq} are also satisfied by the
limit sequent. The following cases are numbered according to items in Definition~\ref{definition:hintikka_seq}.
\begin{enumerate}
\setcounter{enumi}{2}

\item If $m:\top^*\in\Gamma^\omega$, then $m:\top^*\in\Gamma_i$, for some
  $i\in\mathcal{N}$, since each labelled formula from $\Gamma^\omega$ must appear somewhere
  in the sequence.
Then there exists $j > i$ such that $\phi(j) = (0,m,\top^*,R)$ where
this formula becomes principal.
By construction, $(m= \epsilon)\in\Gcal_{j+1}\subseteq\Gcal^\omega$. 
So we deduce that $m =_{\Gcal^\omega} \epsilon$.

\setcounter{enumi}{4}

\item If $m:F_1\land F_2\in\Gamma^\omega$, then it is in some 
$\Gamma_i$, where $i\in\mathcal{N}$. Since $\phi$ select the formula infinitely often, 
there is $j > i$ such that $\phi(j) = (0,m,F_1\land F_2,R)$. Then by construction 
$\{m:F_1,m:F_2\}\subseteq \Gamma_{j+1}\subseteq \Gamma^\omega$. 

\item If $m:F_1\land F_2\in\Delta^\omega$, then it is in some $\Delta_i$, where $i\in\mathcal{N}$. 
Since $\phi$ select the formula infinitely often, there is $j > i$ such that
$\phi(j) = (1,m,F_1\land F_2,R)$. Then by construction $m:F_n\in\Delta_{j+1}\subseteq\Delta^\omega$, 
where $n\in\{1,2\}$ and $\myseq{\Gcal_j}{\Gamma_j}{m:F_n;\Delta_j}$ does not have a derivation.

\item If $m:F_1\limp F_2\in\Gamma^\omega$, similar to case 3.

\item If $m:F_1\limp F_2\in\Delta^\omega$, similar to case 2.

\item\label{item:1} If $m:F_1\mand F_2\in\Gamma^\omega$, then it is in some $\Gamma_i$, where $i\in\mathcal{N}$. 
Then there exists $j > i$ such that $\phi(j) = (0,m,F_1\mand F_2,R)$. 
By construction $\Gcal_{j+1} = \Gcal_j\cup\{(a_{2j},a_{2j+1}\simp m)\}\subseteq\Gcal^\omega$, 
and $\Gamma_{j+1} = \Gamma_j\cup\{a_{2j}:F_1,a_{2j+1}:F_2\}\subseteq\Gamma^\omega$. 

\item\label{item:2} If $m:F_1\mand F_2\in\Delta^\omega$, then it is in some $\Delta_i$, where 
$i\in\mathcal{N}$. For any $(x,y\simp m')\in\Gcal^\omega$ such that $m =_{\Gcal^\omega} m'$, 
there exists $j > i$ such that $(x,y\simp m')\in\Gcal_j$ and $m =_{\Gcal_j} m'$. 
Also, there exists $k > j$ such that $\phi(k) = (1,m,F_1\mand
F_2,\{(x,y\simp m')\})$ where the labelled formula becomes principal.
Since $(x,y\simp m')\in\Gcal_k$ and $m =_{\Gcal_k} m'$,
we have either $x : F_1 \in \Delta_{k+1} \subseteq \Delta^\omega$ 
or $y : F_2 \in \Delta_{k+1} \subseteq \Delta^\omega.$

\item If $m:F_1\mimp F_2\in\Gamma^\omega$, similar to case~\ref{item:1}.

\item If $m:F_1\mimp F_2\in\Delta^\omega$, similar to case~\ref{item:2}.





\setcounter{enumi}{13}
  
\item
  For each pair $a_p,a_q\in\Lcal$, assume
  without loss of generality
  that $p\geq q$. Then there is some natural number $j\geq q$
  such that $\phi(j) = (O,m,\EMbb,\{(a_p,a_q\simp m')\})$. Then either (1)
  $\Gcal_{j+1} = \Gcal_j\cup\{a_p = a_q\}$, or (2) $\Gcal_{j+1} =
  \Gcal_j\cup\{a_p\neq a_q\}$, depending on which choice gives a
  finitely non-provable sequent
  $\myseq{\Gcal_{j+1}}{\Gamma_{j+1}}{\Delta_{j+1}}$. If (1) holds,
  then $a_p = a_q\in \Gcal_{j+1}\subseteq \Gcal^\omega$, and obviously
  $E(\Gcal^w) \vdash a_p = a_q$, thus $a_p=_{\Gcal^\omega} a_q$. If
  (2) holds, then $a_p\neq a_q\in\Gcal_{j+1}\subseteq\Gcal^\omega$.

\item For an arbitrary instance $\GSbb_n$ of condition 15, assume
  without loss of generality that this instance is of the form
  \[
    \forall x_1,\dots,x_m. (s_1 = t_1 \,\& \cdots \&\, s_p = t_p 
   \,\&\, S_1 \,\& \cdots \&\, S_k \,\Rightarrow\, \exists y_1, \dots, y_n. (T_1 \,\& \cdots \&\, T_l))
  \]
  Suppose there is a substitution $\theta$ such that for each $1\leq i
  \leq k$, $S_i\theta\in\Gcal^\omega$ and for each $1\leq i\leq p$,
  $s_i =_{\Gcal^\omega} t_i$. Then there must be some natural number
  $j$ such that $\bigcup_{1\leq i \leq k} S_i\theta\subseteq \Gcal_j$
  and for each $1\leq i\leq p, E(\Gcal_j)\vdash s_i = t_i$. There also
  exists $k\geq j$ such that $\phi(k) = (O,m,\GSbb_n,\Rcal)$ where
  $\bigcup_{1\leq i \leq k} S_i\theta = \Rcal\subseteq \Gcal_k$. It is
  obvious that for each $1\leq i\leq p, E(\Gcal_k)\vdash s_i =
  t_i$. By construction, there is some substitution $\sigma$ such that
  $\Gcal_{k+1} = \bigcup_{1\leq i\leq l} (T_i\theta)\sigma \cup
  \Gcal_k$. Therefore the corresponding instance of condition (15)
  holds.
  
  
\end{enumerate}
The above covers all the conditions in Definition~\ref{definition:hintikka_seq}.
\end{proof}


Finally we can state the completeness theorem: whenever a formula has
no derivation in the instance of $\lsg$,  we can extract an infinite
counter-model based on the limit sequent and the Kripke relational
frame with the corresponding conditions.

\begin{theorem}
\label{thm:completeness_lsg}
Every formula $F$ unprovable in the instance of $\lsg$ is not valid in
the Kripke relational models with the corresponding conditions.
\end{theorem}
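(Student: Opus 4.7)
The plan is to observe that the theorem follows essentially as a corollary of Lemma~\ref{lem:lim_hintikka} and Lemma~\ref{lem:hintikka_sat}, which together do all the hard work. Let $F$ be any formula unprovable in the given instance of $\lsg$, and let $\Frcal$ be the (finite) set of frame axioms from which that instance is synthesised. By assumption, for the chosen label $a_1 \neq \epsilon$, the sequent $\vdash a_1 : F$ has no cut-free derivation, so the starting sequent $\myseq{\Gcal_1}{\Gamma_1}{\Delta_1}$ of Definition~\ref{definition:sequent-series} (with $\Gcal_1 = \Gamma_1 = \emptyset$ and $\Delta_1 = \{a_1 : F\}$) is non-provable.

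Next I would apply Definition~\ref{definition:lim_seq} to form the limit sequent $\myseq{\Gcal^\omega}{\Gamma^\omega}{\Delta^\omega}$, using the fair scheduler $\phi$ fixed earlier. Lemma~\ref{lem:lim_hintikka} then says that this limit sequent is an $\Frcal$-Hintikka sequent (the $\Frcal$-specific clause being condition~15 of Definition~\ref{definition:hintikka_seq}, which was verified in the final case of that lemma for exactly the frame axioms from which the instance of $\lsg$ is built).

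Applying Lemma~\ref{lem:hintikka_sat} to this $\Frcal$-Hintikka sequent yields an extended Kripke model $\Mcal = (H, R, \epsilon_{\Gcal^\omega}, v, \rho)$ whose underlying frame satisfies every axiom in $\Frcal$, and in which $\myseq{\Gcal^\omega}{\Gamma^\omega}{\Delta^\omega}$ is falsified. In particular, since $a_1 : F \in \Delta_1 \subseteq \Delta^\omega$, the falsification clause on the right-hand side gives $\Mcal, \rho(a_1) \not\Vdash F$. Hence $F$ is falsified at some world of some Kripke model whose frame satisfies the corresponding conditions, and so $F$ is not valid in that class of models, as required.

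There is no genuine obstacle here beyond the bookkeeping already handled: the only subtle point is lining up the word ``corresponding conditions'' in the theorem statement with the set $\Frcal$ fixed when the instance of $\lsg$ was defined, and checking that the starting label $a_1$ is indeed among the labels $a_0, a_1, a_2, \dots$ enumerated in the construction (which it is, since $a_0 = \epsilon$ and $a_1 \neq \epsilon$). Everything else is just packaging up the two preceding lemmas into a single statement.
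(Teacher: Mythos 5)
Your proposal matches the paper's own proof essentially verbatim: both construct the limit sequent of Definition~\ref{definition:lim_seq}, invoke Lemma~\ref{lem:lim_hintikka} to see it is a Hintikka sequent for the relevant frame axioms, and then apply Lemma~\ref{lem:hintikka_sat} to obtain a falsifying model in which $a_1 : F \in \Delta^\omega$ is false at $\rho(a_1)$. The argument is correct and takes the same route as the paper.
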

\begin{proof}
We construct a limit sequent $\myseq{\Gcal^\omega}{\Gamma^\omega}{\Delta^\omega}$ for $F$
following Definition~\ref{definition:lim_seq}. Note that by the construction of the limit sequent,
we have $a_1 : F \in \Delta^\omega.$ By Lemma~\ref{lem:lim_hintikka},
this limit sequent is a Hintikka sequent, and therefore by Lemma~\ref{lem:hintikka_sat},
$\myseq{\Gcal^\omega}{\Gamma^\omega}{\Delta^\omega}$ is falsifiable.
This means there exists a model $(\Fcal,\val,\rho)$ that satisfies $\Gcal^\omega$ and $\Gamma^\omega$
and falsifies every element of $\Delta^\omega$, including $a_1 : F$, which means that
$F$ is false at world $\rho(a_1).$  Thus $F$ is not valid.
\end{proof}


\begin{corollary}[Completeness]
\label{thm:completeness_pasl}
Every formula $F$ unprovable in $\lspslh$ is not valid in $\psl$, or
contrapositively, if $F$ is PASL-valid then $F$ is provable in  $\lspslh$.
\end{corollary}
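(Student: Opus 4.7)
The plan is to obtain this corollary as a direct specialisation of the general completeness theorem (Theorem~\ref{thm:completeness_lsg}) to the particular instance of $\lsg$ called $\lspslh$. The work has essentially already been done; the corollary is a matter of matching the two semantic presentations of $\psl$ to the frame-axiom form.

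First, I would note that $\lspslh$ is defined (at the end of Section~\ref{subsec:lspsl}) to consist of the core rules of Figure~\ref{fig:LS} together with the structural rules of Figure~\ref{fig:sepalg}, which are precisely the synthesised instances of the general structural rule for the set $\Fcal_{\psl}$ of frame axioms listed in Example~\ref{ex:frame-axiom-1}: the two identity axioms, commutativity, associativity, cancellativity, and partial-determinism. Hence $\lspslh$ is an \emph{instance} of $\lsg$ in the sense defined after Theorem~\ref{thm:soundness}, with structural rules synthesised from $\Fcal_{\psl}$.

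Second, I would observe that a Kripke relational frame in the sense of Definition~\ref{def:kripke_rel_frame} is exactly a Kripke frame satisfying $\Fcal_{\psl}$ in the sense of Definition~\ref{def:frame-sat}. This is the content of the informal verification given immediately after Example~\ref{ex:frame-axiom-1}: each frame axiom in $\Fcal_{\psl}$ is a syntactic reformulation of the corresponding clause of Definition~\ref{def:kripke_rel_frame} (duplicating multiply occurring variables and adding explicit equalities, and introducing fresh universally quantified variables when $\epsilon$ occurs in a ternary atom). Hence the class of models over frames satisfying $\Fcal_{\psl}$ coincides with the class of $\psl$ Kripke relational models, so ``validity in $\psl$'' agrees with ``validity in Kripke relational models satisfying $\Fcal_{\psl}$''.

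Putting these together, suppose $F$ is unprovable in $\lspslh$. By Theorem~\ref{thm:completeness_lsg} applied to this instance of $\lsg$, $F$ is not valid in the Kripke relational models satisfying $\Fcal_{\psl}$. By the correspondence noted above, $F$ is not $\psl$-valid. The contrapositive is the stated completeness. I do not anticipate any real obstacle; the only thing to be careful about is to verify cleanly that the frame axioms of Example~\ref{ex:frame-axiom-1} genuinely capture Definition~\ref{def:kripke_rel_frame} in both directions, so that the semantic notions of validity line up, but this is routine and has already been sketched in the body of Example~\ref{ex:frame-axiom-1}.
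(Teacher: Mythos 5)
Your proposal is correct and matches the paper's intent exactly: the corollary is stated without a separate proof precisely because it is the instantiation of Theorem~\ref{thm:completeness_lsg} to the instance $\lspslh$ of $\lsg$, using the fact that the frame axioms of Example~\ref{ex:frame-axiom-1} characterise the $\psl$ Kripke relational frames of Definition~\ref{def:kripke_rel_frame}. Your explicit attention to checking that the frame axioms capture Definition~\ref{def:kripke_rel_frame} in both directions is the one point the paper leaves implicit, and it is routine as you note.
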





\section{Extensions of $\psl$}
\label{sec:extension_psl}

In Section~\ref{subsec:lspsl} we presented our proof theory
as a \emph{family} of sequent calculi $\lsg$ parameterised by a
choice of \emph{frame axioms}. This section establishes the benefit
of this modular approach by showing how the additional axioms for
separation algebras proposed by~\cite{dockins2009}, namely
\emph{indivisible unit}, \emph{disjointness}, \emph{splittability},
and \emph{cross-split}, can be accommodated within our approach.
We also look at the generalisation of the definition of separation
algebra given by rejecting \emph{cancellativity}. We further discuss
which of these axioms manifest in the various examples of separation algebra surveyed in Section~\ref{subsec:pasl},
along with two further non-cancellative examples, helping to establish
how different abstract semantics correspond to different
concrete semantics, as summarised in Table~\ref{tab:models}.

\begin{table}
\centering
{\small
\begin{tabular}{|l|c|c|c|c|c|}
\hline
Concrete model & Indivisible unit & Disjointness & Splittability & Cross-Split & Cancellativity \\
\hline
Heaps & \checkmark & \checkmark & $\mathbb{\times}$ & \checkmark & \checkmark  \\
Fractional permissions & \checkmark & $\mathbb{\times}$ & \checkmark & \checkmark & \checkmark  \\ 
Named permissions & \checkmark & \checkmark & $\mathbb{\times}$ & \checkmark & \checkmark \\
Counting permissions & \checkmark & $\mathbb{\times}$ & $\mathbb{\times}$ & \checkmark & \checkmark  \\
Binary trees & \checkmark & \checkmark & \checkmark & \checkmark & \checkmark \\
Heaps (finite locations) & \checkmark & \checkmark & $\mathbb{\times}$ & \checkmark & \checkmark  \\
Petri nets & \checkmark & $\mathbb{\times}$ & $\mathbb{\times}$ & \checkmark & \checkmark \\
Petri nets with capacity $1$ & \checkmark & \checkmark & $\mathbb{\times}$ & \checkmark & \checkmark \\
Endpoint heaps & \checkmark & \checkmark & $\mathbb{\times}$ & \checkmark & \checkmark  \\
Monotonic counter & \checkmark & $\mathbb{\times}$ & \checkmark & \checkmark & $\mathbb{\times}$ \\
Logical heaps & \checkmark & $\mathbb{\times}$ & \checkmark & \checkmark & $\mathbb{\times}$ \\
\hline
\end{tabular}
}
\caption{Some concrete separation algebras and their abstract properties}
\label{tab:models}
\end{table}

It is an immediate corollary of the soundness
(Theorem~\ref{thm:soundness}) and completeness
(Theorem~\ref{thm:completeness_lsg}) of the general proof system
$\lsg$ that we have sound and complete proof rules for any combination
of these axioms.

By focusing on the properties emphasised by~\cite{dockins2009} we do
not mean to imply that these constitute a canonical list of abstract
properties worth considering. Indeed, an advantage of our modular
setting is that it will allow investigations of other abstract properties. For
example, the property below expresses that any element can be non-trivially extended:
\[
  \forall h_1.\exists h_2,h_3.~ h_2\neq\epsilon ~\&~ R(h_1,h_2,h_3)
\]
The property is satisfied by all models of Table~\ref{tab:models}
except heaps with finitely many locations~\cite{Jensen:High}, and
Petri nets with finite capacity and enough tokens to fill all
places. This axiom is in frame axiom form so we may synthesise the
rule below with the side-condition that $y$ and $z$ may not appear in the
conclusion:
\begin{center}
  \AxiomC{$\myseq{\Gcal;y\neq\epsilon;(x,y\simp z)}{\Gamma}{\Delta}$}
  \UnaryInfC{$\myseq{\Gcal}{\Gamma}{\Delta}$}
  \DisplayProof
\end{center}
With this rule we may prove the formula $\lnot(\lnot\top^{\mand}
\mimp\bot)$ which would otherwise be unprovable.

\subsection{Indivisible unit} 
The unit $\epsilon$ in a commutative monoid 
$(H,\circ,\epsilon)$ is \emph{indivisible} iff the following holds for 
any $h_1,h_2\in H$: 
$$\mbox{ if } h_1\circ h_2 = \epsilon \mbox{ then } h_1 = \epsilon.$$
Relationally, this corresponds to the first-order condition: 
$$\forall h_1,h_2\in H. \mbox{ if } R(h_1, h_2, \epsilon) \mbox{ then
} h_1 = \epsilon.$$
This also implies that $h_2 = \epsilon$ whenever
$h_1 \circ h_2 = \epsilon.$ Indivisible unit can be axiomatised by the
formula $\top^*\land (A\mand B)\limp A$~\cite{brotherston2013}.

To synthesise a proof rule for indivisible unit we first transform the
above condition to the following form:
$$\forall h_1,h_2,h_0.~ h_0 =
\epsilon ~\&~ R(h_1, h_2, h_0) ~\Rightarrow~ h_1 = \epsilon.$$
It is easy to check that this
form satisfies the frame axiom conditions. The corresponding
structural rule is:
\begin{center}
  \AxiomC{$\myseq{\Gcal;(x,y\simp z);x = \epsilon}{\Gamma}{\Delta}$}
  \AxiomC{$E(\Gcal)\vdash z =\epsilon$}
  \RightLabel{IU}
  \BinaryInfC{$\myseq{\Gcal;(x,y\simp z)}{\Gamma}{\Delta}$}
  \DisplayProof
\end{center}
We can also deduce that $E(\Gcal)\vdash y = \epsilon$. The following is easy to confirm:
\begin{proposition}
\label{prop:iu_axiom}
The formula $\top^*\land (A\mand B)\limp A$ is provable in $\lspslh +
IU$.
\end{proposition}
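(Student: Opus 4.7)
The plan is to exhibit an explicit cut-free derivation in $\lspslh + IU$ of the sequent $\vdash w : \top^*\land (A\mand B)\limp A$ for an arbitrary label $w$. Reading the rules bottom-up, I would apply $\limp R$ to obtain $w : \top^*\land (A\mand B) \vdash w : A$, then $\land L$ to split the antecedent into $w : \top^*$ and $w : A\mand B$. Applying $\top^* L$ to $w : \top^*$ adds the equality $w = \epsilon$ to the relational context $\Gcal$, and applying $\mand L$ to $w : A\mand B$ introduces fresh labels $x, y$ together with the ternary relational atom $(x, y \simp w)$ in $\Gcal$ and the labelled formulae $x : A$ and $y : B$ on the left. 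At this point the sequent is $\Gcal ; x:A ; y:B \vdash w : A$ with $\Gcal = \{w = \epsilon,\ (x, y \simp w)\}$.

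The key step is now to apply the $IU$ rule with principal relational atom $(x, y \simp w)$: its side condition $E(\Gcal) \vdash w = \epsilon$ is immediate since $w = \epsilon \in \Gcal$, so the rule extends $\Gcal$ with $x = \epsilon$. Together with $w = \epsilon$ and symmetry/transitivity of the equality judgment, we obtain $E(\Gcal) \vdash x = w$, and the branch closes by identity between $x : A$ on the left and $w : A$ on the right.

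The only subtlety is that the $id$ rule in Figure~\ref{fig:LS} is stated only for atomic propositions, whereas here $A$ is an arbitrary formula. This is handled by the standard identity-expansion lemma for labelled sequent calculi, which asserts that for every formula $A$ and all labels $u, v$ with $E(\Gcal) \vdash u = v$, the sequent $\Gcal ; \Gamma ; u : A \vdash v : A ; \Delta$ is cut-free derivable; this is proved by a routine induction on $A$ using only the logical rules of $\ls$ together with the equational reflexivity built into $id$, $\top^* R$, and so on. Once this lemma is in hand, the derivation described above is complete.

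The main obstacle, to the extent that there is one, is simply ensuring the side conditions of $IU$ are met at the correct moment, i.e.\ that $w = \epsilon$ has already been propagated into $\Gcal$ by $\top^* L$ before $IU$ is invoked, and that the fresh labels generated by $\mand L$ do not interfere with the equality judgment. Both are immediate from the ordering of the rule applications given above. No inductive or semantic argument on $\psl$ is needed; the result is a purely proof-theoretic verification.
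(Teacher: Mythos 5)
Your derivation is correct: the sequence $\limp R$, $\land L$, $\top^* L$, $\mand L$, then $IU$ on $(x,y\simp w)$ with $E(\Gcal)\vdash w=\epsilon$, closing via $E(\Gcal)\vdash x=w$, is exactly the intended argument behind the paper's unproved ``easy to confirm'' claim, and it parallels the explicit derivation the paper later gives for the same formula in the substitution-based calculus with the $D$ rule (Section~\ref{sec:eq_subs}). Your explicit appeal to the identity-expansion lemma for non-atomic $A$ is a point the paper glosses over (its own example derivation applies $id$ directly to a schematic $A$), so flagging and discharging it is appropriate rather than a deviation.
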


\begin{example}
It is trivial to confirm that all the concrete separation algebras surveyed in
Section~\ref{subsec:pasl} satisfy the indivisible unit axiom;
we are not aware of any
separation algebras with applications to program verification that fail to do so.
\end{example}

\subsection{Disjointness}
The separating conjunction $\mand$ in Reynolds's separation logic
requires that the two combined heaps have disjoint
domains~\cite{reynolds2002}. Without concrete semantics that give
meaning to the ``points-to'' predicate
$\mapsto$ we cannot express this notion of disjointness. However,
abstract semantics do allow us to discuss a special case where we
try to combine a non-empty heap with itself.
In a separation algebra $(H, \circ, \epsilon)$,
\emph{disjointness} is defined by the following additional
requirement:
$$\forall h_1,h_2\in H. \mbox{ if } h_1\circ h_1 = h_2 \mbox{ then }
h_1 = \epsilon.$$

The above can be expressed relationally:
$$\forall h_1,h_2\in H. \mbox{ if } R(h_1, h_1, h_2) \mbox{ then } h_1
= \epsilon.$$

To create a structural rule for it, we first need to convert the above
condition into
$$\forall h_1,h_2,h_3.~ h_1 = h_3 ~\&~ R(h_1, h_3,
h_2) ~\Rightarrow~ h_1 = \epsilon.$$

Then we obtain the structural rule for disjointness as below.
\begin{center}
  \AxiomC{$\myseq{\Gcal;(x,y\simp z);x=\epsilon}{\Gamma}{\Delta}$}
  \AxiomC{$E(\Gcal)\vdash x = y$}
  \RightLabel{$D$}
  \BinaryInfC{$\myseq{\Gcal;(x,y\simp z)}{\Gamma}{\Delta}$}
  \DisplayProof
\end{center}
Note that $E(\Gcal)\vdash z = \epsilon$ is a direct consequence.

Disjointness implies indivisible unit (but not vice versa), 
as shown by~\cite{dockins2009}. We can prove the axiom for 
indivisible unit by using $\lspslh + D$:

\begin{proposition}
\label{lem:d_iu_axiom}
The formula $\top^*\land (A\mand B)\limp A$ is provable in $\lspslh +
D.$
\end{proposition}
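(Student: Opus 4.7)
The plan is to construct an explicit derivation of $\vdash w:\top^*\land(A\mand B)\limp A$ in $\lspslh+D$, following the semantic argument that disjointness implies indivisible unit. First, apply $\limp R$, $\land L$, $\top^*L$, and $\mand L$ in sequence. This introduces two fresh labels $x,y$ and reduces the goal to the sequent
\[
\myseq{(x,y\simp w);\, w=\epsilon}{x:A;\, y:B}{w:A}.
\]
To close this sequent by identity expansion, it suffices to obtain $E(\Gcal)\vdash x = w$.

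The key idea, mirroring the semantic proof of Dockins et al., is that if $x\circ y=\epsilon$ then $y=\epsilon\circ y=(x\circ y)\circ y=x\circ(y\circ y)$, so $y\circ y$ must be defined, whence disjointness forces $y=\epsilon$ and then $x=\epsilon=w$. To realise this in our calculus I would first apply $Com$ to $(x,y\simp w)$ to add $(y,x\simp w)$, and then apply $U$ with the trivial side condition $E(\Gcal)\vdash y=y$ to add the identity atom $(y,\epsilon\simp y)$. Now the associativity rule $A$ is backward-applicable to the pair $(y,\epsilon\simp y)$ and $(y,x\simp w)$ provided $E(\Gcal)\vdash \epsilon=w$, which holds because $w=\epsilon$ was added by $\top^*L$. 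Applying $A$ introduces a fresh label $z$ together with the atoms $(z,x\simp y)$ and, crucially, $(y,y\simp z)$.

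With $(y,y\simp z)$ in the context, the disjointness rule $D$ applies (its side condition $E(\Gcal)\vdash y=y$ is trivial) and adds $y=\epsilon$ to the equality context. Now $E(\Gcal)\vdash y=\epsilon$, so rule $E$ applied to $(x,y\simp w)$ adds the desired equality $x=w$. The sequent is then closed by identity expansion on $A$, using $E(\Gcal)\vdash x=w$ (this generalisation of the $id$ rule from atoms to arbitrary formulas is a standard induction on formula structure, available in any instance of $\lsg$).

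The only non-obvious step is the associativity/commutativity choreography in the middle: recognising that one must reflect $(x,y\simp w)$ via $Com$, inject an identity atom $(y,\epsilon\simp y)$ via $U$, and then instantiate $A$ so that the existential witness produced is exactly the third component of a new ``self-square'' atom $(y,y\simp z)$. Once this match is identified, the rest is a mechanical concatenation of structural rules, so I do not expect further obstacles.
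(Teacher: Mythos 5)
Your derivation is correct and follows essentially the same route as the paper's own derivation (presented in Section 5.1 in the substitution-based calculus $\lspslh'+D$): manufacture a self-composition atom $(y,y\simp z)$ from $(x,y\simp w)$ and $w=\epsilon$ using $U$, $Com$ and associativity, apply $D$ to force $y=\epsilon$, then $E$ to obtain $x=w$, and close by (generalised) identity. The only difference is a minor streamlining --- you seed associativity with $(y,\epsilon\simp y)$ and need a single application of $A$, whereas the paper seeds with $(\epsilon,\epsilon\simp\epsilon)$ and uses two --- and both arguments rely on the same implicit appeal to the admissible general identity rule for the schematic formula $A$.
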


\begin{example}
In the cases of Example~\ref{ex:partcomsemi}, where separation algebras
are defined via a partial commutative semigroup $(V,\star)$, the disjointness
property holds iff there exist no $v\in V$ such that $v\star v$ is defined. This is
the case for heaps, named permissions, and the binary tree share model. On
the other hand, disjointness fails to hold for fractional permissions (where $(v,i)\star
(v,i)$ is defined so long as $i\leq 0.5$) and counting permissions (for example
$(v,-1)\star(v,-1)=(v,-2)$).

Disjointness fails in general for markings of Petri nets, as a marking can be
combined with itself by doubling its number of tokens at all places. However disjointness
holds in the presence of a global capacity constraint $\kappa=1$.
\end{example}

\subsection{Splittability}
The property of infinite splittability is useful when
reasoning about the kinds of resource sharing that occur in
divide-and-conquer style computations~\cite{dockins2009}. A separation
algebra $(H,\circ,\epsilon)$ has \emph{splittability} if
$$\forall h_0 \in H\setminus\{\epsilon\}, \exists h_1, h_2 \in H\setminus \{\epsilon\}
\mbox{ such that } h_1\circ h_2 = h_0.$$ 
Relationally, this corresponds to:
$$\forall h_0.~ h_0 \neq \epsilon ~\Rightarrow~ \exists h_1,h_2. h_1\neq \epsilon ~\&~ h_2\neq\epsilon ~\&~ R(h_1, h_2, h_0).$$

Splittability can be axiomatised as the formula $\lnot\top^* \limp
(\lnot\top^* \mand \lnot \top^*)$~\cite{brotherston2013}. We give the
following structural rule for this property
with the side-condition that $x,y$ do not occur in the conclusion:
\begin{center}
  \AxiomC{$\myseq{\Gcal;(z\neq \epsilon);(x,y\simp
      z);x\neq\epsilon;y\neq\epsilon}{\Gamma}{\Delta}$}
  \RightLabel{$S$}
  \UnaryInfC{$\myseq{\Gcal;(z\neq \epsilon)}{\Gamma}{\Delta}$}
  \DisplayProof
\end{center}

\begin{proposition}
\label{prop:add_s}
The axiom $\lnot\top^* \limp (\lnot\top^* \mand \lnot \top^*)$ for
splittability is provable in $\lspslh + S$.
\end{proposition}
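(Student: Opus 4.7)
The plan is to construct a cut-free derivation of the sequent $\vdash w:\lnot\top^*\limp(\lnot\top^*\mand\lnot\top^*)$ in $\lspslh + S$, treating $\lnot A$ as the classical abbreviation $A\limp\bot$. After applying $\limp R$ at the root, we reach the sequent $w:\lnot\top^*\vdash w:\lnot\top^*\mand\lnot\top^*$ with $\Gcal=\emptyset$. The splittability rule $S$ requires a relational atom $(z\neq\epsilon)$ to already be present in $\Gcal$ in order to fire, so we cannot apply it immediately; our first real step is to use $EM$ on the pair $(w,\epsilon)$ to bifurcate on whether $w=\epsilon$ or $w\neq\epsilon$.

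In the branch where $w=\epsilon\in\Gcal$, we apply $\limp L$ to the antecedent $w:\top^*\limp\bot$. The left premise $\vdash w:\top^*$ closes by $\top^* R$, since $E(\Gcal)\vdash w=\epsilon$; the right premise closes by $\bot L$. In the branch where $w\neq\epsilon\in\Gcal$, we apply $S$ to introduce fresh labels $x,y$, extending $\Gcal$ with $(x,y\simp w)$, $x\neq\epsilon$ and $y\neq\epsilon$. We then apply $\mand R$ on $w:\lnot\top^*\mand\lnot\top^*$ using the new ternary atom $(x,y\simp w)$ (the side condition $E(\Gcal)\vdash w=w$ is trivial), giving two premises $\vdash x:\lnot\top^*$ and $\vdash y:\lnot\top^*$. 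For each, apply $\limp R$ followed by $\top^* L$; the latter inserts, for instance, $x=\epsilon$ into $\Gcal$. Since $\Gcal$ then contains both $x\neq\epsilon$ and an equality judgement witnessing $E(\Gcal)\vdash x=\epsilon$, the rule $NEq$ closes the branch, and symmetrically for $y$.

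The main obstacle is recognising that $S$ alone is insufficient: its conclusion \emph{presupposes} the very inequality $w\neq\epsilon$ that encodes the antecedent $\lnot\top^*$. Semantically this inequality follows from $w:\lnot\top^*$ in the antecedent, but syntactically we need to surface it as a relational atom in $\Gcal$, and the only mechanism that does so is a case-analysis via $EM$. Once this bifurcation is performed, the rest is a routine unfolding of $\lnot$ as $\cdot\limp\bot$ and appeals to the closure rules $\top^* R$, $\bot L$ and $NEq$, together with the fresh-label instantiation provided by $S$.
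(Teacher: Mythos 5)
Your derivation is correct and is exactly the argument the paper intends: the paper states Proposition~\ref{prop:add_s} without an explicit proof, but its remark that the rules $NEq$ and $EM$ ``will be needed for some extensions, such as the extension of $\psl$ with splittability'' confirms that your key step --- using $EM$ on $(w,\epsilon)$ to surface the relational atom $w\neq\epsilon$ before $S$ can fire --- is the intended one. The remaining steps ($\top^*R$ and $\bot L$ in the $w=\epsilon$ branch; $S$, $\mand R$, $\top^*L$ and $NEq$ in the $w\neq\epsilon$ branch) all check out against the rules of Figures~\ref{fig:LS} and the synthesised rule $S$.
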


\begin{example}
In the case of separation algebras defined via a partial commutative semigroup
$(V,\star)$, splittability holds iff all $v\in V$ are in the image of $\star$. This
holds for fractional permissions, as each $(v,i)$ is $(v,\frac{i}{2})\star(v,
\frac{i}{2})$. The binary tree share model also enjoys this property;
see~\cite{dockins2009} for details.

On the other hand, splittability does not hold for heaps (for which the image of $\star$ is
empty), for named permissions (singletons cannot be split), or for counting
permissions (where $(v,-1)$ is not in the image of $\star$). Splittability also fails for
Petri nets, as the marking assigning one token to one place, with all other
places empty, cannot be split.
\end{example}

\subsection{Cross-split}

This more complicated property requires that if a heap can be split in
two different ways, then there should be intersections of these
splittings.  Formally, in a separation algebra $(H,\circ,\epsilon)$,
if $h_1\circ h_2 = h_0$ and $h_3\circ h_4 = h_0$, then there should be
four elements $h_{13},h_{14},h_{23},h_{24}$, informally representing
the intersections $h_1\cap h_3$, $h_1\cap h_4$, $h_2\cap h_3$ and
$h_2\cap h_4$ respectively, such that $h_{13}\circ h_{14} = h_1$,
$h_{23}\circ h_{24} = h_2$, $h_{13}\circ h_{23} = h_3$, and
$h_{14}\circ h_{24} = h_4$. The corresponding condition on Kripke
relational frames is obvious.  The following sound rule naturally
captures cross-split, where $p,q,s,t,u,v,x,y,z$ are labels,
and the labels $p,q,s,t$ do not occur in the conclusion:
\begin{center}
\AxiomC{$\myseq{\Gcal;(x,y\simp z);(u,v\simp z');(p,q\simp x);(p,s\simp u);(s,t\simp
    y);(q,t\simp v)}{\Gamma}{\Delta}$}
\AxiomC{$E(\Gcal)\vdash z = z'$}
\RightLabel{$CS$}
\BinaryInfC{$\myseq{\Gcal;(x,y\simp z);(u,v\simp z')}{\Gamma}{\Delta}$}
\DisplayProof
\end{center}

\begin{example}
All examples of separation algebras presented in
Section~\ref{subsec:pasl} satisfy $CS$; we are not aware of any
separation algebras with applications to program verification that fail to do so.
In the case of heaps, for example, the cross-splits are simply defined as
intersections, but the situation becomes more complex in the case that sharing
is possible, and in some cases the sub-splittings $h_{13},h_{23},\ldots$ need not
be uniquely defined.

For example, take the case of counting permissions. Let $h=\{l\mapsto(v,1)\}$
for some location $l$ and value $v$. We will abuse notation by writing this as
$h=1$, as the identity of the location and value are not important here. Let
$(h_1,h_2,h_3,h_4)$ be $(-2,3,-3,4)$ respectively. Then the values of
$(h_{13},h_{14},h_{23},h_{24})$ may be $(-2,\mbox{undefined},-1,4)$, or
$(\mbox{undefined},-2,-3,6)$, or $(-1,-1,2,5)$.

However, the definition of cross-split does not require uniqueness; it is sufficient to
describe a method by which a valid cross-split may be defined. We
consider each location $l\in dom(h)$ in turn. The most complex case has that $l$
is in the domain of all of $h_1,h_2,h_3,h_4$, and that the two splittings are not
identical on $l$. By definition at least one of $h_1(l),h_2(l)$ is negative, and
similarly $h_3(l),h_4(l)$. Without loss of generality say $h_1(l)$ is the
\emph{strictly largest} negative number of the four.
Then we may set $(h_{13}(l),h_{14}(l),h_{23}(l),h_{24}(l))$ to be
$(h_1(l),\mbox{undefined},h_3(l)-h_1(l),h_4(l))$. Routine calculation confirms
that this indeed defines a cross-split.
\end{example}

\subsection{Separation algebras without cancellativity}
\label{sec:nocancel}

We finally note that there has been interest in dropping the
cancellativity requirement from the definition of separation algebra~\cite{Gotsman:Precision}.
In our framework we need merely omit the $C$ rule of Figure~\ref{fig:sepalg}.

\begin{example}
\label{ex:noncancel}
The partial commutative semigroup construction of
Example~\ref{ex:partcomsemi}, as noted after that example, need not yield a
cancellative structure. In particular, if there exists an idempotent element
$v\star v=v$, then $\{l\mapsto v\}\circ\{l\mapsto v\}=\{l\mapsto v\}=
\{l\mapsto v\}\circ\epsilon$, but clearly $\{l\mapsto v\}\neq\epsilon$.
We give two examples:
\begin{itemize}
  \item
Monotonic Counters for Fictional Separation
Logic~\cite{JensenBirkedal2012}: fictional separation logic is a
program verification framework in which every module is associated
with its own notion of resource. We here note only the example of a
monotonic counter, for which the partial commutative semigroup is the
integers with a bottom element, with $max$ as the operation. Clearly
every element is idempotent.
  \item
Logical Heaps for Relaxed Separation Logic~\cite{Vafeiadis:Relaxed}: we refer
to \emph{op. cit.} for the rather involved definition,
 and an example of an idempotent element.
\end{itemize}
Both of the examples above satisfy indivisible unit, splittability, and cross-split, but
fail to satisfy disjointness.
\end{example}


\section{From equality atoms to substitutions}
\label{sec:eq_subs}

Having equality atoms in the calculus is convenient when proving the
completeness of the calculus. However labelled calculi with
substitutions, cf.~\cite{Hou:Labelled15}, are easier to
implement. In particular, global substitution of labels reduces the
number of labels in the sequent, and simplifies the structure of the
sequent. This often leads to advantages in performance.

We begin by replacing the rules $id$, $\top^* L$, $\top^* R$, $\mand R$,
$\mimp L$, which in the proofs rules of Figure~\ref{fig:LS} contained equality atoms,
with versions without equality atoms, as shown in Figure~\ref{fig:lspslh_subs}.
These are
precisely the rules for BBI presented in~\cite{Hou:Labelled15}. We similarly transform the
rules $NEq$ and $EM$. Note the use of the explicit substitutions applied to the meta-variables $\Gcal$, $\Gamma$, and $\Delta$ in the rules $\top^* L$ and $EM$.

\begin{figure*}
\footnotesize
\centering
\begin{tabular}{cc}
\multicolumn{2}{l}{\textbf{Identity:}}\\[10px]
\multicolumn{2}{c}{
\AxiomC{$$}
\RightLabel{\tiny $id$}
\UnaryInfC{$\myseq{\Gcal}{\Gamma;w:p}{w:p;\Delta}$}
\DisplayProof
}\\[15px]
\multicolumn{2}{l}{\textbf{Logical Rules:}}\\[10px]
\AxiomC{$$}
\RightLabel{\tiny $\bot L$}
\UnaryInfC{$\myseq{\Gcal}{\Gamma; w:\bot}{\Delta}$}
\DisplayProof
$\qquad$
\AxiomC{$\myseq{\Gcal[\epsilon/w]}{\Gamma[\epsilon/w]}{\Delta[\epsilon/w]}$}
\RightLabel{\tiny $\top^* L$}
\UnaryInfC{$\myseq{\Gcal}{\Gamma;w:\top^*}{\Delta}$}
\DisplayProof
&
\AxiomC{$$}
\RightLabel{\tiny $\top R$}
\UnaryInfC{$\myseq{\Gcal}{\Gamma}{w:\top;\Delta}$}
\DisplayProof
$\qquad$
\AxiomC{$$}
\RightLabel{\tiny $\top^* R$}
\UnaryInfC{$\myseq{\Gcal}{\Gamma}{\epsilon:\top^*;\Delta}$}
\DisplayProof\\[15px]
\AxiomC{$\myseq{\Gcal}{\Gamma;w:A;w:B}{\Delta}$}
\RightLabel{\tiny $\land L$}
\UnaryInfC{$\myseq{\Gcal}{\Gamma;w:A\land B}{\Delta}$}
\DisplayProof
&
\AxiomC{$\myseq{\Gcal}{\Gamma}{w:A;\Delta}$}
\AxiomC{$\myseq{\Gcal}{\Gamma}{w:B;\Delta}$}
\RightLabel{\tiny $\land R$}
\BinaryInfC{$\myseq{\Gcal}{\Gamma}{w:A\land B;\Delta}$}
\DisplayProof\\[15px]
\AxiomC{$\myseq{\Gcal}{\Gamma}{w:A;\Delta}$}
\AxiomC{$\myseq{\Gcal}{\Gamma;w:B}{\Delta}$}
\RightLabel{\tiny $\limp L$}
\BinaryInfC{$\myseq{\Gcal}{\Gamma;w:A\limp B}{\Delta}$}
\DisplayProof
&
\AxiomC{$\myseq{\Gcal}{\Gamma;w:A}{w:B;\Delta}$}
\RightLabel{\tiny $\limp R$}
\UnaryInfC{$\myseq{\Gcal}{\Gamma}{w:A\limp B; \Delta}$}
\DisplayProof\\[15px]
\AxiomC{$\myseq{\Gcal;(x,y \simp z)}{\Gamma;x:A;y:B}{\Delta}$}
\RightLabel{\tiny $\mand L$}
\UnaryInfC{$\myseq{\Gcal}{\Gamma;z:A\mand B}{\Delta}$}
\DisplayProof
&
\AxiomC{$\myseq{\Gcal;(x,z \simp y)}{\Gamma;x:A}{y:B;\Delta}$}
\RightLabel{\tiny $\mimp R$}
\UnaryInfC{$\myseq{\Gcal}{\Gamma}{z:A\mimp B;\Delta}$}
\DisplayProof\\[15px]
\multicolumn{2}{c}{
\AxiomC{$\myseq{\Gcal;(x,y \simp z)}{\Gamma}{x:A;z:A\mand B;\Delta}$}
\AxiomC{$\myseq{\Gcal;(x,y \simp z)}{\Gamma}{y:B;z:A\mand B;\Delta}$}
\RightLabel{\tiny $\mand R$}
\BinaryInfC{$\myseq{\Gcal;(x,y \simp z)}{\Gamma}{z:A\mand B;\Delta}$}
\DisplayProof
}\\[15px]
\multicolumn{2}{c}{
\AxiomC{$\myseq{\Gcal;(x,y \simp z)}{\Gamma;y:A\mimp B}{x:A;\Delta}$}
\AxiomC{$\myseq{\Gcal;(x,y \simp z)}{\Gamma;y:A\mimp B; z:B}{\Delta}$}
\RightLabel{\tiny $\mimp L$}
\BinaryInfC{$\myseq{\Gcal;(x,y \simp z)}{\Gamma;y:A\mimp B}{\Delta}$}
\DisplayProof
}
\\[15px]
\multicolumn{2}{l}{\textbf{Structural Rules:}}\\[15px]
\AxiomC{ \ }
\RightLabel{\tiny $NEq$}
\UnaryInfC{$\myseq {\Gcal; w \not = w} \Gamma \Delta$}
\DisplayProof
&
\AxiomC{$\myseq{\Gcal[x/y]} {\Gamma[x/y]} \Delta[x/y]$}
\AxiomC{$\myseq{\Gcal; x \not = y} \Gamma \Delta$}
\RightLabel{\tiny $EM$}
\BinaryInfC{$\myseq \Gcal \Gamma \Delta$}
\DisplayProof
\\[15px]
\AxiomC{$\myseq{\Gcal[x/z];(x,\epsilon\simp x)}{\Gamma[x/z]}{\Delta[x/z]}$}
\RightLabel{\tiny $E_1$}
\UnaryInfC{$\myseq{\Gcal;(x,\epsilon\simp z)}{\Gamma}{\Delta}$}
\DisplayProof
&
\AxiomC{$\myseq{\Gcal[z/x];(z,\epsilon\simp z)}{\Gamma[z/x]}{\Delta[z/x]}$}
\RightLabel{\tiny $E_2$}
\UnaryInfC{$\myseq{\Gcal;(x,\epsilon\simp z)}{\Gamma}{\Delta}$}
\DisplayProof\\[15px]
\AxiomC{$\myseq{\Gcal;(x,\epsilon \simp x)}{\Gamma}{\Delta}$}
\RightLabel{\tiny $U$}
\UnaryInfC{$\myseq{\Gcal}{\Gamma}{\Delta}$}
\DisplayProof
&
\AxiomC{$\myseq{(y,x \simp z);(x,y \simp z);\Gcal}{\Gamma}{\Delta}$}
\RightLabel{\tiny $Com$}
\UnaryInfC{$\myseq{(x,y \simp z);\Gcal}{\Gamma}{\Delta}$}
\DisplayProof
\\[15px]
    \AxiomC{$\myseq {\Gcal;(u,y \simp x);(v,w \simp y);(z, w \simp x); (u, v \simp
        z)} \Gamma \Delta$}
    \RightLabel{\tiny $A$}
    \UnaryInfC{$\myseq {\Gcal;(u,y \simp x);(v,w \simp y)} \Gamma \Delta$}
\DisplayProof
&
\AxiomC{$\myseq{\Gcal[y/w];(x,y\simp z)}{\Gamma[y/w]}{\Delta[y/w]}$}
\RightLabel{\tiny $C$}
\UnaryInfC{$\myseq{\Gcal;(x,y\simp z);(x,w\simp z)}{\Gamma}{\Delta}$}
\DisplayProof
\\[15px]
\AxiomC{$\Gcal[y/z];\myseq{(w,x\simp y)}{\Gamma[y/z]}{\Delta[y/z]}$}
\RightLabel{\tiny $P$}
\UnaryInfC{$\myseq{\Gcal;(w,x\simp y);(w,x\simp z)}{\Gamma}{\Delta}$}
\DisplayProof
&
\AxiomC{$\myseq{\Gcal[\epsilon/x];(\epsilon,\epsilon\simp z)}{\Gamma[\epsilon/x]}{\Delta[\epsilon/x]}$}
\RightLabel{\tiny $D$}
\UnaryInfC{$\myseq{\Gcal;(x,x\simp z)}{\Gamma}{\Delta}$}
\DisplayProof
\\[15px]
\multicolumn{2}{l}{\textbf{Side conditions:}} \\
\multicolumn{2}{l}{In $\mand L$ and $\mimp R$, the labels $x$ and $y$
do not occur in the conclusion. }\\
\multicolumn{2}{l}{In the rule $A$, the label $z$ does not occur in the conclusion.}
\end{tabular}
\caption{The labelled sequent calculus $\lspslh'+D$ with explicit global label substitutions.} 
\label{fig:lspslh_subs}
\end{figure*}

If we are to move beyond the core sublogic $\ls$ of Figure~\ref{fig:LS} to get the
full logic of abstract separation algebras, we need a general method for converting
general structural rules synthesised from frame axioms, to proof rules without equality
atoms. Given a structural rule of the following form:
\begin{center}
\AxiomC{$\myseq {\Gcal; S_1; \dots; S_k; T_1; \dots; T_l} \Gamma \Delta$}
\AxiomC{$E(\Gcal) \vdash s_1 = t_1 \qquad \cdots \qquad E(\Gcal) \vdash s_p = t_p$}
\BinaryInfC{$\myseq \Gcal \Gamma \Delta$}
\DisplayProof
\end{center}
we use the following algorithm:
\begin{enumerate}
\item Delete each judgment $E(\Gcal) \vdash s_i = t_i$ and modify the proof rule
  as follows. If $t_i$ is not $\epsilon$ we replace it with $s_i$ everywhere it appears.
  Conversely if $t_i$ is $\epsilon$, but $s_i$ is not, replace $s_i$ with $\epsilon$
  everywhere.
\item For each $S_i$, if it is an equality $x = y$, we delete the equality and change the
  \emph{premise only} of the proof rule as follows. If $x$ is not $\epsilon$ we replace
  $x$ everywhere in the premise by $y$, and apply the explicit
  substitution $[y/x]$ to the occurrences of $\Gcal$,
  $\Gamma$, and $\Delta$ in the premises. We likewise produce another new proof rule in which $y$
  is replaced by $x$ everywhere in the premise, provided $y$ is not $\epsilon$.
\end{enumerate}

As an example, Figure~\ref{fig:lspslh_subs} presents the equality-free
rules for propositional abstract separation logic with
disjointness. We refer to this logic as $\psl_D$ and the proof system
as $\lspslh'+D$, using the prime to denote ``equality free''. We
choose to present this particular logic because it will be used for
the experiments of the next section. Note that the frame rule $E$
obtained from Formula~\ref{eq:ax} in Section~\ref{subsec:lspsl} is
split into two rules $E_1,E_2$, following step (2) of the procedure,
depending on the choice of substituting $x$ for $z$ in the premise, or
vice versa. We do not need to split $C$ and $P$ into two rules because
we define our sequents via sets, rather than lists, and so the two
resulting rules would be identical up to label renaming.


The soundness of the converted system is straightforward, since global
substitution of labels naturally captures the equality of the two
labels. 

Given a sequent $\myseq{\Gcal}{\Gamma}{\Delta}$ in any instance
calculus of $\lsg$, we convert this sequent to a sequent in the
equality-free version via a translation $\tau$, which simply performs
global substitutions to unify labels $h_1,h_2$, for each equality
relational atom $h_1 = h_2$ in $\Gcal$, with the restriction that we
cannot substitute for $\epsilon$. The corresponding equality-free
sequent is written as $\tau(\myseq{\Gcal}{\Gamma}{\Delta})$.

The following lemma is easy to show.

\begin{lemma}
\label{lem:unify_eq}
For any sequent $\myseq{\Gcal}{\Gamma}{\Delta}$, if $E(\Gcal)\vdash h
= h'$, then $h$ and $h'$ are unified in
$\tau(\myseq{\Gcal}{\Gamma}{\Delta})$.
\end{lemma}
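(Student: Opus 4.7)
The plan is to prove the lemma by induction on the derivation of the equality judgment $E(\Gcal)\vdash h = h'$ using the four inference rules for equality judgments given in Section~\ref{subsec:lspsl}. First I would make precise what ``unified'' means: the translation $\tau$ iteratively picks each equality atom $(h_1 = h_2) \in \Gcal$ and performs a global substitution identifying $h_1$ with $h_2$, using $\epsilon$ as the target whenever one side is $\epsilon$. Two labels are unified in $\tau(\myseq{\Gcal}{\Gamma}{\Delta})$ precisely when the cumulative substitution maps them to the same label. Equivalently, $\tau$ collapses each equivalence class of the relation generated by the equality atoms in $\Gcal$ to a single representative, with $\epsilon$ chosen whenever the class contains it.

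For the base cases, reflexivity $E(\Gcal)\vdash h = h$ is trivial since $h$ is mapped to itself. For the axiom case $(h = h') \in E(\Gcal)$, the definition of $\tau$ directly unifies $h$ and $h'$, so they have a common image under the composed substitution. For the symmetry step, the induction hypothesis applied to $E(\Gcal)\vdash h' = h$ immediately gives unification of $h$ and $h'$. For transitivity through an intermediate label $t$, two applications of the induction hypothesis yield that $h$ and $t$ are unified and $t$ and $h'$ are unified; since unification under $\tau$ is simply equality of images under a fixed composed substitution, this is transitive, so $h$ and $h'$ are unified.

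The main obstacle will be the asymmetric treatment of $\epsilon$: because $\epsilon$ may never be substituted for, the direction of each atomic substitution in $\tau$ is constrained, and one must check that $\tau$ is genuinely well-defined as a map on equivalence classes regardless of the order in which the equalities in $\Gcal$ are processed. In particular, I would need to show that every label in a class containing $\epsilon$ is uniformly rewritten to $\epsilon$. The delicate case in the inductive argument is transitivity when the intermediate label $t$ equals $\epsilon$ while $h$ and $h'$ are both label variables: here the induction hypothesis gives that each of $h$ and $h'$ is unified with $\epsilon$ separately, and the confluence observation above then forces them both to have image $\epsilon$ under $\tau$, hence to be unified with one another. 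Once this well-definedness is in hand, the induction goes through uniformly across all four cases.
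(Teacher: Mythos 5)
Your proof is correct, and it is essentially the argument the paper intends: the paper omits the proof entirely (stating only that the lemma ``is easy to show''), and induction on the four-rule derivation of $E(\Gcal)\vdash h = h'$, together with the observation that $\tau$ collapses each equivalence class of the relation generated by the equality atoms in $\Gcal$ to a single representative (with $\epsilon$ as the forced representative of its class), is the natural way to discharge it. You correctly identify the one point that genuinely needs checking --- that the composed substitution is well-defined as a map on classes independently of the order in which the atoms of $E(\Gcal)$ are processed, including when the class contains $\epsilon$ --- and your treatment of the transitivity case through an intermediate label equal to $\epsilon$ is exactly where that check is used.
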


We then show that the conversion preserves completeness. This can be
proved by a straightforward induction on the height of the derivation.

\begin{theorem}
For any sequent $\myseq{\Gcal}{\Gamma}{\Delta}$, if it is derivable in
an instance calculus $LS_i$ of $\lsg$, then
$\tau(\myseq{\Gcal}{\Gamma}{\Delta})$ is derivable in the
corresponding equality-free calculus $LS_i'$ given by the above
conversion.
\end{theorem}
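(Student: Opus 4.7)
The plan is to proceed by induction on the height of the derivation $\Pi$ of $\myseq{\Gcal}{\Gamma}{\Delta}$ in $LS_i$, with a case analysis on the last inference rule. For each case, I aim to show that $\tau$ applied to the premises yields sequents from which $\tau(\myseq{\Gcal}{\Gamma}{\Delta})$ is derivable by (an instance of) the corresponding equality-free rule, possibly followed by weakening. The guiding intuition is that $\tau$ realises each equation in $E(\Gcal)$ as a concrete label unification, so every equality side-condition in $LS_i$ becomes syntactically trivial after translation, by Lemma~\ref{lem:unify_eq}.

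For the zero-premise rules ($id$, $\top^* R$, $NEq$) and the binary rules with equality side-conditions ($\mand R$, $\mimp L$, and structural rules such as $A$, $C$, $P$, and $CS$), I would use Lemma~\ref{lem:unify_eq} directly: the hypothesis $E(\Gcal) \vdash s = t$ implies that $\tau$ identifies $s$ and $t$, so the translated conclusion matches the shape demanded by the equality-free rule. For rules that add a fresh equation to the context of a premise ($\top^* L$, $EM$, and the identity structural rules for separation algebras), I would argue that extending $\Gcal$ by $h_1 = h_2$ and then applying $\tau$ is equivalent to first applying $\tau$ and then substituting $h_1$ for $h_2$ (or vice versa); this is exactly the effect of the explicit substitutions appearing in the equality-free rules of Figure~\ref{fig:lspslh_subs}. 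Rules with fresh-variable side conditions ($\mand L$, $\mimp R$, and the existential structural rules) pose no new difficulty, since $\tau$ only renames labels already occurring in the sequent and therefore preserves freshness.

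The main obstacle will be rules like the identity structural rule $E$ in the equality-based framework, which admits a single form but is split in the equality-free calculus into two rules $E_1, E_2$ depending on which of two labels is retained after substitution. Here I must argue that at least one of $E_1, E_2$ applies in each case; this reduces to the observation that one of the two candidate substitution directions always coincides with what $\tau$ itself would do (and in degenerate cases where the labels are already identified by $\tau$, the rule application is vacuous and can be omitted). A related subtlety is ensuring that $\tau$ commutes properly with the local substitutions embedded in the equality-free rules, so that the inductive hypothesis applied to the premises yields a sequent of exactly the shape required.

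Finally, the substitution lemma and weakening admissibility for $\lsg$, both stated earlier in the paper, will be used as background machinery to adjust labels and recover any relational atoms that $\tau$ may collapse. With these tools and the case analysis above, every rule application in $\Pi$ can be translated into a bounded-depth derivation in $LS_i'$, and the inductive conclusion follows by composing these translated steps.
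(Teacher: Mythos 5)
Your proposal is correct and follows essentially the same route as the paper, which proves this theorem only by remarking that it follows ``by a straightforward induction on the height of the derivation''; your case analysis (using Lemma~\ref{lem:unify_eq} for equality side-conditions, commuting $\tau$ with the explicit substitutions, and choosing between $E_1$ and $E_2$) is a faithful elaboration of exactly that induction.
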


\begin{corollary}
For any $\psl$ formula $F$, if it is derivable in an instance calculus
$LS_i$, then it is also derivable in the corresponding equality-free
calculus $LS_i'$ given by the above conversion.
\end{corollary}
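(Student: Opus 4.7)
The plan is to derive this corollary directly from the preceding theorem by specialising to the particular form of sequent used to encode a formula. Recall that a $\psl$ formula $F$ is said to be derivable in an instance calculus $LS_i$ when the sequent $\vdash w:F$ (with empty $\Gcal$ and empty antecedent) is derivable in $LS_i$, for an arbitrary label $w \neq \epsilon$.

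First I would apply the preceding theorem to the sequent $\myseq{\emptyset}{\emptyset}{w:F}$. The theorem yields that $\tau(\myseq{\emptyset}{\emptyset}{w:F})$ is derivable in $LS_i'$. The key observation is then to identify this translated sequent with the original one: since $\Gcal = \emptyset$ contains no equality relational atoms, the translation $\tau$ performs no substitutions at all (there are no equalities to unify labels against), so $\tau(\myseq{\emptyset}{\emptyset}{w:F}) = \myseq{\emptyset}{\emptyset}{w:F}$. Hence the sequent $\vdash w:F$ is derivable in $LS_i'$, which is precisely the definition of $F$ being derivable in $LS_i'$.

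There is essentially no obstacle here, since all the real work has already been done in the preceding theorem (which itself proceeds by a routine induction on the height of the derivation, using Lemma~\ref{lem:unify_eq} to handle the side-conditions $E(\Gcal)\vdash s = t$ that get discarded during the translation, and observing that the global substitutions attached to the converted rules $\top^* L$, $EM$, $E_1$, $E_2$, $C$, $P$, $D$ faithfully simulate the effect of unifying equated labels). The only point that might warrant a line of comment is that the choice of arbitrary $w \neq \epsilon$ is preserved by the translation, which is immediate since $\tau$ never substitutes for $\epsilon$ and the empty $\Gcal$ triggers no substitutions whatsoever.
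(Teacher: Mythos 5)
Your proposal is correct and matches the paper's (implicit) argument: the paper states this corollary without proof as an immediate specialisation of the preceding theorem, and your observation that $\tau$ acts as the identity on the sequent $\vdash w:F$ because the empty $\Gcal$ contains no equalities is exactly the point that makes the specialisation go through.
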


It may seem odd that the equality judgements and their associated
rules can just be deleted completely when converting an equality-based
calculus into an equality-free calculus with global
substitutions. Intuitively, remember that we are intending for these
calculi to be used bottom-up: thus the global substitutions trivially
``implement'' the required equalities as backward proof-search
proceeds upwards towards the leaves.


\subsection{Example Derivations}

We now present some example derivations of valid formulae using
the calculus with label substitutions for the logic $\psl+D$. We shall use some inference rules for
derived logical connectives $\lnot$ and $\lor$, as shown below:

\begin{center}
\begin{tabular}{cc}
\AxiomC{$\myseq{\Gcal}{\Gamma}{w:A;\Delta}$}
\RightLabel{\tiny $\lnot L$}
\UnaryInfC{$\myseq{\Gcal}{\Gamma;w:\lnot A}{\Delta}$}
\DisplayProof
&
\AxiomC{$\myseq{\Gcal}{\Gamma;w:A}{\Delta}$}
\RightLabel{\tiny $\lnot R$}
\UnaryInfC{$\myseq{\Gcal}{\Gamma}{w:\lnot A;\Delta}$}
\DisplayProof\\[15px]
\AxiomC{$\myseq{\Gcal}{\Gamma;w:A}{\Delta}$}
\AxiomC{$\myseq{\Gcal}{\Gamma;w:B}{\Delta}$}
\RightLabel{\tiny $\lor L$}
\BinaryInfC{$\myseq{\Gcal}{\Gamma;w:A\lor B}{\Delta}$}
\DisplayProof
&
\AxiomC{$\myseq{\Gcal}{\Gamma}{w:A;w:B;\Delta}$}
\RightLabel{\tiny $\lor R$}
\UnaryInfC{$\myseq{\Gcal}{\Gamma}{w:A\lor B;\Delta}$}
\DisplayProof\\[15px]
\end{tabular}
\end{center} 

First let us consider the formula
$\lnot (\top^* \land A \land (B \mand \lnot (C \mimp (\top^* \limp
A))))$. This formula can be easily proved by the following
derivation:
\begin{center}
\AxiomC{}
\RightLabel{$id$}
\UnaryInfC{$(c,b\simp \epsilon);(a,b\simp \epsilon);\epsilon:A;a:B;c:C\vdash \epsilon:A$}
\RightLabel{$\top^* L$}
\UnaryInfC{$(c,b\simp d);(a,b\simp \epsilon);\epsilon:A;a:B;c:C;d:\top^*\vdash d:A$}
\RightLabel{$\limp R$}
\UnaryInfC{$(c,b\simp d);(a,b\simp \epsilon);\epsilon:A;a:B;c:C\vdash d:
\top^* \limp A$}
\RightLabel{$\mimp R$}
\UnaryInfC{$(a,b\simp \epsilon);\epsilon:A;a:B\vdash b:C \mimp
(\top^* \limp A)$}
\RightLabel{$\lnot L$}
\UnaryInfC{$(a,b\simp \epsilon);\epsilon:A;a:B; b:\lnot (C \mimp
(\top^* \limp A))\vdash$}
\RightLabel{$\mand L$}
\UnaryInfC{$\epsilon:A;\epsilon:B \mand \lnot (C \mimp
(\top^* \limp A))\vdash$}
\RightLabel{$\top^* L$}
\UnaryInfC{$w:\top^*;w:A;w:B \mand \lnot (C \mimp
(\top^* \limp A))\vdash$}
\RightLabel{$\land L$}
\UnaryInfC{$w:\top^* \land A \land (B \mand \lnot (C \mimp
(\top^* \limp A)))\vdash$}
\RightLabel{$\lnot R$}
\UnaryInfC{$\vdash w:\lnot (\top^* \land A \land (B \mand \lnot (C \mimp
(\top^* \limp A))))$}
\DisplayProof
\end{center} 

\noindent This formula is valid in non-deterministic BBI. However, we
are not aware of any proof for this formula in other proof systems
for non-deterministic BBI, such as the nested sequent
calculus~\cite{park2013} or the display
calculus~\cite{Brotherston10}. In particular, we ran this formula using the nested
sequent calculus based prover~\cite{park2013} for a week on a CORE i7
2600 processor, without success.

The formula $(F * F) \limp F$, where $F = \lnot(\top \mimp \lnot
\top^*)$, is not valid in non-deterministic BBI, but is valid in BBI
plus partial-determinism. To prove this formula, we use the following
derivation, where we write $\cdot^2$ for two consecutive applications
of a rule and use semi-colon to denote sequencing of rule
applications:

\begin{center}
\AxiomC{$(w',w''\simp \epsilon);(b',c'\simp w'');(b,c\simp w');(b,c\simp a);\cdots$}
\RightLabel{$A$}
\UnaryInfC{$(w',c'\simp w);(w,b'\simp \epsilon);\cdots$}
\RightLabel{$E^2$}
\UnaryInfC{$(c',w'\simp w);(b,c\simp w');(b',w\simp \epsilon);\cdots$}
\RightLabel{$A$}
\UnaryInfC{$(b',w\simp \epsilon);(\epsilon,b\simp w);(c',c\simp \epsilon);\cdots$}
\RightLabel{$A$}
\UnaryInfC{$(b,c\simp a);(b',b\simp \epsilon);(c',c\simp \epsilon);(\epsilon,\epsilon\simp \epsilon);\cdots$}
\RightLabel{$U$}
\UnaryInfC{$(b,c\simp a);(b',b\simp \epsilon);(c',c\simp \epsilon);a:\top\mimp\lnot\top^*;b':\top,c':\top\vdash$}
\RightLabel{$\top^* L^2$}
\UnaryInfC{$(b,c\simp a);(b',b\simp b'');(c',c\simp c'');a:\top\mimp\lnot\top^*;b':\top,c':\top;b":\top^*;c'':\top^*\vdash$}
\RightLabel{$\lnot R^2$}
\UnaryInfC{$(b,c\simp a);(b',b\simp b'');(c',c\simp c'');a:\top\mimp\lnot\top^*;b':\top,c':\top\vdash b'':\lnot\top^*;c'':\lnot\top^*$}
\RightLabel{$\mimp R^2$}
\UnaryInfC{$(b,c\simp a); a:\top\mimp\lnot\top^*\vdash b:\top\mimp\lnot\top^*;c:\top\mimp\lnot\top^*$}
\RightLabel{$\lnot L^2;\lnot R$}
\UnaryInfC{$(b,c\simp a);b:\lnot(\top\mimp\lnot\top^*);c:\lnot(\top\mimp\lnot\top^*)\vdash a:\lnot(\top\mimp\lnot\top^*)$}
\RightLabel{$\mand L$}
\UnaryInfC{$a:F\mand F\vdash a:F$}
\RightLabel{$\limp R$}
\UnaryInfC{$\vdash a:(F\mand F)\limp F$}
\DisplayProof
\end{center}

\noindent Some sequents in the above derivation are too long, so we omit the part of a sequent that is not important in the rule application. The top
sequent above the $A$ rule instance contains only one non-atomic
formula: $a:\top\mimp \lnot \top^*$ on the left hand side. The
correct relational atom that is required to split $a:\top\mimp \lnot
\top^*$ is $(w'',a\simp \epsilon)$. However, so far we have only
obtained $(w'',w'\simp \epsilon)$. Although $w'$ and $a$ both have
exactly the same children ($b$ and $c$), the non-deterministic monoid
allows the composition $b\circ c$ to yield multiple elements, or even
$\epsilon$. Thus we cannot conclude that $w' = a$ in
non-deterministic BBI. This can be solved by applying the rule $P$ to
replace $w'$ by $a$, then use $E$ to obtain $(w'',a\simp \epsilon)$
on the left hand side of the sequent, then the derivation can go
through:

\begin{center}
\AxiomC{}
\RightLabel{$\top^* R$}
\UnaryInfC{$(w'',a\simp \epsilon);\cdots;\vdash \epsilon:\top^*$}
\RightLabel{$\lnot L$}
\UnaryInfC{$(w'',a\simp \epsilon);\cdots;\epsilon:\lnot\top^*\vdash$}
\AxiomC{}
\RightLabel{$\top R$}
\UnaryInfC{$(w'',a\simp \epsilon);\cdots\vdash w'':\top$}
\RightLabel{$\mimp L$}
\BinaryInfC{$(w'',a\simp \epsilon);\cdots;a:\top\mimp\lnot\top^*;b':\top,c':\top\vdash$}
\DisplayProof
\end{center}

The proof system $\lspslh'+D$ does not have a rule for indivisible
unit. We show that we can use the rule $D$ for disjointness to prove
the axiom $\top^*\land (A\mand B)\limp A$ of indivisible unit.
We highlight the principal relational atoms where they are not obvious.
\begin{center}
\AxiomC{}
\RightLabel{$id$}
\UnaryInfC{$\myseq{(\epsilon,\epsilon\simp \epsilon);\cdots}{\epsilon:A; \epsilon: B}{\epsilon: A}$}
\RightLabel{$Eq_1$}
\UnaryInfC{$\myseq{(\epsilon,a_1\simp \epsilon);\cdots}{a_1:A; \epsilon: B}{\epsilon: A}$}
\RightLabel{ $E$}
%
\UnaryInfC{$\myseq{(a_1,w_2\simp w_1);$$(\epsilon,\epsilon\simp w_2);$\fbox{$(a_1,\epsilon\simp \epsilon)$};$\cdots}
{a_1:A; \epsilon: B}{\epsilon: A}$}
%
\RightLabel{ $D$}
\UnaryInfC{$\myseq{(a_1,w_2\simp w_1);$\fbox{$(a_2,a_2\simp w_2)$}$;(a_1,a_2\simp \epsilon);\cdots}
{a_1:A; a_2: B}{ \epsilon: A}$}
\RightLabel{$A$}
\UnaryInfC{$\myseq{(a_1,w_1\simp \epsilon);$\fbox{$(\epsilon,a_2\simp w_1);(a_1,a_2\simp \epsilon)$}$;\cdots}
{a_1:A; a_2: B}{\epsilon: A}$}
\RightLabel{$A$}
\UnaryInfC{$\myseq{(\epsilon,\epsilon\simp \epsilon);(a_1,a_2\simp \epsilon)}{a_1:A; a_2: B}{\epsilon: A}$}
\RightLabel{$U$}
\UnaryInfC{$\myseq{(a_1,a_2\simp \epsilon)}{a_1:A; a_2: B}{\epsilon: A}$}
\RightLabel{$\top^* L$}
\UnaryInfC{$\myseq{(a_1,a_2\simp a_0)}{a_0: \top^*; a_1:A; a_2: B}{a_0: A}$}
\RightLabel{$\mand L$}
\UnaryInfC{$\myseq{~}{a_0: \top^*; a_0:A\mand B}{ a_0: A}$}
\RightLabel{$\land L$}
\UnaryInfC{$\myseq{~}{a_0: \top^*\land (A\mand B)}{a_0: A}$}
\RightLabel{$\limp R$}
\UnaryInfC{$\myseq{~}{~}{a_0: \top^*\land (A\mand B)\limp A}$}
\DisplayProof
\end{center}



\section{Implementation and Experiment}
\label{sec:experiment}

In this section we present an implementation, called Separata%
\footnote{Available at \url{http://users.cecs.anu.edu.au/~zhehou}.}%
, of a
semi-decision procedure for the logic $\psl_D$ (propositional abstract
separation logic with disjointness), and present experiments demonstrating its
efficacy. We focus on $\psl_D$ because, following Table~\ref{tab:models}, and
recalling that disjointness implies indivisible unit, it captures much of the
structure of the paradigmatic example of a separation
algebra, namely heaps. Heaps also satisfy the cross-split property, but we are
unaware of any theorems expressible in the language of abstract separation
logic that require
cross-split for proof (we are, however, aware of some such theorems when the
points-to connective $\mapsto$ is introduced). We note also that Separata may
handle various sublogics of $\psl_D$ got by variously omitting partiality,
cancellativity, or disjointness, or by weakening disjointness to indivisible unit.

Our implementation is based on the proof system $\lspslh'+D$ of
Figure~\ref{fig:lspslh_subs}, with three modifications. First,
we modify the rule $U$ to a rule $U'$ that creates the identity relational atom
$(x,\epsilon\simp x)$ only if $x$ occurs in the conclusion.
This does not reduce power:

\begin{lemma}
\label{lem:u_rule}
If $\myseq{\Gcal}{\Gamma}{\Delta}$ is derivable in $\lspslh'+D$, then it is derivable in the proof system with $U'$ replacing $U$.
\end{lemma}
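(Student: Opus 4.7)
The plan is to proceed by induction on the height of the derivation $\Pi$ of $\myseq{\Gcal}{\Gamma}{\Delta}$ in $\lspslh'+D$. The base cases (zero-premise rules such as $id$, $\bot L$, $\top R$, $\top^* R$, $NEq$) do not involve $U$, and so transfer unchanged. For the inductive step, whenever the last rule is not $U$, I apply the induction hypothesis to each premise derivation and re-apply the same rule.

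The only interesting case is when $\Pi$ ends in an application of $U$ with principal label $x$. If $x$ already occurs in the conclusion $\myseq{\Gcal}{\Gamma}{\Delta}$, the application is already a legal $U'$ application, and I finish by applying the induction hypothesis to the premise. Otherwise, $x$ does not occur in the conclusion, and the premise derivation proves $\myseq{\Gcal;(x,\epsilon\simp x)}{\Gamma}{\Delta}$ with height strictly less than $ht(\Pi)$. Here the key move is to rename $x$ to a label that already lives in the conclusion. Concretely, I would invoke the substitution lemma (which holds for $\lspslh'+D$ by essentially the same argument used for $\lsg$, since the explicit substitutions only simplify the bookkeeping), picking any label $y$ occurring in the conclusion and applying $[y/x]$ to the premise derivation. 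Because $x$ occurs nowhere in $\Gcal$, $\Gamma$, or $\Delta$, this produces a derivation of $\myseq{\Gcal;(y,\epsilon\simp y)}{\Gamma}{\Delta}$ of height at most $ht(\Pi)-1$. The induction hypothesis then converts this into a derivation in the modified system, and a single application of $U'$ with principal label $y$ (legal since $y$ occurs in the conclusion by construction) yields the desired derivation of $\myseq{\Gcal}{\Gamma}{\Delta}$.

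The main obstacle I anticipate is the degenerate corner where no label at all appears in the conclusion, so that no candidate $y$ is available. This can be handled by always allowing $y = \epsilon$ as the fallback: since $\epsilon$ is a label constant present in the rule schema of $U'$ itself, taking $x = \epsilon$ in $U'$ should be admitted. Under that convention the above argument goes through uniformly. A secondary bookkeeping point is that in order to appeal to the substitution lemma for the equality-free calculus, one must briefly verify that none of the rules in Figure~\ref{fig:lspslh_subs} obstruct the height-preserving substitution; this is routine, since every rule commutes with uniform label renaming as long as $y$ avoids the freshness side-conditions of $\mand L$, $\mimp R$, and $A$, which is ensured because $y$ is taken from the conclusion.
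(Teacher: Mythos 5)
The paper states Lemma~\ref{lem:u_rule} without proof (it is presented as an easy observation justifying the implementation-level change from $U$ to $U'$), so there is no official argument to compare against; judged on its own merits, your proof is essentially correct and is the natural argument: induct on derivation height, and in the one interesting case --- a $U$ application whose principal label $x$ is absent from the conclusion --- rename $x$ to a label $y$ of the conclusion via the height-preserving substitution lemma, which collapses the application to a legal $U'$ instance. Two small caveats. First, your justification for the substitution lemma in $\lspslh'+D$ is backwards: the paper explicitly notes that the equality-based calculi ($\lsg$) admit the \emph{simpler} substitution proof precisely because they lack explicit substitutions in the rules, whereas for substitution-style calculi like $\lspslh'+D$ the lemma requires the more careful composition-of-substitutions argument of the $\lsbbi$ development in the cited prior work (e.g.\ $\top^*L$, $E_1$, $C$, $P$, $D$ all carry explicit substitutions that must be commuted past the renaming); the lemma does hold, but not because the bookkeeping is ``simplified''. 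Second, your fallback convention of permitting $x=\epsilon$ in $U'$ quietly strengthens the rule as the paper defines it; it would be cleaner to observe that a derivable sequent with empty $\Gamma$, $\Delta$, and $\Gcal$ does not arise (and in the intended use, proof search from $\vdash w:F$, the conclusion always contains a label), so the degenerate corner is vacuous. Neither point undermines the argument.
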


Second, the rules $NEq$ and $EM$ are omitted, as they are not necessary for
completeness for $\psl_D$, nor its sublogics~\cite[Section 6.3]{houthesis}.

Third, although disjointness implies indivisible unit, we find that including
(the equality atom-free version of) the rule for indivisible unit can reduce the
number of labels in sequents and hence lead to a smaller search space and
better performance:
\begin{center}
  \AxiomC{$\myseq{\Gcal[\epsilon/x];(\epsilon,y\simp \epsilon)}{\Gamma[\epsilon/x]}{\Delta[\epsilon/x]}$}
  \RightLabel{IU}
  \UnaryInfC{$\myseq{\Gcal;(x,y\simp \epsilon)}{\Gamma}{\Delta}$}
  \DisplayProof
\end{center}

Our implementation is based on the following strategy when applying rules:
\begin{enumerate}
\item Try to close the branch by rules $id,\bot L,\top^* R,\top^* R$.
\item If (1) is not applicable, apply all possible $E_1,E_2,P,C,IU,D$
  rules to unify labels.
\item If (1-2) are not applicable, apply invertible rules $\land L$, $\land R$, $\limp L$, $\limp R$, $\mand L$, $\mimp R$, $\top^* L$ in all possible ways.
\item If (1-3) are not applicable, try $\mand R$ or $\mimp L$ by choosing existing relational atoms.
\item If (1-3) are not applicable, and (4) is not applicable because all combinations of $\mand R,\mimp L$ formulae and relational atoms are already applied, apply structural rules on the set $\Gcal_0$ of relational atoms in the sequent as follows.
\begin{enumerate}
\item Use $Com$ to generate all commutative variants of existing relational atoms in $\Gcal_0$, giving a set $\Gcal_1$.
\item Apply $A$ for each applicable pair in $\Gcal_1$, generating a set $\Gcal_2$. We do not apply $A$ to a pair $(u,y\simp x);(v,w\simp y)$ where
for any $z$ the pair $(z,w\simp x);(u,v\simp x)$ already exists in the sequent.
\item Use $U'$ to generate all identity relational atoms for each
  label in $\Gcal_2$, giving 
$\Gcal_3$.
\end{enumerate}
\item If (1-4) are not applicable, and (5) has been applied with result
$\Gcal_3=\Gcal_0$, then fail.
\end{enumerate}

Step (2) is terminating, because each substitution eliminates a label,
and we only have finitely many labels. It is also clear that step (5)
is terminating.

In the implementation, we view $\Gamma$ and $\Delta$ in a sequent $\myseq{\Gcal}{\Gamma}{\Delta}$
as lists,
and each time a logical rule is
applied, we place the subformulae at the front of the list. Thus our
proof search has a ``focusing flavour'', that always tries to
decompose the subformulae of a principal formula if possible. 
To guarantee completeness, each time we apply a $\mand R$ or $\mimp L$
rule, the principal formula is moved to the end of the list in the
corresponding premise, thus
each principal formula for these non-deterministic rules 
is
considered fairly, i.e., applied in turn.



\begin{table*}
\footnotesize
\centering
{
\begin{tabular}{|c|l|r|r|r|}
\hline
\vspace{-2mm}

& & & & \ \\
& Formula & BBeye  & $\fvlsbbi$ & \textbf{Separata} \\
   &     & (opt) & (heuristic) &  \\
\hline
\vspace{-2mm}
& & & & \ \\
(1) & $(a \mimp b) \land (\top \mand (\top^* \land a)) \limp b$ & 0.076 & 0.002 & 0.002\\
(2) & $(\top^* \mimp \lnot (\lnot a \mand \top^*)) \limp a$ & 0.080 & 0.004 & 0.002\\
(3) & $\lnot ((a \mimp \lnot (a \mand b)) \land ((\lnot a \mimp \lnot b) \land b))$ & 0.064 & 0.003 & 0.002\\
(4) & $\top^* \limp ((a \mimp (b \mimp c)) \mimp ((a \mand b) \mimp c))$ & 0.060 & 0.003 & 0.002\\
(5) & $\top^* \limp ((a \mand (b \mand c)) \mimp ((a \mand b) \mand c))$ & 0.071 & 0.002 & 0.004\\
(6) & $\top^* \limp ((a \mand ((b \mimp e) \mand c)) \mimp ((a \mand (b \mimp e)) \mand c))$ & 0.107 & 0.004 & 0.008\\
(7) & $\lnot ((a \mimp \lnot (\lnot (d \mimp \lnot (a \mand (c \mand b))) \mand a)) \land c \mand (d \land (a \mand b)))$ & 0.058 & 0.002 & 0.006\\
(8) & $\lnot ((c \mand (d \mand e)) \land B)$ where & 0.047 & 0.002 & 0.013\\
 & $ B := ((a \mimp \lnot (\lnot (b \mimp \lnot (d \mand (e \mand c)))
\mand a)) \mand (b \land (a \mand \top)))$ & & &\\
(9) & $\lnot ( C \mand (d \land (a \mand (b \mand e))))$ where & 94.230
& 0.003 & 0.053\\
 & $C := ((a \mimp \lnot (\lnot (d \mimp \lnot ((c \mand e) \mand (b
\mand a))) \mand a)) \land c)$ & & & \\
(10) & $(a \mand (b \mand (c \mand d))) \limp (d \mand (c \mand (b \mand a)))$ & 0.030 & 0.004 & 0.002\\
(11) & $(a \mand (b \mand (c \mand d))) \limp (d \mand (b \mand (c \mand a)))$ & 0.173 & 0.002 & 0.002\\
(12) & $(a \mand (b \mand (c \mand (d \mand e)))) \limp (e \mand (d \mand (a \mand (b \mand c))))$ & 1.810 & 0.003 & 0.002\\
(13) & $(a \mand (b \mand (c \mand (d \mand e)))) \limp (e \mand (b \mand (a \mand (c \mand d))))$ & 144.802 & 0.003 & 0.002\\
(14) & $\top^* \limp (a \mand ((b \mimp e) \mand (c \mand d)) \mimp ((a \mand d) \mand (c \mand (b \mimp e))))$ & 6.445 & 0.003 & 0.044\\
(15) & $\lnot(\top^* \land (a \land (b \mand \lnot(c \mimp (\top^* \limp a)))))$ & timeout & 0.003 & 0.003\\
(16) & $((D \limp (E \mimp (D \mand E))) \limp (b \mimp ((D \limp (E \mimp ((D \mand a) \mand a))) \mand b)))$, where  & 0.039 & 0.005 & 8.772\\
 & $D := \top^* \limp a$ and $E :=  a\mand a$ & & & \\
(17) & $((\top^* \limp (a \mimp (((a \mand (a \mimp b)) \mand \lnot b) \mimp (a \mand (a \mand ((a \mimp b) \mand \lnot b)))))) \limp$ & timeout & fail & 49.584\\
 & $((((\top^* \mand a) \mand (a \mand ((a \mimp b) \mand \lnot b))) \limp (((a \mand a) \mand (a \mimp b)) \mand \lnot b)) \mand \top^*))$ & & &\\ 
(18) & $(F\mand F)\limp F$, where $F := \lnot(\top \mimp \lnot \top^*)$ & invalid & invalid & 0.004\\
(19) & $(\top^* \land (a \mand b)) \limp a$ & invalid & invalid & 0.003\\
\hline
\end{tabular}
}
\caption{Experiment 1 results.}
\label{tab:experiment1}
\end{table*}

We incorporate a number of optimisations in proof search. (1)
Back-jumping~\cite{baader2003} is used to collect the ``unsatisfiable
core'' along each branch. When one premise of a binary rule has a
derivation, we try to derive the other premise only when the
unsatisfiable core is not included in that premise. 
(2) A search
strategy discussed by Park et al.~\cite{park2013} is also adopted. For
$\mand R$ and $\mimp L$ applications, we forbid the search to consider
applying the rule twice with the same pair of principal formula and
principal relational atom, since the effect is the same as
contraction, which is admissible. 
(3) Previous work on theorem proving for BBI has shown that associativity of $\mand$ is
a source of inefficiency in proof search~\cite{park2013,Hou:Labelled15}. We borrow
the idea of the heuristic method presented in~\cite{Hou:Labelled15} to
quickly solve certain associativity instances. When we detect
$z:A\mand B$ on the right hand side of a sequent, we try to search for
possible worlds (labels) for the subformulae of $A,B$ in the sequent,
and construct a binary tree using these labels. For example, if we can
find $x:A$ and $y:B$ in the sequent, we will take $x,y$ as the
children of $z$. When we can build such a binary tree of labels,
the corresponding relational atoms given by the binary tree will be
used (if they are in the sequent) as the prioritised ones when
decomposing $z:A\mand B$ and its subformulae. Without a
free-variable system, our handling of this heuristic method is just a
special case of the original one, but this approach can
speed up the search in certain cases.

The experiments in this paper are conducted on a Dell Optiplex 790 desktop with Intel CORE i7
2600 @ 3.4 GHz CPU and 8GB memory, running Ubuntu 13.04. The theorem provers are written in OCaml.

\paragraph{Experiment 1} We test our prover Separata for the logic $\psl_D$
on the formulae listed in Table~\ref{tab:experiment1}; the times displayed are in seconds and timeout is set as 1000s.
Where the formulae are valid in BBI, we compare
the results with two provers for BBI: the optimised implementation of
BBeye~\cite{park2013} and the incomplete heuristic-based $\fvlsbbi$~\cite{Hou:Labelled15}. We run BBeye in an iterative deepening way, and the time counted for BBeye is the total time it spends. Formulae (1-14) are used by Park et al. to test their prover BBeye for BBI~\cite{park2013}.
We can see that for formulae (1-14) the performance of Separata is comparable with the heuristic based prover for $\fvlsbbi$. Both provers are generally faster than BBeye. 
Formula (15) is one that BBeye had trouble with~\cite{Hou:Labelled15}, but Separata handles it trivially.
However, there are cases where BBeye is faster than Separata, for example, formula (16) found from a set of tests on randomly generated BBI theorems. 
Formula (17) is a converse example where a randomly
generated BBI theorem causes BBeye to time out and $\fvlsbbi$ with
heuristics to terminate within the timeout but without finding a proof
due to its incompleteness. 
Formula (18) is valid only when the monoid is partial, rather than merely
non-deterministic~\cite{wendling2010}, and formula (19) is the axiom of indivisible
unit. 

\paragraph{Experiment 2} 
Since $\psl_D$ is not finitely axiomatisable, we cannot enumerate all
theorems of this logic for testing purposes. In this experiment we instead use
randomly generated BBI theorems, which are a subset of all $\psl_D$
theorems, and so are provable in the calculus $\lspslh' +D$.
This test may not show the full power of our prover
Separata, but it allows comparison with existing provers. We generate
theorems via the Hilbert system for
BBI~\cite{GalmicheLarcheyWendling2006}, which consists of the axioms and
rules of classical logic plus those shown in Figure~\ref{fig:bbi_hilb}.

\begin{figure}[t!]
\begin{tabular}{ccc}
\textbf{Axioms} & \multicolumn{2}{c}{\textbf{Deduction Rules}}\\[5px]
\multirow{2}{*}{
\Axiom$\fCenter A \limp (\top^* \mand A)$
\alwaysNoLine
\UnaryInf$\fCenter(\top^* \mand A) \limp A$
\UnaryInf$\fCenter(A \mand B) \limp (B \mand A)$
\UnaryInf$\fCenter(A \mand (B \mand C)) \limp ((A \mand B) \mand C)$
\DisplayProof
}
&
\AxiomC{$\vdash A$}
\AxiomC{$\vdash A \limp B$}
\RightLabel{$MP$}
\BinaryInfC{$\vdash B$}
\DisplayProof
&
\AxiomC{$\vdash A \limp C$}
\AxiomC{$\vdash B \limp D$}
\RightLabel{$\mand$}
\BinaryInfC{$\vdash (A \mand B) \limp (C \mand D)$}
\DisplayProof\\[20px]
&
\AxiomC{$\vdash A \limp (B \mimp C)$}
\RightLabel{$\mimp 1$}
\UnaryInfC{$\vdash (A \mand B) \limp C$}
\DisplayProof
&
\AxiomC{$\vdash (A \mand B) \limp C$}
\RightLabel{$\mimp 2$}
\UnaryInfC{$\vdash A \limp (B \mimp C)$}
\DisplayProof
\end{tabular}\\[5px]
\caption{Some axioms and rules for the Hilbert system for BBI.}
\label{fig:bbi_hilb}
\end{figure}


We create random BBI theorems by first generating some random formulae
(not necessarily theorems) of length $n$, and globally substituting
these formulae at certain places in a randomly chosen BBI axiom
schema.
Then we use the deduction rules $\mimp 1$ and $\mimp 2$ in
Figure~\ref{fig:bbi_hilb}
to mutate the resultant formula at random places. The final theorem is
obtained by repeating the mutation for $i$ iterations.

The formulae generated from a fixed pair of $n$ and $i$ could have
different lengths since much randomness is involved in the generation,
but the average length grows as $n$ increases. BBI axioms do not
involve $\mimp$ at all, so the mutation step is vital to create
$\mimp$ in the theorems. In general, a higher $i$ value means that the
theorems are more structurally different from the BBI axioms. The
mutation step often makes formulae harder to prove.


\begin{table*}[t!]
\centering
{
\begin{tabular}{|c|c|c|c|c|c|c|}
\hline
Test & $n$ & $i$ & BBeye & BBeye & Separata & Separata\\
& & & proved & avg. time & proved & avg. time\\ 
\hline
1 & 10 & 20 & 74\% & 0.27s & 78\% & 0.19s \\
2 & 15 & 30 & 72\% & 4.63s & 66\% & 2.40s\\
3 & 20 & 30 & 61\% & 8.88s & 56\% & 8.76s\\
4 & 20 & 50 & 55\% & 12.39s & 53\% & 0.88s\\
5 & 30 & 50 & 49\% & 11.81s & 50\% & 6.26s\\
6 & 50 & 50 & 31\% & 11.82s & 31\% & 3.79s\\
\hline
\end{tabular}
}
\caption{Experiment 2 results.}
\label{tab:experiment3}
\end{table*}

We compare the performance of Separata with Park et al.'s BBI prover
BBeye against our randomly generated theorem suites in
Table~\ref{tab:experiment3}. The ``proved'' column for each prover
gives the percentage of successfully proved formulae within time out,
and ``avg. time'' column is the average time used when a formula is
proved. Timed out attempts are not counted in the average time. We
set the time out as 200 seconds. Formulae in test suite 1 have 20 to
30 binary connectives on average, while these numbers are around 100
to 150 for the formulae in test suite 6. Each test suit contains 100
BBI theorems. The two provers have similar successful rates in these
tests. Average time used on successful attempts increases for BBeye,
but fluctuates for Separata. In fact, both provers spent less than 1
second on most successful attempts (even for test suite 6), but there
were some ``difficult'' formulae that took them over 100 seconds to
prove. Therefore, we only include average time because median time
for both provers would be very small numbers that are difficult to
compare. In general there are fewer formulae that are ``difficult''
for Separata than for BBeye. But in test suites 2,3,4, Separata
proved fewer formulae in time. Lastly, we emphasise that Separata and
BBeye are designed for different logics, so comparing their
performance may not be fair; we do so only because there are no other
provers to compete with.


\section{Applications}
\label{sec:applications}

This section discusses applications that have been enabled by the
calculi of this paper. In particular, we discuss how
our calculi may be used in formal verification tasks, and how to extend
them with the widely-used points-to predicate for concrete models.

We have formalised the proof system $\lspslh'+D$ in
Isabelle/HOL~\cite{Separata-AFP}. As
an alternative to the OCaml implementation, we have developed
Isabelle tactics based on the proof search
techniques in this paper. The resultant proof method, also called
Separata, combines certain strengths of our calculus and Isabelle's
built-in proof methods. The Isabelle version of Separata serves as a
tool for machine-assisted formal verification, and it is compliant
with the existing separation algebra
library~\cite{Separation_Algebra-AFP}. We have also developed
advanced tactics for spatial connectives $\mand$ and $\mimp$ to
improve the performance and automation of the proof
method~\cite{hou2017}. As a result, our tactics can automatically
prove some lemmas in the seL4 development~\cite{Murray13} that
originally required several lines of manual proofs. H\'ou et al.~\cite{hou2017} also gives a case
study to show the potential of our tactics by proving
properties of the semantics of actions in local rely/guarantee
reasoning~\cite{Feng2009}.

Our unified framework for separation logic proof systems focuses on
abstract semantics, and it does not have the points-to predicate $\mapsto$,
which is essential to most real-life applications of separation logic. There are at
least two directions to use our framework
to solve this problem: (1) one can develop a proof system for
separation logic with a concrete semantics, such as the heap model,
by extending our labelled calculus with sound rules for the points-to
predicate; or (2)~ one can develop a new abstract separation logic
which extends this work with a ``points-to''-like predicate that is
defined in an abstract setting.

In separate work we have pursued the first direction to develop a
proof system for separation logic with the heap model~\cite{hou2015}.
The new labelled sequent calculus extends $\lspslh'+D$ with eight
inference rules for the points-to predicate in the heap model. Since
separation logic for the heap model is not recursively
enumerable~\cite{Calcagno2001,brochenin2012}, it is not possible to
obtain a sound, complete, and finite proof system for such a logic.
However, this incomplete proof system is still useful in the sense
that it can reason about data structures such as linked lists and
binary trees, and the eight rules for the points-to predicate are
powerful enough to be complete for the \emph{symbolic heap} fragment,
which is widely used in program verification. Furthermore, this proof
system can also prove many formulae that involve $\mimp$ and overlaid
data structures, such as formulae in the form of $(A \mand B) \land
(C \mand D)$ where $A$ and $C$ (similarly, $B$ and $D$) represent
overlaid resources. These cannot be expressed by symbolic heaps and
cannot be handled by most other proof methods.

In the other direction, we have developed a first-order abstract
separation logic which includes an ``abstract points-to''
predicate~\cite{hou2016}. The advantage of this abstract logic is
that it is recursively enumerable, and we have developed a sound and
complete proof system for it. Moreover, the abstract points-to
predicate can be equipped with various theories to approximate
concrete semantics of different flavours, such as Reynolds's
SL~\cite{reynolds2002}, Vafeiadis and Parkinson's
SL~\cite{vafeiadis2007}, Lee et al.'s SL~\cite{Lee2013}, and Thakur
et al.'s SL~\cite{Thakur2014}. Specifically, we can formulate the
properties of points-to in Reynolds's SL as a set $S$ of formulae
(theories) in the logic. When we prove a formula $F$ with such a
points-to predicate, we derive $S \vdash h:F$ in our proof system
where $h$ is an arbitrary world. If we want to prove $F$ in, e.g.,
Vafeiadis and Parkinson's SL, we merely need to modify the set $S$ to
some $S'$ that captures properties of points-to in the corresponding
semantics, and derive $S' \vdash h:F$. Again, it is impossible to
completely capture the points-to predicate in a concrete model, but
this abstract logic provides a language that is rich enough to
capture most of the inference rules in our work on the heap
model~\cite{hou2015}. The benefit of simulating various semantics via
theories is that we do not need to develop a new logic and prove
important properties for each kind of semantics, but merely need to
add or remove certain formulae from the set of theories.


\section{Related Work}
\label{sec:rel_work}

There are many more automated tools, formalisations, and logical
embeddings for separation logics than can reasonably be surveyed
within the scope of this paper. Almost all are not directly
comparable to this paper because (1) they focus on reasoning about
the specification language (Hoare triples) and they deal with
separation logic for some \emph{concrete}
semantics~\cite{berdine2005}; (2) they only consider a small subset
of the assertion language such as symbolic
heaps~\cite{Brotherston2014,Brotherston2016,Rowe2017}; or (3) they
focus on complexity and computability issues instead of automated
reasoning~\cite{Demri2016,Demri17}.

One exception to the above is Holfoot~\cite{Tuerk2009}, a HOL
mechanisation with support for automated reasoning about the `shape' of SL
specifications  -- exactly those aspects captured by abstract separation logic.
However, unlike Separata, Holfoot does not support magic wand. This is a common
restriction for automation of separation logic because magic wand is a source of
undecidability~\cite{brochenin2012}.
Conversely, the mechanisations and
embeddings that do incorporate magic wand tend to give little thought to (semi-)
decision procedures, for example,~\cite{simsthesis}. An exception to this are the tableaux
of~\cite{galmiche2010}, but their methods have not been implemented, and we
anticipate that such an implementation  would not be trivial.
Another partial exception to the trend to omit magic wand is
SmallfootRG~\cite{SmallfootRG}, which supports automation yet includes
\emph{septraction}~\cite{vafeiadis2007}, the De Morgan dual of magic wand.
However SmallfootRG does not support additive negation nor implication, and so
magic wand cannot be recovered; indeed in this setting septraction is mere syntactic
sugar that can be eliminated. We also note the
work~\cite{Schwerhoff:Lightweight,Blom:Witnessing} in which program proofs
involving magic wand can be automated, provided the code is annotated to assist
the prover. Recently, Reynolds et al.~\cite{Reynolds2016} gave a
decision procedure for quantifier-free heap model separation logic
which contains all the logical connectives in this work. They have
implemented an integrated subsolver in the DPLL-based SMT solver CVC4,
and their experiment has shown promising results. 
In contrast to this paper, which considers various abstract algebraic semantics and a unified proof theory for different models, Reynolds et al.'s work is focused on a concrete heap model semantics with the points-to predicate. 


Leaving out magic wand is not without cost, as the connective, while
surely less useful than $\mand$, has applications. A non-exhaustive list
follows: generating weakest preconditions via backwards
reasoning~\cite{IshtiaqOHearn01}; specifying
iterators~\cite{parkinson2005,neelakantan2006,haack09}; reasoning about
parallelism~\cite{Dodds2011}; and various applications of septraction, such as
the specification of iterators and buffers~\cite{Cherini09}. For a particularly
deeply developed example, see the correctness proof for the Schorr-Waite Graph
Marking Algorithm of~\cite{YangPhD}, which involves non-trivial inferences
involving magic wand (Lemmas 78 and 79). These examples provide ample motivation
to build proof calculi and tool support that include magic wand. Undecidability,
which in any case is pervasive in program proof, should not deter us from
seeking practically useful automation. 

This work builds upon earlier labelled sequent calculi for
BBI~\cite{Hou:Labelled15} and $\psl$~\cite{hou2013b}. The extensions
to previous work involve two main advances: first, a calculus framework
that handles any separation algebra property expressible in the general axiom
form; second, a new counter-model construction method that yields
completeness proofs for any calculus in our framework. The future work
section of the conference version of this paper~\cite{hou2013b} proposed some
putative rules extending this work to deal with Reynolds's heap semantics.
This proposal has since been developed in~\cite{hou2015}, in which we
presented the first theorem prover to handle all the logical connectives in a
separation logic with heap semantics. We believe that the theory
and techniques discussed in this paper can be extended to many
applications besides this model.

The link between BBI and separation logic is also emphasised as
motivation by Park et al~\cite{park2013}, whose BBI prover BBeye was used for
comparisons in Section~\ref{sec:experiment}.
Lee and Park~\cite{Lee2013} then extended~\cite{park2013}, 
independently to our own work, to a labelled sequent calculus for Reynolds's heap model 
with the restriction that all values are addresses.
Their paper~\cite{Lee2013} includes soundness and completeness
theorems, but unfortunately, our investigations~\cite{hou2015} showed that both claims are erroneous, and
they have since been retracted by the authors.

Also related, but so far not
implemented, are the tableaux for partial-deterministic (PD) BBI of Larchey-Wendling
and Galmiche~\cite{wendling2009,wendling2012}. In subsequent work, the
authors showed that
validity of partial-deterministic BBI-models coincides with validity of
cancellative partial-deterministic
BBI-models~\cite{larcheywendling2014}, which retrospectively shows
that their earlier work on PD-BBI was actually complete for $\psl$,
and that the rule $C$ of our system is admissible. We nevertheless
have presented $C$ in our system, partly to emphasise that we can
easily handle a range of structural properties in our system, and
partly because evidence in separation logic with concrete semantics,
and further
language constructors such as $\mapsto$ that may express properties of
the content of those semantics, show that the rule for cancellativity
may not always be admissible. For example, see the discussion
of~\cite{Gotsman:Precision}:
\begin{quote}
It is well-known that if $\mand$ is cancellative, then for a precise $q$ in $\mathcal{P}(\Sigma)$ and any $p_1,p_2$ in $\mathcal{P}(\Sigma)$, we have 
$(p_1\land p_2)\mand q = (p_1\mand q) \land (p_2\mand q)$
\end{quote}
where a predicate is ``precise'' if it ``unambiguously carves out an area of the heap''. It is not likely that our rule for cancellativity will remain admissible in
a semantics
in which such considerations are important.
It is not clear how Larchey-Wendling and Galmiche's tableaux method might be
extended to handle concrete semantics without an explicit rule for cancellativity. Moreover, their latest result does not include any treatment for non-deterministic BBI, nor for properties such as splittability and cross-split. In contrast, the relative ease
with which certain properties can be added or removed from labelled sequent
calculi is an important benefit of our approach.

Finally we note that the counter-model construction of this paper was necessary
to prove completeness because many of the properties we are interested in are
not BBI-axiomatisable, as proved by Brotherston and
Villard~\cite{brotherston2013}; that paper goes on to give a sound and complete
Hilbert axiomatisation of these properties by extending BBI  with techniques from hybrid logic.
Sequent calculus and proof search for this more powerful logic represents an
interesting future direction for research.


\paragraph{Acknowledgements} We thank the anonymous reviewers of this paper for their helpful comments.
Clouston was supported by a research grant (12386) from Villum Fonden.
Tiu and H\'ou received support from the National Research Foundation of Singapore under its National Cybersecurity R\&D Program (Award No. NRF2014NCR-NCR001-30) and administered by the National Cybersecurity R\&D Directorate.

\bibliographystyle{ACM-Reference-Format}
\bibliography{root}

\end{document}